\newcommand{\FBV}{\mathop{\bigvee}
}
\newcommand{\FBW}{\mathop{\bigwedge}
}
\title{A minimal coalition logic}
\author{
Yinfeng Li${}^{1,2}$ and Fengkui Ju${}^{3,4}$\footnote{Corresponding author} \vspace{5pt} \\
{\small {$^1$IRIT-CNRS, University of Toulouse, France}} \\
{\small {$^2$\href{mailto:yinfeng.li@irit.fr}{yinfeng.li@irit.fr}}} \vspace{2.5pt} \\
{\small {$^3$School of Philosophy, Beijing Normal University, China}} \\
{\small {$^4$\href{mailto:fengkui.ju@bnu.edu.cn}{fengkui.ju@bnu.edu.cn}}}
}
\date{}
\begin{document}

\maketitle

\setlength{\parskip}{0.5em}


\begin{abstract}

\noindent Coalition Logic is an important logic in logical studies of strategic reasoning, whose models are concurrent game models.
In this paper, first, we systematically discuss three assumptions of concurrent game models and argue that they are too strong.
The first is seriality; that is, every coalition always has an available joint action.
The second is the independence of agents; that is, the merge of two available joint actions of two disjoint coalitions is always an available joint action of the union of the two coalitions.
The third is determinism; that is, all available joint actions of the grand coalition always have a unique outcome.
Second, we present a coalition logic based on general concurrent game models that do not have the three assumptions and show its completeness.
This logic seems minimal for reasoning about coalitional powers.

\medskip

\noindent \textbf{Keywords:} Concurrent game models, seriality, independence of agents, determinism, general concurrent game models

\end{abstract}

\section{Introduction}

\subsubsection*{Coalition Logic $\FCL$}

Coalition Logic $\FCL$ is a logic for reasoning about coalitional powers, proposed by Pauly \cite{pauly_logic_2001,pauly_modal_2002}.
The language of $\FCL$ is a modal language with the featured operator $\Fclo{\FAA} \phi$, indicating \emph{some available joint action of the coalition $\FAA$ ensures $\phi$}.
Its models are \emph{concurrent game models}. Roughly, in a concurrent game model:
there are some states;
there are some agents who form coalitions;
at any state, every coalition has some available joint actions;
every joint action of a coalition has some possible outcome states.
The formula $\Fclo{\FAA} \phi$ is true at a state in a concurrent game model if $\FAA$ has an available joint action such that $\phi$ is true at every possible outcome state of the action. 

Coalition Logic $\FCL$ is an important logic for strategic reasoning. Many logics in this field are extensions of $\FCL$. For example, Alternating-time Temporal Logic $\FATL$ (\cite{alur_alternating-time_2002}) is a temporal extension of $\FCL$, and strategy Logic $\FSL$ \cite{mogavero_reasoning_2014} is an extension of $\FCL$, whose language has quantifiers for and names of strategies. We refer to \cite{benthem_models_2015}, \cite{agotnes_knowledge_2015} and \cite{sep-logic-power-games} for overviews of the area.

\subsubsection*{Our work}

Concurrent game models rest on several assumptions. In this paper, we consider three of them.
The first one is \emph{seriality}: coalitions always have available joint actions. The second one is \emph{the independence of agents}: the merge of two available joint actions of two disjoint coalitions is always an available joint action of the union of the two coalitions. The third one is \emph{determinism}: available joint actions of the grand coalition always have a unique outcome.

In this work, we first systematically discuss the three assumptions and argue that they are too strong.
Second, present a Minimal Coalition Logic $\FMCL$ based on \emph{general concurrent game models} which do not have the three assumptions, compare $\FMCL$ to $\FCL$ in detail, and show its completeness.

The logic $\FMCL$ is \emph{minimal} in the following sense: general concurrent game models cannot be more ``general'', given that they are for coalitional powers.

\subsubsection*{Related work}

There has been some work in the agency literature that drops some of the three assumptions.

Jiang and Naumov \cite{JIANG2022103727} presented a logic for reasoning about knowledge and strategies in
multi-agent systems, whose framework is in the tradition of coalition logic. In their models, Jiang and Naumov allowed an action profile to have no outcome state or more than one outcome state. They also allowed dead ends, that is, states where no action profile has an outcome state
\footnote{In \cite{JIANG2022103727}, Jiang and Naumov did not explicitly define available joint actions.
There are two ways to understand this. One is that there is no such notion of available joint actions. Then it makes no sense to say their models are serial/deterministic or not, as seriality and determinism are about available joint actions.
Another way is that they implicitly treated all joint actions, including those without any outcome state, as available. Then, their models are non-deterministic but are serial.
}. They did this to model termination of games.

The independence of agents commonly holds in STIT (see to it that) logics. Sergot \cite{sergot_examples_2014} found no convincing justification for this principle in the literature of STIT logics. He offered examples showing that the principle fails.
Based on an extension of Propositional Dynamic Logic with multi-agents, Royakkers and Hughes \cite{royakkers_blame_2020} formalized three notions of responsibilities. They showed that the independence of agents does not hold generally and argued that it is also inappropriate for one notion of responsibility, that is, accountability. They use models where the independence of agents fails. In addition, determinism does not hold in their models either. They allowed this for the reason that agents may perform actions with unknown outcomes.

Boudou and Lorini \cite{10.5555/3237383.3237443} introduced new semantics for a temporal STIT logic, utilizing non-deterministic concurrent game models. 
Sergot \cite{sergot_actual_2022} investigated how causal responsibility can be treated in STIT logics, where he explicitly argued that there are non-deterministic scenarios.
To deal with outcome uncertainty of agents' actions, some work, such as \cite{chen_probabilistic_2007} and \cite{naumov_strategic_2021}, has introduced probability to concurrent game models. In these work, available joint actions of the grand coalition at a state may have different outcome states with different probabilities.

\subsubsection*{Our brief reasons to drop the three assumptions}

Many \emph{artificial settings} (such as games) have terminal states, where agents do not have available actions. To model terminal states of these artificial settings, we need to give up seriality.

Agents' actions often interact with each other, and whether an agent can perform an action is often conditional on other agents' actions at the same time. In these situations, the independence of agents fails. To model agents' conditional abilities, we need to drop the assumption of agent independence.

There are many situations where some joint actions of all behaving agents have more than one outcome state.
Additionally, in many situations, we just consider some but not all agents who form the grand coalition.
In these situations, determinism does not hold.

\subsubsection*{Structures of the paper}

The rest of the paper is structured as follows.
In Section \ref{section:General settings of coalition logics}, we present some general settings of coalition logics.
In Section \ref{section:Coalition Logic CL}, we give $\FCL$ in detail.
In Section \ref{section:Three assumptions in concurrent game models are too strong}, we discuss the three assumptions and show that they do not hold in all situations.
In Section \ref{section:A minimal coalition logic MCL based on general concurrent game models}, we present $\FMCL$, argue that $\FMCL$ is a minimal coalition logic for reasoning about coalitional powers, and discuss whether $\FCL$ can simulate $\FMCL$. 
We show the completeness of $\FMCL$ in Section \ref{section:Completeness of MCL}.
We point out some further work in Section \ref{section:Further work}.

\section{General settings of coalition logics}
\label{section:General settings of coalition logics}

In this section, we introduce some general settings of coalition logics, including their language, models, and semantics. These general settings will ease things later.

\subsection{Language}

Let $\FAG$ be a nonempty finite set of \emph{agents} and $\FAP$ be a countable set of atomic propositions. Each (possibly empty) subset $\FAA$ of $\FAG$ is called a \Fdefs{coalition}. $\FAG$ is called the \Fdefs{grand coalition}.
In the sequel, for any $a \in \FAG$, we will often write $a$ instead of $\{a\}$, given that no confusion occurs.

\begin{definition}[The language $\Phi$]
\label{definition:The language Phi}
The language $\Phi$ is defined as follows, where $p$ ranges over $\FAP$ and $\FAA \subseteq \FAG$:
\[
\phi ::=\top \mid p \mid \neg \phi \mid (\phi \wedge \phi) \mid \Fclo{\FAA} \phi
\]
\end{definition}

The formula $\Fclo{\FAA} \phi$ indicates that \emph{some available joint action of $\FAA$ ensures $\phi$}. Here, the notation of $\Fclo{\cdot} \phi$ differs from its notation in the literature. We do this to indicate the iteration of quantifiers in its meaning, which will be seen below.

Here are some derivative expressions:
\begin{itemize}

\item 

The propositional connectives $\bot, \lor, \rightarrow$, and $\leftrightarrow$ are defined as usual.

\item 

Define the dual $\Fclod{\FAA} \phi$ of $\Fclo{\FAA} \phi$ as $\neg \Fclo{\FAA} \neg \phi$, indicating that \emph{every available joint action of $\FAA$ enables $\phi$}.

\item 

$\Box \phi$, defined as $\Fclo{\emptyset} \top \rightarrow \Fclo{\emptyset} \phi$, means that \emph{$\phi$ will necessarily be true}.

\item 

$\Diamond \phi$, defined as $\Fclo{\emptyset} \top \land \Fclod{\emptyset} \phi$, means that \emph{$\phi$ will possibly be true}.

\end{itemize}

This language is the language of $\FCL$ and $\FMCL$.

\subsection{Abstract multi-agent action models}

Let $\FAC$ be a nonempty set of \emph{actions}.
\begin{itemize}

\item

For every coalition $\FAA$, a function $\ja{\FAA}:\FAA \to \FAC$ is called a \Fdefs{joint action} of $\FAA$. Note, especially, that $\emptyset$ is the only joint action of the empty coalition. A joint action of $\FAG$ is called an \Fdefs{action profile}.

For every $\FAA \subseteq \FAG$, we define $\FJA_\FAA = \{\sigma_\FAA \mid \sigma_\FAA: \FAA \rightarrow \FAC\}$, which is the set of joint actions of $\FAA$.
Note $\FJA_\emptyset = \{\emptyset\}$.
%

As joint actions are treated as unary functions, which are sets of binary tuples, set-theoretical notions such as union and intersection apply to them.

In the sequel, we sometimes use sequences of actions to indicate joint actions of coalitions, where an implicit order among agents is supposed.

\item

Let $\FAA$ be a coalition, and $\FBB \subseteq \FAA$. Let $\sigma_\FAA$ be a joint action of $\FAA$. We use $\sigma_\FAA |_\FBB$ to denote the subset of $\sigma_\FAA$ that is a joint action of $\FBB$, called the \Fdefs{restriction} of $\sigma_\FAA$ to $\FBB$. Respectively, $\sigma_\FAA$ is called an \Fdefs{extension} of $\sigma_\FAA|_\FBB$.
Given a set of joint actions $\Sigma_\FAA$ of $\FAA$, we define $\Sigma_\FAA |_\FBB := \{\sigma_\FAA|_\FBB \mid \sigma_\FAA \in \Sigma_\FAA\}$.

\end{itemize}

\begin{definition}[Abstract multi-agent action models]
\label{definition:Concurrent game models}
An \Fdefs{abstract multi-agent action model} is a tuple $\MM = (\FST, \FAC, \{\Fav_\FAA \mid \FAA \subseteq \FAG\}, \{\Fout_\FAA \mid \FAA \subseteq \FAG\}, \Flab)$, where:
\begin{itemize}

\item

$\FST$ is a nonempty set of states.

\item

$\FAC$ is a nonempty set of actions.

\item

for every $\FAA \subseteq \FAG$, $\Fav_\FAA: \FST \rightarrow \mathcal{P}(\FJA_\FAA)$ is an \Fdefs{availability function} for $\FAA$.

\emph{Here, $\Fav_\FAA (s)$ is the set of all available joint actions of $\FAA$ at $s$.}

\item

for every $\FAA \subseteq \FAG$, $\Fout_\FAA: \FST \times \FJA_\FAA \rightarrow \mathcal{P}(\FST)$ is an \Fdefs{outcome function} for $\FAA$.

\emph{Here, $\Fout_\FAA (s, \sigma_\FAA)$ is the set of outcome states of $\FAA$ performing $\sigma_\FAA$ at $s$.}

\item

$\Flab: \FST \rightarrow \mathcal{P}(\FAP)$ is a \Fdefs{labeling function}.

\end{itemize}

\end{definition}

To be used to represent real scenarios, abstract multi-agent action models should satisfy some constraints, which is why they are called \emph{abstract}.
Later, models of $\FCL$ and $\FMCL$ will be defined as abstract multi-agent action models meeting certain conditions.

\paragraph{Remarks}

As we can see, the language of coalition logics does not talk about unavailable joint actions. As a result, how unavailable joint actions are treated does not matter technically.

We understand outcome states as follows: a state $t$ is an outcome state of a joint action $\sigma_\FAA$ of a coalition $\FAA$ at a state $s$ if and only if $t$ is a \emph{possible next-moment state} where $\sigma_\FAA$ has just been done by $\FAA$.

We understand available joint actions as follows: a joint action $\sigma_\FAA$ of a coalition $\FAA$ is available at a state $s$ if and only if $\sigma_\FAA$ has an outcome state at $s$.
We say that a joint action of a coalition is \Fdefs{conditionally available} if its availability depends on the simultaneous actions of other agents.

Note that availability is in an \emph{ontic} sense: it might be the case that some joint action is available for a coalition, but the coalition does not know it, or the coalition does not want to perform it, or it is illegal.

\subsection{Semantics}

\begin{definition}[Semantics of $\Phi$]
\label{definition:Semantics of Phi}
~

\begin{tabular}{lll}
$\MM, s \Vdash \top$ & & \\
$\MM, s \Vdash p$ & $\Leftrightarrow$ & \parbox[t]{28em}{$p \in \Flab (s)$} \\
$\MM, s \Vdash \neg \phi$ & $\Leftrightarrow$ & \parbox[t]{28em}{not $\MM, s \Vdash \phi$} \\
$\MM, s \Vdash \phi \land \psi$ & $\Leftrightarrow$ & \parbox[t]{28em}{$\MM, s \Vdash \phi$ and $\MM, s \Vdash \psi$} \\
$\MM, s \Vdash \Fclo{\FAA} \phi$ & $\Leftrightarrow$ & \parbox[t]{28em}{there is $\sigma_\FAA \in \Fav_\FAA (s)$ such that for all $t \in \Fout_\FAA (s, \sigma_\FAA)$, $\MM, t \Vdash \phi$}
\end{tabular}

\end{definition}

It can be verified:

\medskip

\begin{tabular}{lll}
$\MM, s \Vdash \Fclod{\FAA} \phi$ & $\Leftrightarrow$ & \parbox[t]{28em}{for all $\sigma_\FAA \in \Fav_\FAA (s)$, there is $t \in \Fout_\FAA (s, \sigma_\FAA)$ such that $\MM, t \Vdash \phi$} \\
$\MM, s \Vdash \Box \phi$ & $\Leftrightarrow$ & \parbox[t]{28em}{if $\emptyset \in \Fav_\emptyset (s)$, then for all $t \in \Fout_\emptyset (s, \emptyset)$, $\MM, t \Vdash \phi$} \\
$\MM, s \Vdash \Diamond \phi$ & $\Leftrightarrow$ & \parbox[t]{28em}{$\emptyset \in \Fav_\emptyset (s)$ and there is $t \in \Fout_\emptyset (s, \emptyset)$ such that $\MM, t \Vdash \phi$}
\end{tabular}

\medskip

This semantics is the semantics of $\FCL$ and $\FMCL$.

Let $(\MM,s)$ be a pointed abstract multi-agent action model, $\sigma_\FAA$ be an available joint action of a coalition $\FAA$ at $s$, and $\phi$ be a formula. We use $\sigma_\FAA \leadto_{(\MM,s)} \phi$ to indicate that $\sigma_\FAA$ \emph{ensures} $\phi$ at $(\MM,s)$, that is, for all $t \in \Fout_\FAA (s,\sigma_\FAA), \MM,s \Vdash \phi$.

\section{Coalition Logic $\FCL$}
\label{section:Coalition Logic CL}

In this section, we present models and axiomatization of $\FCL$ and make some discussion of them.

\subsection{Concurrent game models}

The definition of concurrent game models given below is slightly different from but equivalent to the definition given in \cite{pauly_modal_2002}.

We first introduce some auxiliary notions and notation.

Let $\FAA$ and $\FBB$ be two disjoint coalitions, $\Sigma_\FAA$ be a set of joint actions of $\FAA$, and $\Sigma_\FBB$ be a set of joint actions of $\FBB$. We define $\Sigma_\FAA \oplus \Sigma_\FBB$ as $\{\sigma_\FAA \cup \sigma_\FBB \mid \sigma_\FAA \in \Sigma_\FAA \mbox { and } \sigma_\FBB \in \Sigma_\FBB\}$, which is a set of joint actions of $\FAA \cup \FBB$.

Let $\{\FAA_i \mid i \in I\}$ be a family of pairwise disjoint coalitions for some (possibly empty) index set $I$, and $\Lambda = \{\Sigma_{\FAA_i} \mid i \in I \text{ and } \Sigma_{\FAA_i} \text{is a set of joint actions of } \FAA_i \}$.

A function $f: I \rightarrow \bigcup \Lambda$ is a \Fdefs{choice function} of $\Lambda$ if for every $i \in I$, $f(i) \in \Sigma_{\FAA_i}$. Note that if $I = \emptyset$, then $f = \emptyset$.

Define:
\[
\bigoplus \Lambda = \{\sigma: \bigcup_{i \in I} \FAA_i \to \FAC \mid \text{there is a choice function $f$ of } \Lambda \text{ such that } \sigma = \bigcup_{i \in I} f(i) \},
\]
which is a set of joint actions of $\bigcup_{i \in I}\FAA_i$.

Note $\bigoplus \emptyset = \{\emptyset\}$. 

Here is an example of $\bigoplus$.
Let $\Sigma_a = \{\sigma_a^1, \sigma_a^2\}$, $\Sigma_b = \{\sigma_b^1, \sigma_b^2\}$ and $\Sigma_c = \{\sigma_c^1\}$. Then $\bigoplus \{\Sigma_a, \Sigma_b, \Sigma_c\}$ consists of the following elements:
\begin{itemize}

\item 

$\sigma_{\{a,b,c\}}^1 = \bigcup \{\sigma_a^1, \sigma_b^1, \sigma_c^1\} = \sigma_a^1 \cup \sigma_b^1 \cup \sigma_c^1$

\item 

$\sigma_{\{a,b,c\}}^2 = \bigcup \{\sigma_a^1, \sigma_b^2, \sigma_c^1\} = \sigma_a^1 \cup \sigma_b^2 \cup \sigma_c^1$

\item 

$\sigma_{\{a,b,c\}}^3 = \bigcup \{\sigma_a^2, \sigma_b^1, \sigma_c^1\} = \sigma_a^2 \cup \sigma_b^1 \cup \sigma_c^1$

\item 

$\sigma_{\{a,b,c\}}^4 = \bigcup \{\sigma_a^2, \sigma_b^2, \sigma_c^1\} = \sigma_a^2 \cup \sigma_b^2 \cup \sigma_c^1$

\end{itemize}

\begin{definition}[Concurrent game models]
\label{definition:Concurrent game models}
An abstract multi-agent action model $\MM = (\FST, \FAC,$ $\{\Fav_\FAA \mid \FAA \subseteq \FAG\}, \{\Fout_\FAA \mid \FAA \subseteq \FAG\}, \Flab)$ is a \Fdefs{concurrent game model} if:
\begin{enumerate}[label=(\arabic*),leftmargin=3.33em]

\item 

for every $a \in \FAG$ and $s \in \FST$, $\Fav_a (s)$ is not empty.

\emph{This constraint indicates that concurrent game models are serial: every coalition always has an available joint action.}

\item 

for every $\FAA \subseteq \FAG$ and $s \in \FST$, $\Fav_\FAA (s) = \bigoplus \{\Fav_{a} (s) \mid a \in \FAA\}$.

\emph{Note $\Fav_\emptyset (s) = \{\emptyset\}$. Intuitively, $\Fav_\FAA$ is determined by all $\Fav_{a}$, where $a$ is in $\FAA$.
}

\item 

for every $s \in \FST$ and $\sigma_\FAG \in \FJA_\FAG$, if $\sigma_\FAG \in \Fav_\FAG (s)$, then $\Fout_\FAG (s, \sigma_\FAG)$ is a singleton, otherwise $\Fout_\FAG (s, \sigma_\FAG) = \emptyset$.

\emph{This constraint indicates that concurrent game models are deterministic: every available joint action of the grand coalition has a unique outcome.}

\item

for every $\FAA \subseteq \FAG$, $s \in \FST$ and $\sigma_\FAA \in \FJA_\FAA$, $\Fout_\FAA (s, \sigma_\FAA) = \bigcup \{\Fout_\FAG (s, \sigma_\FAG) \mid \sigma_\FAG \in \FJA_\FAG \text{ and}$ $\sigma_\FAA \subseteq \sigma_\FAG\}$.

\emph{Intuitively, $\Fout_\FAA$ is determined by $\Fout_\FAG$.}
\emph{Note $\Fout_\emptyset (s,\emptyset) = \bigcup \{\Fout_\FAG (s, \sigma_\FAG) \mid \sigma_\FAG \in \FJA_\FAG\}$, which is the set of all successors of $s$.}

\end{enumerate}

\end{definition}


The following fact indicates that availability functions in concurrent game models coincide with our understanding of available joint actions given above.

\begin{fact}
Let $\MM = (\FST, \FAC, \{\Fav_\FAA \mid \FAA \subseteq \FAG\}, \{\Fout_\FAA \mid \FAA \subseteq \FAG\}, \Flab)$ be a concurrent game model.
Then, for every $\FAA \subseteq \FAG$ and $s \in \FST$, $\Fav_\FAA (s) = \{\sigma_\FAA \in \FJA_\FAA \mid \Fout_\FAA (s, \sigma_\FAA) \neq \emptyset\}$.
\end{fact}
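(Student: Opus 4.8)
The plan is to prove the two set inclusions separately, leaning on the four constraints of concurrent game models. The whole argument turns on a single translation: constraint (4) reduces any statement about $\Fout_\FAA(s,\sigma_\FAA)$ to statements about $\Fout_\FAG(s,\sigma_\FAG)$ for the profiles $\sigma_\FAG$ extending $\sigma_\FAA$, and constraint (3) turns each such statement into the question of whether $\sigma_\FAG$ is available at $s$. So both directions ultimately become bookkeeping about extending and restricting joint actions via the $\bigoplus$ operation of constraint (2).

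For the inclusion $\Fav_\FAA(s) \subseteq \{\sigma_\FAA \in \FJA_\FAA \mid \Fout_\FAA(s,\sigma_\FAA) \neq \emptyset\}$, I would take $\sigma_\FAA \in \Fav_\FAA(s)$ and produce a profile $\sigma_\FAG \supseteq \sigma_\FAA$ that is available at $s$. By constraint (2), $\sigma_\FAA = \bigcup_{a \in \FAA} g(a)$ for some choice of $g(a) \in \Fav_a(s)$; by seriality (constraint (1)) every $\Fav_a(s)$ with $a \in \FAG \setminus \FAA$ is nonempty, so I can pick $g(a) \in \Fav_a(s)$ for the remaining agents too. The union $\sigma_\FAG = \bigcup_{a \in \FAG} g(a)$ then lies in $\bigoplus\{\Fav_a(s) \mid a \in \FAG\} = \Fav_\FAG(s)$ and extends $\sigma_\FAA$. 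By constraint (3), $\Fout_\FAG(s,\sigma_\FAG)$ is a singleton, hence nonempty, and by constraint (4) it is contained in $\Fout_\FAA(s,\sigma_\FAA)$, so the latter is nonempty.

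For the reverse inclusion I would start from $\sigma_\FAA$ with $\Fout_\FAA(s,\sigma_\FAA) \neq \emptyset$. By constraint (4) there is a profile $\sigma_\FAG \supseteq \sigma_\FAA$ with $\Fout_\FAG(s,\sigma_\FAG) \neq \emptyset$; the contrapositive of the ``otherwise'' clause in constraint (3) then forces $\sigma_\FAG \in \Fav_\FAG(s)$. Writing $\sigma_\FAG = \bigcup_{a \in \FAG} g(a)$ with $g(a) \in \Fav_a(s)$ (constraint (2) applied to $\FAG$), and using that $\sigma_\FAA \subseteq \sigma_\FAG$ are both functions so that $\sigma_\FAA = \sigma_\FAG|_\FAA = \bigcup_{a \in \FAA} g(a)$, I conclude $\sigma_\FAA \in \bigoplus\{\Fav_a(s) \mid a \in \FAA\} = \Fav_\FAA(s)$. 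Both directions specialize cleanly to $\FAA = \emptyset$ and $\FAA = \FAG$, so no separate cases are needed; in particular seriality guarantees $\Fav_\FAG(s) \neq \emptyset$, which is exactly what makes the empty-coalition instance go through.

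The main obstacle is not conceptual but the careful handling of joint actions as functions-qua-sets: I must make sure that extending a choice function from $\FAA$ to $\FAG$ genuinely yields an element of $\Fav_\FAG(s)$ under the $\bigoplus$ definition, and, in the reverse direction, that $\sigma_\FAA \subseteq \sigma_\FAG$ really pins down $\sigma_\FAA$ as exactly $\bigcup_{a \in \FAA} g(a)$ rather than merely a subset of it. Getting these restriction and extension identities right is the only place where the set-theoretic treatment of joint actions does real work.
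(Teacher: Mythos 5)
Your proposal is correct and follows essentially the same route as the paper's proof: decompose via constraint (2), extend to a full profile using seriality, and shuttle between $\Fout_\FAA$ and $\Fout_\FAG$ via constraints (3) and (4). The only difference is presentational — you handle $\FAA=\emptyset$ and $\FAA=\FAG$ uniformly where the paper splits into explicit cases — and your uniform treatment does go through.
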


\begin{proof}
~

Let $\FAA \subseteq \FAG$ and $s \in \FST$.

Assume $\FAA = \emptyset$.

Note $\Fav_\emptyset (s) = \{\emptyset\} = \FJA_\emptyset$. It suffices to show $\Fout_\emptyset (s,\emptyset) \neq \emptyset$. Note $\Fav_\FAG (s) \neq \emptyset$. Let $\sigma_\FAG \in \Fav_\FAG (s)$. Then $\Fout_\FAG (s, \sigma_\FAG) \neq \emptyset$.
Note $\Fout_\emptyset (s,\emptyset) = \bigcup \{\Fout_\FAG (s, \sigma_\FAG) \mid \sigma_\FAG \in \FJA_\FAG\}$. Then $\Fout_\emptyset (s,\emptyset) \neq \emptyset$.

Assume $\FAA = \{a_1, \dots, a_n\}$.

Let $\sigma_\FAA \in \Fav_\FAA (s)$. Then there is $\sigma_{a_1} \in \Fav_{a_1} (s)$, \dots, $\sigma_{a_n} \in \Fav_{a_n} (s)$ such that $\sigma_\FAA = \sigma_{a_1} \cup \dots \cup \sigma_{a_n}$.
Assume $\FAA = \FAG$. Then $\Fout_\FAA (s,\sigma_\FAA) \neq \emptyset$.
Assume $\FAA \neq \FAG$. Let $\FAG - \FAA = \{b_1, \dots, b_m\}$. Let $\sigma_{b_1} \in \Fav_{b_1} (s)$, \dots, $\sigma_{b_m} \in \Fav_{b_m} (s)$. Let $\sigma_\FAG = \sigma_{a_1} \cup \dots \cup \sigma_{a_n} \cup \sigma_{b_1} \cup \dots \cup \sigma_{b_m}$. Then $\sigma_\FAG \in \Fav_\FAG (s)$. Then $\Fout_\FAG (s,\sigma_\FAG) \neq \emptyset$.
Note $\sigma_\FAA \subseteq \sigma_\FAG$. Then $\Fout_\FAA (s, \sigma_\FAA) \neq \emptyset$.

Assume $\sigma_\FAA \in \{\sigma_\FAA \in \FJA_\FAA \mid \Fout_\FAA (s, \sigma_\FAA) \neq \emptyset\}$. Then $\Fout_\FAA (s, \sigma_\FAA) \neq \emptyset$. Then there is $\sigma_\FAG \in \FJA_\FAG$ such that $\sigma_\FAA \subseteq \sigma_\FAG$ and $\Fout_\FAG (s, \sigma_\FAG) \neq \emptyset$. Then $\sigma_\FAG \in \Fav_\FAG (s)$.
If $\FAA = \FAG$, we are done.
Assume $\FAA \neq \FAG$. Let $\FAG - \FAA = \{b_1, \dots, b_m\}$. Then there is $\sigma_{a_1} \in \Fav_{a_1} (s), \dots, \sigma_{a_n} \in \Fav_{a_n} (s), \sigma_{b_1} \in \Fav_{b_1} (s), \dots, \sigma_{b_m} \in \Fav_{b_m} (s)$ such that $\sigma_\FAG = \sigma_{a_1} \cup \dots \cup \sigma_{a_n} \cup \sigma_{b_1} \cup \dots \cup \sigma_{b_m}$. Then $\sigma_\FAA = \sigma_{a_1} \cup \dots \cup \sigma_{a_n}$. Then $\sigma_\FAA \in \Fav_\FAA (s)$.

\end{proof}

The following fact gives two equivalent sets of constraints on availability functions in concurrent game models, from which we can see that available joint actions in concurrent game models are \emph{unconditional}.

\begin{fact}
\label{fact:Characterization of available joint action functions of concurrent game models}
Let $\MM = (\FST, \FAC, \{\Fav_\FAA \mid \FAA \subseteq \FAG\}, \{\Fout_\FAA \mid \FAA \subseteq \FAG\}, \Flab)$ be an abstract multi-agent action model.
Then, the following three conditions are equivalent:
\begin{enumerate}[label=(\arabic*),leftmargin=3.33em]

\item

for every $a \in \FAG$, $\FAA \subseteq \FAG$ and $s \in \FST$:
\begin{enumerate}

\item 

$\Fav_a (s)$ is not empty;

\item 

$\Fav_\FAA (s) = \bigoplus \{\Fav_a (s) \mid a \in \FAA\}$.

\end{enumerate}

\item

for every $\FAA, \FBB \subseteq \FAG$ such that $\FAA \cap \FBB = \emptyset$, and $s \in \FST$:
\begin{enumerate}

\item 

$\Fav_\FAA (s)$ is nonempty;

\item 

for every $\sigma_{\FAA \cup \FBB} \in \Fav_{\FAA \cup \FBB} (s)$, $\sigma_{\FAA \cup \FBB}|_\FAA \in \Fav_\FAA (s)$;

\item 

for every $\sigma_\FAA \in \Fav_\FAA (s)$ and $\sigma_\FBB \in \Fav_\FBB (s)$, $\sigma_\FAA \cup \sigma_\FBB \in \Fav_{\FAA \cup \FBB} (s)$.

\emph{This condition indicates the independence of agents: the merge of two available joint actions of two disjoint coalitions is always an available joint action of the union of the two coalitions.
}

\end{enumerate}

\item

For every $\FAA \subseteq \FAG$ and $s \in \FST$:
\begin{enumerate}

\item 

$\Fav_\FAA (s)$ is nonempty;

\item 

for every $\sigma_\FAG \in \Fav_\FAG (s)$, $\sigma_\FAG|_\FAA \in \Fav_\FAA (s)$;

\item 

for every $\sigma_\FAA \in \Fav_\FAA (s)$ and $\sigma_\FAAb \in \Fav_{\FAAb} (s)$, $\sigma_\FAA \cup \sigma_\FAAb \in \Fav_\FAG (s)$.

\end{enumerate}

\end{enumerate}

\end{fact}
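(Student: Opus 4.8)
The plan is to prove the three conditions equivalent by establishing the cyclic chain $(1) \Rightarrow (2) \Rightarrow (3) \Rightarrow (1)$. Everything is stated pointwise in a state, so I would fix $s \in \FST$ once and for all; no interaction between distinct states is involved. Of the three links, $(2) \Rightarrow (3)$ is a pure instantiation and $(1) \Rightarrow (2)$ is bookkeeping with the definition of $\bigoplus$; the genuinely substantial step will be $(3) \Rightarrow (1)$, where the product description of availability must be reconstructed from conditions that only ever connect a coalition to the grand coalition.

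For $(1) \Rightarrow (2)$ I would unfold (1b), which says that the members of $\Fav_\FAA (s)$ are exactly the unions $\bigcup_{a \in \FAA} f(a)$ arising from choice functions $f$ with $f(a) \in \Fav_a (s)$. Nonemptiness (2a) then follows because each factor $\Fav_a (s)$ is nonempty by (1a), so a choice function exists (with $\bigoplus \emptyset = \{\emptyset\}$ covering $\FAA = \emptyset$). For the restriction property (2b), writing $\sigma_{\FAA \cup \FBB} = \bigcup_{a \in \FAA \cup \FBB} f(a)$ shows its restriction to $\FAA$ is $\bigcup_{a \in \FAA} f(a)$, a member of $\bigoplus \{\Fav_a (s) \mid a \in \FAA\} = \Fav_\FAA (s)$. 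For the merge property (2c), since $\FAA \cap \FBB = \emptyset$ I can glue the choice functions underlying $\sigma_\FAA$ and $\sigma_\FBB$ into one choice function on $\FAA \cup \FBB$, whose associated union is exactly $\sigma_\FAA \cup \sigma_\FBB$, hence lies in $\Fav_{\FAA \cup \FBB} (s)$. The implication $(2) \Rightarrow (3)$ is then immediate: instantiating the disjoint pair of (2) as $\FBB := \FAAb$ gives $\FAA \cup \FBB = \FAG$, and clauses (2a), (2b), (2c) become verbatim the clauses (3a), (3b), (3c).

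The heart of the argument is $(3) \Rightarrow (1)$. Clause (1a) is immediate from (3a), so the work is in proving both inclusions of (1b). The inclusion $\Fav_\FAA (s) \subseteq \bigoplus \{\Fav_a (s) \mid a \in \FAA\}$ uses an \emph{extend-then-restrict} pattern: given $\sigma_\FAA \in \Fav_\FAA (s)$, pick any $\sigma_{\FAAb} \in \Fav_{\FAAb} (s)$ (nonempty by (3a)), form the profile $\sigma_\FAG = \sigma_\FAA \cup \sigma_{\FAAb} \in \Fav_\FAG (s)$ using (3c), and for each $a \in \FAA$ apply (3b) with the singleton coalition $\{a\}$ to obtain $\sigma_\FAA|_{\{a\}} = \sigma_\FAG|_{\{a\}} \in \Fav_a (s)$; since $\sigma_\FAA$ is the union of its singleton restrictions, it lies in the product.

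The reverse inclusion is the main obstacle, because (3) provides no direct way to merge individual actions into anything below the grand coalition. My plan is to route through $\FAG$ via a one-agent substitution lemma: if $\sigma_\FAG \in \Fav_\FAG (s)$ and $h \in \Fav_a (s)$ for a single agent $a$, then the profile obtained from $\sigma_\FAG$ by overwriting its value at $a$ with $h$ is again in $\Fav_\FAG (s)$ — this follows by restricting $\sigma_\FAG$ to $\FAG - \{a\}$ via (3b) and recombining with $h$ via (3c), taking $\FAA = \FAG - \{a\}$ and $\FAAb = \{a\}$. Starting from any profile in $\Fav_\FAG (s)$ (one exists by (3a)) and substituting agents one at a time, after finitely many steps I reach an arbitrary product profile $\bigcup_{a \in \FAG} h(a)$ with $h(a) \in \Fav_a (s)$, which proves $\bigoplus \{\Fav_a (s) \mid a \in \FAG\} \subseteq \Fav_\FAG (s)$. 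Finally, for arbitrary $\FAA$, I extend a given member of $\bigoplus \{\Fav_a (s) \mid a \in \FAA\}$ to a full product profile, which lies in $\Fav_\FAG (s)$ by what was just shown, and restrict it back via (3b) to conclude it lies in $\Fav_\FAA (s)$. The degenerate cases with an empty coalition I would dispatch separately using $\Fav_\emptyset (s) = \{\emptyset\} = \bigoplus \emptyset$, which (3a) forces.
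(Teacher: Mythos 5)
Your proposal is correct and follows essentially the same route as the paper: the cyclic chain $(1) \Rightarrow (2) \Rightarrow (3) \Rightarrow (1)$, with $(2) \Rightarrow (3)$ as instantiation and the reverse inclusion of (1b) in $(3) \Rightarrow (1)$ obtained by iterated one-agent substitution into a grand-coalition profile via (3b) and (3c). Your packaging of that iteration as an explicit substitution lemma, followed by extend-and-restrict for arbitrary $\FAA$, is only a cosmetic reorganization of the paper's inline argument.
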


\begin{proof}
~

(1) $\Rightarrow$ (2)

Assume (1). Let $\FAA, \FBB \subseteq \FAG$ such that $\FAA \cap \FBB = \emptyset$, and $s \in \FST$.

First, we show (2a).
Assume $\FAA = \emptyset$. As mentioned, $\Fav_\emptyset (s) = \{\emptyset\}$.
Assume $\FAA \neq \emptyset$. Let $\FAA = \{a_1, \dots, a_n\}$. Let $\sigma_{a_1} \in \Fav_{a_1} (s)$, \dots, $\sigma_{a_n} \in \Fav_{a_n} (s)$. Let $\sigma_\FAA = \sigma_{a_1} \cup \dots \cup \sigma_{a_n}$. Then $\sigma_\FAA \in \Fav_\FAA (s)$.

Second, we show (2b). 
Let $\FAja{\FAA \cup \FBB} \in \Fav_{\FAA \cup \FBB} (s)$. We want to show $\sigma_{\FAA \cup \FBB}|_\FAA \in \Fav_\FAA (s)$.

Assume $\FAA = \emptyset$. It is easy to check $\sigma_{\FAA \cup \FBB}|_\FAA \in \Fav_\FAA (s)$.

Assume $\FBB = \emptyset$. It is easy to check $\sigma_{\FAA \cup \FBB}|_\FAA \in \Fav_\FAA (s)$.

Assume $\FAA \neq \emptyset$ and $\FBB \neq \emptyset$. Let $\FAA = \{a_1, \dots, a_n\}$ and $\FBB = \{b_1, \dots, b_m\}$.
By (1b), there is $\sigma_{a_1} \in \Fav_{a_1} (s)$, \dots, $\sigma_{a_n} \in \Fav_{a_n} (s)$, $\sigma_{b_1} \in \Fav_{b_1} (s)$, \dots, $\sigma_{b_m} \in \Fav_{b_m} (s)$ such that $\sigma_{\FAA \cup \FBB} = \sigma_{a_1} \cup \dots \cup \sigma_{a_n} \cup \sigma_{b_1} \cup \dots \cup \sigma_{b_m}$. Then $\sigma_{\FAA \cup \FBB}|_\FAA = \sigma_{a_1} \cup \dots \cup \sigma_{a_n}$. By (1b), $\sigma_{\FAA \cup \FBB}|_\FAA \in \Fav_\FAA (s)$.

Third, we show (2c). Let $\sigma_\FAA \in \Fav_\FAA (s)$ and $\sigma_{\FBB} \in \Fav_\FBB (s)$. We want to show $\FAja{\FAA} \cup \FAja{\FBB} \in \Fav_{\FAA \cup \FBB} (s)$.

Assume $\FAA = \emptyset$. It is easy to check $\FAja{\FAA} \cup \FAja{\FBB} \in \Fav_{\FAA \cup \FBB} (s)$.

Assume $\FBB = \emptyset$. It is easy to check $\FAja{\FAA} \cup \FAja{\FBB} \in \Fav_{\FAA \cup \FBB} (s)$.

Assume $\FAA \neq \emptyset$ and $\FBB \neq \emptyset$. Let $\FAA = \{a_1, \dots, a_n\}$ and $\FBB = \{b_1, \dots, b_m\}$.
By (1b), $\sigma_\FAA|_{a_1} \in \Fav_{a_1} (s)$, \dots, $\sigma_\FAA|_{a_n} \in \Fav_{a_n} (s)$, $\sigma_{\FBB}|_{b_1} \in \Fav_{b_1} (s)$, \dots, $\sigma_{\FBB}|_{b_m} \in \Fav_{b_m} (s)$. By (1b), $\sigma_\FAA|_{a_1} \cup \dots \cup \sigma_\FAA|_{a_n} \cup \sigma_{\FBB}|_{b_1} \cup \dots \cup \sigma_{\FBB}|_{b_m} \in \Fav_{\FAA \cup \FBB} (s)$, that is, $\FAja{\FAA} \cup \FAja{\FBB} \in \Fav_{\FAA \cup \FBB} (s)$.

\medskip

(2) $\Rightarrow$ (3)

It is easy to see that (3) is a special case of (2).

\medskip

(3) $\Rightarrow$ (1)

Assume (3). Let $\FAA \subseteq \FAG$ and $s \in \FST$. It suffices to show (1b).

Assume $\FAA = \emptyset$. As mentioned, $\Fav_\emptyset (s) = \{\emptyset\} = \bigoplus \{\Fav_a (s) \mid a \in \FAA\}$.

Assume $\FAA \neq \emptyset$. Let $\FAA = \{a_1, \dots, a_n\}$.

Let $\sigma_\FAA \in \Fav_\FAA (s)$. We want to show $\FAja{\FAA} \in \bigoplus \{\Fav_a (s) \mid a \in \FAA\}$.
By (3a), $\Fav_{\FAAb} (s)$ is not empty. Let $\sigma_\FAAb \in \Fav_{\FAAb} (s)$. By (3c), $\FAja{\FAA} \cup \FAja{\FAAb} \in \Fav_\FAG (s)$. Let $\FAja{\FAG} = \FAja{\FAA} \cup \FAja{\FAAb}$. By (3b), $\sigma_\FAG|_{a_1} \in \Fav_{a_1} (s), \dots, \sigma_\FAG|_{a_n} \in \Fav_{a_n} (s)$. Then $\sigma_\FAG|_{a_1} \cup \dots \cup \sigma_\FAG|_{a_n} \in \bigoplus \{\Fav_a (s) \mid a \in \FAA\}$, that is, $\FAja{\FAA} \in \bigoplus \{\Fav_a (s) \mid a \in \FAA\}$.

Let $\FAja{\FAA} \in \bigoplus \{\Fav_a (s) \mid a \in \FAA\}$. Then there is $\sigma_{a_1} \in \Fav_{a_1} (s), \dots, \sigma_{a_n} \in \Fav_{a_n} (s)$ such that $\sigma_\FAA = \sigma_{a_1} \cup \dots \cup \sigma_{a_n}$. We want to show $\sigma_\FAA \in \FAajas{\FAA}$.

By (3a), $\FAajas{\FAG}$ is not empty. Let $\delta_{\FAG} \in \FAajas{\FAG}$. For any $\FBB \subseteq \FAG$, we use $\delta_{\FBB}$ to indicate $\delta_{\FAG}|_\FBB$.

By (3b), $\delta_{\FAG - \{a_1\}} \in \FAajas{\FAG - \{a_1\}}$. By (3c), $\sigma_{a_1} \cup \delta_{\FAG - \{a_1\}} \in \FAajas{\FAG}$.

By (3b), $\sigma_{a_1} \cup \delta_{\FAG - \{a_1,a_2\}} \in \FAajas{\FAG -\{a_2\}}$. By (3c), $\sigma_{a_1} \cup \sigma_{a_2} \cup \delta_{\FAG - \{a_1,a_2\}} \in \FAajas{\FAG}$.

\dots.

Finally, we know $\sigma_{a_1} \cup \dots \cup \sigma_{a_n} \cup \delta_{\FAG - \{a_1, \dots, a_n\}} \in \FAajas{\FAG}$.

By (3b), $\sigma_{a_1} \cup \dots \cup \sigma_{a_n} \in \FAajas{\FAA}$, that is, $\sigma_\FAA \in \FAajas{\FAA}$.

\end{proof}


What follows is an example of how scenarios are represented by pointed concurrent game models.

\begin{example}
\label{example:two masks}

Adam and Bob work in a closed gas laboratory when an earthquake occurs. They find that the door cannot be opened.

One toxic gas container is damaged and will soon leak. Fortunately, there are two gas masks in the laboratory.

Adam and Bob have two available actions in this situation: \emph{doing nothing} and \emph{wearing a gas mask}. If they both do nothing, they will both die. If they both wear a gas mask, they will both survive. If Adam does nothing and Bob wears a gas mask, Adam will die and Bob will survive. If Adam wears a gas mask and Bob does nothing, Adam will survive and Bob will die.

We suppose that \emph{one can only do nothing after wearing a gas mask}. This scenario can be represented by the pointed concurrent game model depicted in Figure \ref{figure:two masks}.

\end{example}

\begin{figure}
\begin{center}
\begin{tikzpicture}
[
->=stealth,
scale=1,
every node/.style={transform shape},
]

\tikzstyle{every state}=[minimum size=10mm]

\node[state,fill=gray!15] (s-0) {$s_0$};
\node[state,position=0:{25mm} from s-0] (s-1) {$s_1$};
\node[state,position=45:{25mm} from s-0] (s-2) {$s_2$};
\node[state,position=135:{25mm} from s-0] (s-3) {$s_3$};
\node[state,position=180:{25mm} from s-0] (s-4) {$s_4$};

\node[below=5mm] (a-s-0) at (s-0) {$\{l_a,l_b\}$};
\node[above=5mm] (a-s-1) at (s-1) {$\{m_a,l_a\}$};
\node[above=5mm] (a-s-2) at (s-2) {$\{m_b,l_b\}$};
\node[above=5mm] (a-s-3) at (s-3) {$\{\}$};
\node[above=5mm] (a-s-4) at (s-4) {$\{m_a,m_b,l_a,l_b\}$};

\path
(s-0) edge [above] node {$\mathtt{(w,n)}$} (s-1)
(s-0) edge [left] node {$\mathtt{(n,w)}$} (s-2)
(s-0) edge [right] node {$\mathtt{(n,n)}$} (s-3)
(s-0) edge [above] node {$\mathtt{(w,w)}$} (s-4)
;

\path
(s-1) edge [loop right,midway] node {$\mathtt{(n,n)}$} (s-1)
(s-2) edge [loop right,midway] node {$\mathtt{(n,n)}$} (s-2)
(s-3) edge [loop left,midway] node {$\mathtt{(n,n)}$} (s-3)
(s-4) edge [loop left,midway] node {$\mathtt{(n,n)}$} (s-4)
;

\end{tikzpicture}

\caption{
This figure indicates a pointed concurrent game model $(\MM,s_0)$ for the situation of Example \ref{example:two masks}.
Here: $m_a, m_b, l_a$ and $l_b$ respectively express the propositions \emph{Adam is wearing a gas mask}, \emph{Bob is wearing a gas mask}, \emph{Adam is alive} and \emph{Bob is alive}; $\mathtt{n}$ and $\mathtt{w}$ respectively represent the actions \emph{doing nothing} and \emph{wearing a mask}; an arrow from a state $x$ to a state $y$ labeled with an action profile $\alpha$, indicates that $y$ is a possible outcome state of performing $\alpha$ at $x$.
}

\label{figure:two masks}

\end{center}
\end{figure}

\subsection{Language and semantics}

In $\FCL$, the formula $\Fclo{\FAA} \phi$ intuitively indicates \emph{some unconditionally available joint action of $\FAA$ ensures $\phi$}, and $\Fclod{\FAA} \phi$ intuitively indicates \emph{every unconditionally available joint action of $\FAA$ enables $\phi$}.

Note that $\Box \phi$ is defined as $\Fclo{\emptyset} \top \rightarrow \Fclo{\emptyset} \phi$ and $\Diamond \phi$ is defined as $\Fclo{\emptyset} \top \land \Fclod{\emptyset} \phi$.
In concurrent game models, the empty action is always available for the empty coalition. Then, the following holds:

\medskip

\begin{tabular}{lll}
$\MM, s \Vdash \Box \phi$ & $\Leftrightarrow$ & \parbox[t]{28em}{for all $t \in \Fout_\emptyset (s, \emptyset)$, $\MM, t \Vdash \phi$} \\
$\MM, s \Vdash \Diamond \phi$ & $\Leftrightarrow$ & \parbox[t]{28em}{there is $t \in \Fout_\emptyset (s, \emptyset)$ such that $\MM, t \Vdash \phi$}
\end{tabular}

\medskip

We use $\models_\FCL \phi$ to indicate that $\phi$ is $\FCL$-\emph{valid}: $\phi$ is true at all pointed concurrent game models. We often drop ``$\FCL$'' when the contexts are clear.

\subsection{Axiomatization}

\begin{definition}[An axiomatic system for $\FCL$~\cite{pauly_modal_2002}]
\label{definition:An axiomatic system for CL}
~

\noindent \emph{Axioms}\footnote{In the literature such as \cite{Pacuit2017NeighborhoodSF}, the axioms $\mathtt{A}\text{-}\mathtt{Ser}$ and $\mathtt{A}\text{-}\mathtt{IA}$ are respectively called the axioms of \emph{safety} and \emph{superadditivity}.}:

\medskip

\begin{tabular}{rl}
Tautologies ($\mathtt{A}\text{-}\mathtt{Tau}$): & all propositional tautologies \vspace{5pt} \\
Monotonicity ($\mathtt{A}\text{-}\mathtt{Mon}$): & $\Fclo{\FAA} (\phi \land \psi) \rightarrow \Fclo{\FAA} \phi$ \vspace{5pt} \\
Liveness ($\mathtt{A}\text{-}\mathtt{Live}$): & $\neg \Fclo{\FAA} \bot$ \vspace{5pt} \\
Seriality ($\mathtt{A}\text{-}\mathtt{Ser}$): & $\Fclo{\FAA} \top$ \vspace{5pt} \\
Independence of agents ($\mathtt{A}\text{-}\mathtt{IA}$): & $(\Fclo{\FAA} \phi \land \Fclo{\FBB} \psi) \rightarrow \Fclo{\FAA \cup \FBB} (\phi \land \psi)$, where $\FAA \cap \FBB = \emptyset$ \vspace{5pt} \\
$\FAG$-maximality ($\mathtt{A}\text{-}\mathtt{Max}$): & $\neg \Fclo{\emptyset} \neg \phi \rightarrow \Fclo{\FAG} \phi$
\end{tabular}

\medskip

\noindent \emph{Inference rules}:

\medskip

\begin{tabular}{rl}
Modus ponens ($\mathtt{R}\text{-}\mathtt{MP}$): & $\dfrac{\phi, \phi \rightarrow \psi}
{\psi}$ \vspace{5pt} \\
Replacement of equivalence ($\mathtt{R}\text{-}\mathtt{RE}$): & $\dfrac{\phi \leftrightarrow \psi}
{\Fclo{\FAA} \phi \leftrightarrow \Fclo{\FAA} \psi}$
\end{tabular}

\end{definition}

We use $\vdash_\FCL \phi$ to indicate that $\phi$ is \Fdefs{derivable} in this system. We often drop ``$\FCL$'' when the contexts are clear.

\begin{theorem}[Soundness and completeness of $\FCL$~\cite{pauly_modal_2002,goranko_strategic_2013}]

The axiomatic system given in Definition \ref{definition:An axiomatic system for CL} is sound and complete with respect to the set of $\FCL$-valid formulas of $\Phi$.

\end{theorem}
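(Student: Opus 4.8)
The plan is to prove the two directions separately: soundness is a routine semantic check, while completeness requires a canonical model construction together with a representation step that realises an abstract power structure as a genuine concurrent game model.

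For \textbf{soundness}, I would show that each axiom is $\FCL$-valid and each rule preserves $\FCL$-validity, arguing directly from the semantics (Definition~\ref{definition:Semantics of Phi}) over concurrent game models. The propositional axioms and $\mathtt{R}\text{-}\mathtt{MP}$ are immediate, and $\mathtt{R}\text{-}\mathtt{RE}$ holds because $\Fout_\FAA(s,\sigma_\FAA)$ does not depend on the formula being evaluated, so logically equivalent formulas are ensured by exactly the same available joint actions. For $\mathtt{A}\text{-}\mathtt{Mon}$, any $\sigma_\FAA$ ensuring $\phi \land \psi$ at $s$ also ensures $\phi$, as its outcome set is fixed. $\mathtt{A}\text{-}\mathtt{Live}$ follows from the Fact proved above that every available joint action has a nonempty outcome set, so none can ensure $\bot$; $\mathtt{A}\text{-}\mathtt{Ser}$ follows from $\Fav_\FAA(s)\neq\emptyset$ (seriality), since any available action ensures $\top$. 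For $\mathtt{A}\text{-}\mathtt{IA}$ I would use condition (2c) of Fact~\ref{fact:Characterization of available joint action functions of concurrent game models} together with the outcome monotonicity built into condition (4) of Definition~\ref{definition:Concurrent game models}: if $\sigma_\FAA$ ensures $\phi$ and $\sigma_\FBB$ ensures $\psi$ with $\FAA\cap\FBB=\emptyset$, then $\sigma_\FAA\cup\sigma_\FBB$ is available for $\FAA\cup\FBB$ and its outcomes are contained in those of both $\sigma_\FAA$ and $\sigma_\FBB$, so it ensures $\phi\land\psi$. Finally $\mathtt{A}\text{-}\mathtt{Max}$ uses determinism: if $\emptyset$ does not ensure $\neg\phi$, some successor satisfies $\phi$, and since that successor is the unique outcome of an available action profile, that profile ensures $\phi$.

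For \textbf{completeness} I would construct a canonical concurrent game model whose states are the maximal $\FCL$-consistent sets, obtained via the Lindenbaum lemma. The essential difficulty is that $\FCL$ is monotone but not normal, so the availability and outcome functions cannot be read off an accessibility relation. Instead I would first extract a canonical power (effectivity) structure: for each maximal consistent set $\Gamma$ and coalition $\FAA$, record as the powers of $\FAA$ at $\Gamma$ the proof sets $|\phi| = \{\Delta \mid \phi\in\Delta\}$ for which $\Fclo{\FAA}\phi\in\Gamma$. Each axiom then translates into a defining property of this structure: $\mathtt{A}\text{-}\mathtt{Mon}$ yields upward closure (outcome monotonicity), $\mathtt{A}\text{-}\mathtt{Live}$ and $\mathtt{A}\text{-}\mathtt{Ser}$ yield liveness and safety, $\mathtt{A}\text{-}\mathtt{IA}$ yields superadditivity for disjoint coalitions, and $\mathtt{A}\text{-}\mathtt{Max}$ yields $\FAG$-maximality.

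The main obstacle is the \textbf{representation step}: converting this canonical power structure into actual availability functions $\Fav_\FAA$ and outcome functions $\Fout_\FAA$ satisfying \emph{all} of conditions (1)--(4) of Definition~\ref{definition:Concurrent game models} at once --- in particular the $\bigoplus$-decomposition of availability and determinism simultaneously. For this I would invoke (or reprove) Pauly's representation theorem, which states precisely that a power structure at a state is induced by a concurrent game form exactly when it is \emph{playable}, i.e.\ satisfies the properties extracted above; superadditivity together with $\FAG$-maximality is exactly what makes the grand coalition's powers come from single action profiles, forcing determinism. Applying the theorem state by state produces the canonical model. I would then finish with the \emph{truth lemma}, proved by induction on formulas, where the only nontrivial case is $\Fclo{\FAA}\phi$: the direction from membership to truth uses the representation to turn a power $|\phi|$ into an available joint action ensuring $\phi$, and the converse direction uses upward closure to enlarge a witnessing power to $|\phi|$. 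Completeness then follows, since a non-derivable $\phi$ makes $\neg\phi$ consistent, hence contained in some maximal consistent set that refutes $\phi$ in the canonical model.
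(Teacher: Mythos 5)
The paper does not prove this theorem at all: it is stated as a known result and delegated to the cited references, so there is no in-paper proof to compare against. Your sketch follows exactly the standard route of those references (soundness by direct semantic checking; completeness via maximal consistent sets, a canonical effectivity/power structure whose defining properties are read off the axioms, a representation step turning playable power structures into concurrent game forms, and a truth lemma). The soundness half is fine, including the correct attribution of $\mathtt{A}\text{-}\mathtt{IA}$ to condition (2c) of Fact~\ref{fact:Characterization of available joint action functions of concurrent game models} and of $\mathtt{A}\text{-}\mathtt{Max}$ to determinism.

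There is, however, one genuine soft spot in the completeness half, and it is precisely the reason the theorem cites \cite{goranko_strategic_2013} alongside \cite{pauly_modal_2002}. You invoke ``Pauly's representation theorem, which states precisely that a power structure at a state is induced by a concurrent game form exactly when it is playable.'' That equivalence, as Pauly originally stated it, is false when the set of outcome states is infinite: playability does not suffice for an effectivity function to be the effectivity function of a strategic game form. Since the canonical model's state set consists of all maximal consistent sets over a countable language, it is infinite, so the representation step as you state it would fail. Goranko, Jamroga and Turrini repaired this by isolating the stronger notion of a \emph{truly playable} effectivity function, showing that true playability is the correct characterization, and verifying that the canonical effectivity function is truly playable. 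To make your proof go through you must either carry out that verification or restrict the representation claim to the truly playable case; merely citing playability leaves the known gap in Pauly's original argument open.
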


To better compare $\FMCL$ to $\FCL$, we give an equivalent axiomatic system for $\FCL$. 

\begin{definition}[Another axiomatic system for $\FCL$]
\label{definition:Another axiomatic system for CL}
~

\noindent \emph{Axioms}:

\medskip

\begin{tabular}{rl}
Tautogies ($\mathtt{A}\text{-}\mathtt{Tau}$): & all propositional tautologies \vspace{5pt} \\
Monotonicity of goals ($\mathtt{A}\text{-}\mathtt{MG}$): & $\Fclo{\emptyset} (\phi \rightarrow \psi) \rightarrow (\Fclo{\FAA} \phi \rightarrow \Fclo{\FAA} \psi)$ \vspace{5pt} \\
Monotonicity of coalitions ($\mathtt{A}\text{-}\mathtt{MC}$): & $\Fclo{\FAA} \phi \rightarrow \Fclo{\FBB} \phi$, where $\FAA \subseteq \FBB$ \vspace{5pt} \\
Liveness ($\mathtt{A}\text{-}\mathtt{Live}$): & $\neg \Fclo{\FAA} \bot$ \vspace{5pt} \\
Seriality ($\mathtt{A}\text{-}\mathtt{Ser}$): & $\Fclo{\FAA} \top$ \vspace{5pt} \\
Independence of agents ($\mathtt{A}\text{-}\mathtt{IA}$): & $(\Fclo{\FAA} \phi \land \Fclo{\FBB} \psi) \rightarrow \Fclo{\FAA \cup \FBB} (\phi \land \psi)$, where $\FAA \cap \FBB = \emptyset$ \vspace{5pt} \\
Determinism ($\mathtt{A}\text{-}\mathtt{Det}$): & $\Fclo{\FAA} (\phi \lor \psi) \rightarrow (\Fclo{\FAA} \phi \lor \Fclo{\FAG} \psi)$
\end{tabular}

\medskip

\noindent \emph{Inference rules}:

\medskip

\begin{tabular}{rl}
Modus ponens ($\mathtt{R}\text{-}\mathtt{MP}$): & $\dfrac{\phi, \phi \rightarrow \psi}
{\psi}$ \vspace{5pt} \\
Conditional necessitation ($\mathtt{R}\text{-}\mathtt{CN}$): & $\dfrac{\phi}
{\Fclo{\FAA} \psi \rightarrow \Fclo{\emptyset} \phi}$
\end{tabular}

\end{definition}

The differences between the axiomatic system given in Definition \ref{definition:An axiomatic system for CL} and the one given in Definition \ref{definition:Another axiomatic system for CL} are that the former does not have the axioms $\mathtt{A}\text{-}\mathtt{MG}$, $\mathtt{A}\text{-}\mathtt{MC}$, $\mathtt{A}\text{-}\mathtt{Det}$, and the rule $\mathtt{R}\text{-}\mathtt{CN}$, and the latter does not have the axioms $\mathtt{A}\text{-}\mathtt{Mon}$, $\mathtt{A}\text{-}\mathtt{Max}$, and the rule $\mathtt{R}\text{-}\mathtt{RE}$.

\begin{fact}
The axiomatic system given in Definition \ref{definition:An axiomatic system for CL} is equivalent to the axiomatic system given in Definition \ref{definition:Another axiomatic system for CL}.
\end{fact}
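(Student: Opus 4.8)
The plan is to establish mutual derivability: show that every axiom and rule of each system is derivable in the other. Since both systems share $\mathtt{A}\text{-}\mathtt{Tau}$, $\mathtt{A}\text{-}\mathtt{Live}$, $\mathtt{A}\text{-}\mathtt{Ser}$, $\mathtt{A}\text{-}\mathtt{IA}$ and $\mathtt{R}\text{-}\mathtt{MP}$, it suffices to derive the non-shared items of the system of Definition~\ref{definition:Another axiomatic system for CL} (namely $\mathtt{A}\text{-}\mathtt{MG}$, $\mathtt{A}\text{-}\mathtt{MC}$, $\mathtt{A}\text{-}\mathtt{Det}$, $\mathtt{R}\text{-}\mathtt{CN}$) inside the system of Definition~\ref{definition:An axiomatic system for CL}, and conversely to derive the non-shared items of the first system ($\mathtt{A}\text{-}\mathtt{Mon}$, $\mathtt{A}\text{-}\mathtt{Max}$, $\mathtt{R}\text{-}\mathtt{RE}$) inside the second.

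The preparatory step I would carry out first, in each system, is the monotonicity rule $\mathtt{RM}$: from $\vdash \phi \rightarrow \psi$ infer $\vdash \Fclo{\FAA}\phi \rightarrow \Fclo{\FAA}\psi$. In the first system this is the standard monotone-modal-logic derivation: from $\phi \rightarrow \psi$ one has $\phi \leftrightarrow (\phi \land \psi)$ and $(\phi\land\psi)\leftrightarrow(\psi\land\phi)$, so two applications of $\mathtt{R}\text{-}\mathtt{RE}$ together with $\mathtt{A}\text{-}\mathtt{Mon}$ (applied to $\psi\land\phi$) chain into $\Fclo{\FAA}\phi \rightarrow \Fclo{\FAA}\psi$. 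In the second system, $\mathtt{RM}$ comes from a contraction trick: from $\vdash \phi \rightarrow \psi$, the rule $\mathtt{R}\text{-}\mathtt{CN}$ applied to $\phi\rightarrow\psi$ with its antecedent taken as $\Fclo{\FAA}\phi$ gives $\Fclo{\FAA}\phi \rightarrow \Fclo{\emptyset}(\phi\rightarrow\psi)$, while $\mathtt{A}\text{-}\mathtt{MG}$ gives $\Fclo{\emptyset}(\phi\rightarrow\psi) \rightarrow (\Fclo{\FAA}\phi \rightarrow \Fclo{\FAA}\psi)$; chaining and contracting the doubled antecedent yields $\Fclo{\FAA}\phi \rightarrow \Fclo{\FAA}\psi$.

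For the first-derives-second direction, with $\mathtt{RM}$ in hand the routine cases fall out quickly: $\mathtt{A}\text{-}\mathtt{MG}$ follows by feeding $\Fclo{\FAA}\phi$ and $\Fclo{\emptyset}(\phi\rightarrow\psi)$ into $\mathtt{A}\text{-}\mathtt{IA}$ (using $\FAA \cup \emptyset = \FAA$) and then applying $\mathtt{RM}$ along the tautology $\phi \land (\phi\rightarrow\psi) \rightarrow \psi$; $\mathtt{A}\text{-}\mathtt{MC}$ (for $\FAA \subseteq \FBB$) follows by combining $\Fclo{\FAA}\phi$ with $\Fclo{\FBB\setminus\FAA}\top$ (from $\mathtt{A}\text{-}\mathtt{Ser}$) via $\mathtt{A}\text{-}\mathtt{IA}$ and simplifying $\phi\land\top$; and $\mathtt{R}\text{-}\mathtt{CN}$ follows because $\vdash\phi$ makes $\top\leftrightarrow\phi$ derivable, so $\mathtt{A}\text{-}\mathtt{Ser}$ plus $\mathtt{R}\text{-}\mathtt{RE}$ give $\Fclo{\emptyset}\phi$ as a theorem, which any formula implies. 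The one delicate case is $\mathtt{A}\text{-}\mathtt{Det}$: here I would assume $\Fclo{\FAA}(\phi\lor\psi)$ and $\neg\Fclo{\FAG}\psi$, use the contrapositive of $\mathtt{A}\text{-}\mathtt{Max}$ to turn $\neg\Fclo{\FAG}\psi$ into $\Fclo{\emptyset}\neg\psi$, combine it with $\Fclo{\FAA}(\phi\lor\psi)$ through $\mathtt{A}\text{-}\mathtt{IA}$, and conclude $\Fclo{\FAA}\phi$ via $\mathtt{RM}$ along $(\phi\lor\psi)\land\neg\psi \rightarrow \phi$.

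For the second-derives-first direction, $\mathtt{A}\text{-}\mathtt{Mon}$ is immediate from $\mathtt{RM}$ applied to $\phi\land\psi \rightarrow \phi$, and $\mathtt{R}\text{-}\mathtt{RE}$ follows from two applications of $\mathtt{RM}$, one for each direction of the biconditional. The crux, which I expect to be the main obstacle, is $\mathtt{A}\text{-}\mathtt{Max}$, the dual of the $\mathtt{A}\text{-}\mathtt{Det}$ step above: I would instantiate $\mathtt{A}\text{-}\mathtt{Det}$ at $\FAA = \emptyset$ with disjuncts $\neg\phi$ and $\phi$, giving $\Fclo{\emptyset}(\neg\phi\lor\phi) \rightarrow (\Fclo{\emptyset}\neg\phi \lor \Fclo{\FAG}\phi)$; since $\neg\phi\lor\phi$ is a tautology, $\mathtt{R}\text{-}\mathtt{RE}$ and $\mathtt{A}\text{-}\mathtt{Ser}$ make the antecedent a theorem, so detaching it leaves $\Fclo{\emptyset}\neg\phi \lor \Fclo{\FAG}\phi$, which is propositionally $\mathtt{A}\text{-}\mathtt{Max}$. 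The conceptual difficulty throughout is exactly this $\mathtt{A}\text{-}\mathtt{Max}$/$\mathtt{A}\text{-}\mathtt{Det}$ correspondence, where the maximality of the grand coalition relative to the empty coalition must be repackaged as a distribution-over-disjunction principle; the remaining steps are routine propositional bookkeeping once $\mathtt{RM}$ is available.
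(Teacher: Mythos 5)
Your proposal is correct and follows essentially the same route as the paper: mutual derivation of the non-shared axioms and rules, with the same key moves in each case ($\mathtt{A}\text{-}\mathtt{IA}$ with the empty coalition for $\mathtt{A}\text{-}\mathtt{MG}$, $\mathtt{A}\text{-}\mathtt{Ser}$ plus $\mathtt{A}\text{-}\mathtt{IA}$ for $\mathtt{A}\text{-}\mathtt{MC}$, the contrapositive of $\mathtt{A}\text{-}\mathtt{Max}$ for $\mathtt{A}\text{-}\mathtt{Det}$, and $\mathtt{A}\text{-}\mathtt{Det}$ instantiated at $\FAA = \emptyset$ on the tautology $\neg\phi \lor \phi$ for $\mathtt{A}\text{-}\mathtt{Max}$). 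The only difference is organizational: you factor out the monotonicity rule $\mathtt{RM}$ as a preparatory lemma in each system, whereas the paper inlines the corresponding $\mathtt{R}\text{-}\mathtt{RE}$/$\mathtt{A}\text{-}\mathtt{Mon}$ and $\mathtt{R}\text{-}\mathtt{CN}$/$\mathtt{A}\text{-}\mathtt{MG}$ chains in each case.
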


\begin{proof}
~

We use $\vdash_1 \phi$ to indicate that $\phi$ is derivable in the former system and $\vdash_2 \phi$ to indicate that $\phi$ is derivable in the latter system.

First, we show that all the axioms and rules of the latter system are derivable in the former system. It suffices to show that the axioms $\mathtt{A}\text{-}\mathtt{MG}$, $\mathtt{A}\text{-}\mathtt{MC}$, $\mathtt{A}\text{-}\mathtt{Det}$, and the rule $\mathtt{R}\text{-}\mathtt{CN}$ are derivable in the former system.
\begin{itemize}

\item

$\mathtt{A}\text{-}\mathtt{MG}$.
It suffices to show $\vdash_1 (\Fclo{\emptyset} (\phi \rightarrow \psi) \land \Fclo{\FAA} \phi) \rightarrow \Fclo{\FAA} \psi$.
By Axiom $\mathtt{A}\text{-}\mathtt{IA}$, $\vdash_1 (\Fclo{\emptyset} (\phi \rightarrow \psi) \land \Fclo{\FAA} \phi) \rightarrow \Fclo{\FAA} ((\phi \rightarrow \psi) \land \phi)$.
By Rule $\mathtt{R}\text{-}\mathtt{RE}$, $\vdash_1 \Fclo{\FAA} ((\phi \rightarrow \psi) \land \phi) \rightarrow \Fclo{\FAA} (\phi \land \psi)$.
By Axiom $\mathtt{A}\text{-}\mathtt{Mon}$, $\vdash_1 \Fclo{\FAA} (\phi \land \psi) \rightarrow \Fclo{\FAA} \psi$.
Then, $\vdash_1 (\Fclo{\emptyset} (\phi \rightarrow \psi) \land \Fclo{\FAA} \phi) \rightarrow \Fclo{\FAA} \psi$.

\item

$\mathtt{A}\text{-}\mathtt{MC}$.
Assume $\FAA \subseteq \FBB$. We want to show $\vdash_1 \Fclo{\FAA} \phi \rightarrow \Fclo{\FBB} \phi$.
By Axiom $\mathtt{A}\text{-}\mathtt{Ser}$, $\vdash_1 \Fclo{\FBB-\FAA} \top$. Then, $\vdash_1 \Fclo{\FAA} \phi \rightarrow (\Fclo{\FAA} \phi \land \Fclo{\FBB-\FAA} \top)$.
By Axiom $\mathtt{A}\text{-}\mathtt{IA}$, $\vdash_1 \Fclo{\FAA} \phi \land \Fclo{\FBB-\FAA} \top \rightarrow \Fclo{\FBB} (\phi \land \top)$.
By Rule $\mathtt{R}\text{-}\mathtt{RE}$, $\vdash_1 \Fclo{\FBB} (\phi \land \top) \rightarrow \Fclo{\FBB} \phi$.
Then we can get $\vdash_1 \Fclo{\FAA} \phi \rightarrow \Fclo{\FBB} \phi$.

\item 

$\mathtt{A}\text{-}\mathtt{Det}$. We want to show $\vdash_1 \Fclo{\FAA} (\phi \lor \psi) \rightarrow (\Fclo{\FAA} \phi \lor \Fclo{\FAG} \psi)$.
By Axiom $\mathtt{A}\text{-}\mathtt{IA}$, $\vdash_1 (\Fclo{\FAA} (\phi \lor \psi) \land \Fclo{\emptyset} \neg \psi) \rightarrow \Fclo{\FAA} ((\phi \lor \psi) \land \neg \psi)$.
By Rule $\mathtt{R}\text{-}\mathtt{RE}$, $\vdash_1 \Fclo{\FAA} ((\phi \lor \psi) \land \neg \psi) \rightarrow \Fclo{\FAA} (\phi \land \neg \psi)$.
By Axiom $\mathtt{A}\text{-}\mathtt{Mon}$, $\vdash_1 \Fclo{\FAA} (\phi \land \neg \psi) \rightarrow \Fclo{\FAA} \phi$.
Then, $\vdash_1 (\Fclo{\FAA} (\phi \lor \psi) \land \Fclo{\emptyset} \neg \psi) \rightarrow \Fclo{\FAA} \phi$.
By Axiom $\mathtt{A}\text{-}\mathtt{Max}$, $\vdash_1 \neg \Fclo{\FAG} \psi \rightarrow \Fclo{\emptyset} \neg \psi$.
Then, $\vdash_1 (\Fclo{\FAA} (\phi \lor \psi) \land \neg \Fclo{\FAG} \psi) \rightarrow \Fclo{\FAA} \phi$.
Then, $\vdash_1 (\Fclo{\FAA} (\phi \lor \psi) \rightarrow (\neg \Fclo{\FAG} \psi \rightarrow \Fclo{\FAA} \phi)$.
Then, $\vdash_1 \Fclo{\FAA} (\phi \lor \psi) \rightarrow (\Fclo{\FAA} \phi \lor \Fclo{\FAG} \psi)$.

\item

$\mathtt{R}\text{-}\mathtt{CN}$.
Assume $\vdash_1 \phi$. We want to show $\vdash_1 \Fclo{\FAA} \psi \rightarrow \Fclo{\emptyset} \phi$. Note $\vdash_1 \top \leftrightarrow \phi$. By Rule $\mathtt{R}\text{-}\mathtt{RE}$, $\vdash_1 \Fclo{\emptyset} \top \rightarrow \Fclo{\emptyset} \phi$.
By Axiom $\mathtt{A}\text{-}\mathtt{Ser}$, $\vdash_1 \Fclo{\emptyset} \top$. Then $\vdash_1 \Fclo{\emptyset} \phi$. Then $\vdash_1 \Fclo{\FAA} \psi \rightarrow \Fclo{\emptyset} \phi$.

\end{itemize}

Second, we show that all the axioms and rules of the former system are derivable in the latter system. It suffices to show that the axioms $\mathtt{A}\text{-}\mathtt{Mon}$, $\mathtt{A}\text{-}\mathtt{Max}$, and the rule $\mathtt{R}\text{-}\mathtt{RE}$ are drivable in the latter system.

\begin{itemize}

\item

$\mathtt{A}\text{-}\mathtt{Mon}$.
Note $\vdash_2 (\phi \land \psi) \rightarrow \phi$.
By Rule $\mathtt{R}\text{-}\mathtt{CN}$, $\vdash_2 \Fclo{\FAA} (\phi \land \psi) \rightarrow \Fclo{\emptyset} ((\phi \land \psi) \rightarrow \phi)$.
By Axiom $\mathtt{A}\text{-}\mathtt{MG}$, $\vdash_2 \Fclo{\emptyset} ((\phi \land \psi) \rightarrow \phi) \rightarrow (\Fclo{\FAA} (\phi \land \psi) \rightarrow \Fclo{\FAA} \phi)$.
Then $\vdash_2 \Fclo{\FAA} (\phi \land \psi) \rightarrow (\Fclo{\FAA} (\phi \land \psi) \rightarrow \Fclo{\FAA} \phi)$.
Then $\vdash_2 \Fclo{\FAA} (\phi \land \psi) \rightarrow \Fclo{\FAA} \phi$.

\item 

$\mathtt{A}\text{-}\mathtt{Max}$.
Note $\vdash_2 \neg \phi \lor \phi$. By Rule $\mathtt{R}\text{-}\mathtt{CN}$, $\vdash_2 \Fclo{\emptyset} \top \rightarrow \Fclo{\emptyset} (\neg \phi \lor \phi)$. 
By Axiom $\mathtt{A}\text{-}\mathtt{Ser}$, $\vdash_2 \Fclo{\emptyset} \top$.
Then $\vdash_2 \Fclo{\emptyset} (\neg \phi \lor \phi)$.
Note $\vdash_2 \Fclo{\emptyset} (\neg \phi \lor \phi) \rightarrow (\Fclo{\emptyset} \neg \phi \lor \Fclo{\FAG} \phi)$ is an instance of Axiom $\mathtt{A}\text{-}\mathtt{Det}$. Then $\vdash_2 \Fclo{\emptyset} \neg \phi \lor \Fclo{\FAG} \phi$. Then, $\vdash_2 \neg \Fclo{\emptyset} \neg \phi \rightarrow \Fclo{\FAG} \phi$.

\item

$\mathtt{R}\text{-}\mathtt{RE}$.
Assume $\vdash_2 \phi \leftrightarrow \psi$.
We want to show $\vdash_2 \Fclo{\FAA} \phi \leftrightarrow \Fclo{\FAA} \psi$.
By Rule $\mathtt{R}\text{-}\mathtt{CN}$, $\vdash_2 \Fclo{\FAA} \phi \rightarrow \Fclo{\emptyset} (\phi \rightarrow \psi)$.
By Axiom $\mathtt{A}\text{-}\mathtt{MG}$, $\vdash_2 \Fclo{\emptyset} (\phi \rightarrow \psi) \rightarrow (\Fclo{\FAA} \phi \rightarrow \Fclo{\FAA} \psi)$.
Then, $\vdash_2 \Fclo{\FAA} \phi \rightarrow (\Fclo{\FAA} \phi \rightarrow \Fclo{\FAA} \psi)$.
Then $\vdash_2 \Fclo{\FAA} \phi \rightarrow \Fclo{\FAA} \psi$.
Similarly, we can show $\vdash_2 \Fclo{\FAA} \psi \rightarrow \Fclo{\FAA} \phi$.
Then $\vdash_2 \Fclo{\FAA} \phi \leftrightarrow \Fclo{\FAA} \psi$.

\end{itemize}

\end{proof}

\section{On seriality, independence of agents, and determinism}
\label{section:Three assumptions in concurrent game models are too strong}

As mentioned, concurrent game models have the assumptions of \emph{seriality}, \emph{the independence of agents}, and \emph{determinism}. In this section, against the literature, we argue that the three assumptions do not hold generally. In addition, we also discuss the potential impact of dropping them.

\subsection{Coalitions do not have available joint actions in all situations}

The real world evolves continuously, but many important \emph{artificial settings}, such as games, have boundaries, including terminal states. For example, a paper-scissors-rock game terminates when a winner is decided. Intuitively, in terminating states of these artificial settings, agents do not have available actions.

One natural thought is to model termination by introducing a special action, that is, ``\emph{doing nothing}'', which does not change anything. Terminal states are those where every agent can only perform the special action. We do not see that this causes a problem, although it seems conceptually incorrect. Note that some literature, such as \cite{AlechinaLoganNgaRakib2011}, acknowledges that ``doing nothing'' may lead to state change.

Dropping seriality might change things. Take Alternating-time Temporal Logic $\FATL$ as an example, whose models are concurrent game models. In $\FATL$, \emph{paths}, that are infinite sequences of states, play a crucial role. If we drop seriality, we shall consider both finite paths and infinite paths.

\subsection{Available joint actions are not unconditional in all situations}

Agents' actions often interfere with each other, and whether a coalition can perform an action is often conditional on other agents' actions at the same time.
The interference among agents' actions can be due to many reasons, including a lack of resources. It fails the independence of agents. If we want to consider interactions among agents' actions, we should relax the assumption of agent independence.

What follows are three examples where the independence of agents does not hold; the last two are from Sergot \cite{sergot_examples_2014}.

\begin{example}
\label{example:two persons one chair}

In a room, there are two agents, $a$ and $b$, and only one chair. Agent $a$ can sit, agent $b$ can sit, but they cannot sit at the same time.

\end{example}

\begin{example}
\label{example:two persons two rooms pass}

There are two rooms, separated by a doorway, and two agents, $a$ and $b$, in the room on the left. The two agents can stay where they are or pass from one room to another, but not simultaneously, as the doorway is too narrow.

\end{example}

\begin{example}
\label{example:two persons a vase move}

There are two agents, $a$ and $b$, and a precious vase. The vase can be in one of
three locations: $\mathtt{in}$, $\mathtt{out}$, and $\mathtt{over}$. It might be raining or not, which is out of the control of $a$ and $b$. It is wrong for the vase to be $\mathtt{out}$ when it is raining.
Agent $a$ can move the vase between $\mathtt{in}$ and $\mathtt{out}$, and agent $b$ can move it between $\mathtt{out}$ and $\mathtt{over}$.
The vase is in $\mathtt{out}$, and it is not raining. In this case, agent $a$ can move the vase to $\mathtt{in}$, agent $b$ can move the vase to $\mathtt{over}$, but they cannot do it simultaneously.

\end{example}

One natural thought is to treat the actions in the scenarios given above as \emph{trying to do something}. Since people can always \emph{try} to do things at the same time, the independence of agents holds. There are some issues with this idea.
First, it would trivialize the notion of availability: all joint actions would be available. However, availability is not a trivial notion. For example, it makes perfect sense to say ``\emph{they cannot sit at the same time}'' in the first example.
Second, this thought seems incompatible with determinism. For example, it is natural to think that the outcome of $a$ and $b$ trying to sit is not deterministic in the first example.
Finally, Royakkers and Hughes \cite{royakkers_blame_2020} found that this idea can lead to implausible conclusions in certain situations regarding whether an agent is \emph{accountable} for a result.

There are advantages to dropping the assumption of agent independence: it lets us model \emph{conditional abilities}. Roughly, conditional abilities are something like this: \emph{given that the coalition $\FAA$ is going to behave in a certain way, the coalition $\FBB$ is able to ensure a certain goal}. We refer to \cite{goranko_logic_2022} for more discussion of conditional abilities.

\subsection{Joint actions of the grand coalition do not have a unique outcome in all situations}

There are many situations where some joint actions of all behaving agents have more than one outcome state. What follows are some examples; the last two of them are from \cite{sergot_examples_2014}.

\begin{example}
\label{example:two persons two dices}

Two agents, $a$ and $b$, are playing a simple game. They throw a dice simultaneously, and whoever gets a bigger number wins.

\end{example}

\begin{example}
\label{example:two persons a desk a vase}

A vase stands on a table. There is an agent $a$ who can lift or lower the end of the table. If the table tilts, the vase might fall, and if it falls, it might break.

\end{example}

\begin{example}
\label{example:two persons two rooms try to pass}

As in Example \ref{example:two persons two rooms pass}, there are two rooms, separated by a narrow doorway, and two agents, $a$ and $b$, in the room on the left.
The agents can stay where they are or \emph{try to pass} from one room to the other.
If both try, either both fail and stay in the room on the left, or $a$ succeeds in moving through because $a$ is a little stronger than $b$.

\end{example}

Furthermore, in many situations, we consider some but not all behaving agents, who form the grand coalition. The reason could be that we do not know exactly who is behaving, or that we know who is behaving, but only some of them are important to us. In these situations, determinism fails.

A natural idea is to suppose an implicit agent, \emph{the environment}, which represents \emph{nature}, unknown agents, and ignored agents as a whole.
About this idea, let us quote some words from Sergot \cite{sergot_actual_2022}: ``\emph{In many examples however the device of thinking of the environment as a kind of agent is very much less convincing. Suppose that $a$ and $b$ can throw the vase out of a window. If it is thrown out of the window and breaks, the effects of the environment are to contribute—what? fragility? the angle at which the vase strikes the ground? In examples, common in the literature on causation, where agents throw rocks at a bottle with imperfect aim, what is the environment's contribution? To affect the trajectory of the throw? To affect the trajectory of each throw differently? To affect the trajectory of each throw differently depending on who else throws at the same time? To contribute fragility to the bottle? In all but the very simplest examples the device seems extremely contrived, and does not scale.}''

\paragraph{Remarks}

The granularity of actions can play a role in whether determinism holds.
For example, in Example \ref{example:two persons two rooms try to pass}, we could specify which ways of trying to pass can let $a$ pass through and which ways cannot. Then, the joint actions of $a$ and $b$ would be deterministic.

\section{A minimal coalition logic $\FMCL$ based on general concurrent game models}
\label{section:A minimal coalition logic MCL based on general concurrent game models}

In this section, we define models of $\FMCL$, axiomatize $\FMCL$, argue that $\FMCL$ is a minimal coalition logic for reasoning about coalitional powers, and discuss whether $\FCL$ can simulate $\FMCL$.

\subsection{General concurrent game models}

\begin{definition}[General concurrent game models]
\label{definition:General concurrent game models}

An abstract multi-agent action model $\MM = (\FST, \FAC, \{\Fav_\FAA \mid \FAA \subseteq \FAG\}, \{\Fout_\FAA \mid \FAA \subseteq \FAG\}, \Flab)$ is a \Fdefs{general concurrent game model}, where:
\begin{enumerate}[label=(\arabic*),leftmargin=3.33em]

\item

for every $\FAA \subseteq \FAG$ and $s \in \FST$, $\Fav_\FAA (s) = \Fav_\FAG (s)|_\FAA$.

\emph{Intuitively, $\Fav_\FAA$ is determined by $\Fav_\FAG$}.

\item

for every $s \in \FST$, $\Fav_\FAG (s) = \{\sigma_\FAG \in \FJA_\FAG \mid \Fout_\FAG (s, \sigma_\FAG) \neq \emptyset \}$.

\item

for every $\FAA \subseteq \FAG$, $s \in \FST$ and $\sigma_\FAA \in \FJA_\FAA$, $\Fout_\FAA (s, \sigma_\FAA) = \bigcup \{\Fout_\FAG (s, \sigma_\FAG) \mid \sigma_\FAG \in \FJA_\FAG \text{ and}$ $\sigma_\FAA \subseteq \sigma_\FAG\}$.

\end{enumerate}

\end{definition}


The following fact offers an equivalent definition of general concurrent game models.

\begin{fact}
\label{definition:alternative definition of general concurrent game models}

Let $\MM = (\FST, \FAC, \{\Fav_\FAA \mid \FAA \subseteq \FAG\}, \{\Fout_\FAA \mid \FAA \subseteq \FAG\}, \Flab)$ be an abstract multi-agent action model.
The following two sets of conditions are equivalent:
\begin{enumerate}[label=(\arabic*),leftmargin=3.33em]

\item 
\begin{enumerate}

\item

for every $\FAA \subseteq \FAG$ and $s \in \FST$, $\Fav_\FAA (s) = \Fav_\FAG (s)|_\FAA$;

\item

for every $s \in \FST$, $\Fav_\FAG (s) = \{\sigma_\FAG \in \FJA_\FAG \mid \Fout_\FAG (s, \sigma_\FAG) \neq \emptyset \}$;

\item

for every $\FAA \subseteq \FAG$, $s \in \FST$ and $\sigma_\FAA \in \FJA_\FAA$, $\Fout_\FAA (s, \sigma_\FAA) = \bigcup \{\Fout_\FAG (s, \sigma_\FAG) \mid \sigma_\FAG \in \FJA_\FAG \text{ and}$ $\sigma_\FAA \subseteq \sigma_\FAG\}$.

\end{enumerate}

\item

\begin{enumerate}

\item

for every $\FAA \subseteq \FAG$, $s \in \FST$ and $\sigma_\FAA \in \FJA_\FAA$, $\Fout_\FAA (s, \sigma_\FAA) = \bigcup \{\Fout_\FAG (s, \sigma_\FAG) \mid \sigma_\FAG \in \FJA_\FAG \text{ and}$ $\sigma_\FAA \subseteq \sigma_\FAG\}$;

\item

for every $\FAA \subseteq \FAG$ and $s \in \FST$, $\Fav_\FAA (s) = \{\sigma_\FAA \in \FJA_\FAA \mid \Fout_\FAA (s, \sigma_\FAA) \neq \emptyset\}$.

\end{enumerate}

\end{enumerate}

\end{fact}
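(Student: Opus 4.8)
The plan is to observe first that condition (1c) and condition (2a) are literally the \emph{same} statement about the outcome functions. Hence the content of the fact reduces to showing that, \emph{given} this common condition on $\Fout$, the pair (1a)--(1b) is equivalent to the single condition (2b). I would isolate this shared outcome condition and keep it fixed throughout both directions.

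The key auxiliary observation I would establish at the outset is a \emph{bridging characterization}: under (1c)/(2a), for every $\FAA \subseteq \FAG$, $s \in \FST$ and $\sigma_\FAA \in \FJA_\FAA$,
\[
\Fout_\FAA(s,\sigma_\FAA) \neq \emptyset \iff \text{there is } \sigma_\FAG \in \FJA_\FAG \text{ with } \sigma_\FAA \subseteq \sigma_\FAG \text{ and } \Fout_\FAG(s,\sigma_\FAG) \neq \emptyset.
\]
This is immediate, since a union of sets is nonempty exactly when one of its members is nonempty, and here the indexing set is precisely the extensions $\sigma_\FAG$ of $\sigma_\FAA$. I would also record the routine remark that, for $\FAA \subseteq \FAG$, we have $\sigma_\FAA \in \Fav_\FAG(s)|_\FAA$ iff some $\sigma_\FAG \in \Fav_\FAG(s)$ extends $\sigma_\FAA$ (i.e. $\sigma_\FAA \subseteq \sigma_\FAG$); this is just unpacking the definition of the restriction of a set of joint actions.

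For the direction $(1) \Rightarrow (2)$, since (2a) is (1c) it only remains to derive (2b). Fixing $\FAA$ and $s$, I would chain equivalences: by (1a), $\sigma_\FAA \in \Fav_\FAA(s)$ iff $\sigma_\FAA \in \Fav_\FAG(s)|_\FAA$; by the restriction remark together with (1b), this holds iff some extension $\sigma_\FAG \supseteq \sigma_\FAA$ has $\Fout_\FAG(s,\sigma_\FAG) \neq \emptyset$; and by the bridging characterization this last condition is exactly $\Fout_\FAA(s,\sigma_\FAA) \neq \emptyset$, which is (2b). For the direction $(2) \Rightarrow (1)$, again (1c) is (2a), and (1b) is simply the instance $\FAA = \FAG$ of (2b). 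For (1a) I would show that both $\Fav_\FAA(s)$ and $\Fav_\FAG(s)|_\FAA$ coincide with the set of $\sigma_\FAA$ admitting an extension $\sigma_\FAG \supseteq \sigma_\FAA$ with $\Fout_\FAG(s,\sigma_\FAG) \neq \emptyset$: for the left-hand side this is (2b) combined with the bridging characterization, and for the right-hand side it is the restriction remark combined with the $\FAA = \FAG$ instance of (2b).

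I do not anticipate a genuine obstacle; the whole argument is bookkeeping organized around the single union-nonemptiness bridge. The only points requiring a little care are the degenerate coalition $\FAA = \emptyset$, where $\sigma_\emptyset = \emptyset$ extends every $\sigma_\FAG$ and $\FJA_\emptyset = \{\emptyset\}$, and keeping the restriction/extension correspondence ($\sigma_\FAG|_\FAA = \sigma_\FAA$ iff $\sigma_\FAA \subseteq \sigma_\FAG$) straight; both are handled uniformly by the two observations above.
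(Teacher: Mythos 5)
Your proposal is correct and follows essentially the same route as the paper's proof: both reduce the fact to deriving (2b) from (1) and (1a) from (2), chaining through the same three observations (restriction/extension correspondence, the characterization of $\Fav_\FAG$ via nonempty outcomes, and nonemptiness of the union in the outcome condition). Your explicit isolation of the ``bridging characterization'' merely packages steps the paper performs inline, so there is nothing substantive to add.
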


\begin{proof}
~

(1) $\Rightarrow$ (2)

Assume (1).
We only need to show (2b). Let $\FAA \subseteq \FAG$, $s \in \FST$.

Assume $\sigma_\FAA \in \Fav_\FAA (s)$. By (2a), there is $\sigma_\FAG \in \Fav_\FAG (s)$ such that $\sigma_\FAA \subseteq \sigma_\FAG$. By (1b), $\Fout_\FAG (s,\sigma_\FAG) \neq \emptyset$. By (1c), $\Fout_\FAA (s, \sigma_\FAA) \neq \emptyset$.

Assume $\Fout_\FAA (s, \sigma_\FAA) \neq \emptyset$. By (1c), there is $\sigma_\FAG \in \FJA_\FAG$ such that $\sigma_\FAA \subseteq \sigma_\FAG$ and $\Fout_\FAG (s, \sigma_\FAG) \neq \emptyset$. By (1b), $\sigma_\FAG \in \Fav_\FAG (s)$. By (1a), $\sigma_\FAA \in \Fav_\FAA (s)$.

\medskip

(2) $\Rightarrow$ (1)

Assume (2). We only need to show (1a).
Let $\FAA \subseteq \FAG$ and $s \in \FST$.

Assume $\sigma_\FAA \in \Fav_\FAA (s)$. By (2b), $\Fout_\FAA (s,\sigma_\FAA) \neq \emptyset$. By (2a), $\Fout_\FAG (s,\sigma_\FAG) \neq \emptyset$ for some $\sigma_\FAG \in \FJA_\FAG$ such that $\sigma_\FAA \subseteq \sigma_\FAG$. By (2b), $\sigma_\FAG \in \Fav_\FAG (s)$. Then $\sigma_\FAA \in \Fav_\FAG (s)|_\FAA$.

Assume $\sigma_\FAA \in \Fav_\FAG (s)|_\FAA$. Then $\sigma_\FAA \subseteq \sigma_\FAG$ for some $\sigma_\FAG \in \Fav_\FAG (s)$. By (2b), $\Fout_\FAG (s,\sigma_\FAG) \neq \emptyset$. By (2a), $\Fout_\FAA (s,\sigma_\FAA) \neq \emptyset$. By (2b), $\sigma_\FAA \in \Fav_\FAA (s)$.

\end{proof}

By this fact, availability functions in general concurrent game models are also coincident with our understanding of available joint actions given before.

The following fact gives two equivalent sets of constraints on availability functions in general concurrent game models, from where we can see that available joint actions in general concurrent game models are \emph{conditional}:

\begin{fact}
\label{fact:Characterization of available joint action functions of general concurrent game models}

Let $\MM = (\FST, \FAC, \{\Fav_\FAA \mid \FAA \subseteq \FAG\}, \{\Fout_\FAA \mid \FAA \subseteq \FAG\}, \Flab)$ be an abstract multi-agent action model.
Then, the following three conditions are equivalent:
\begin{enumerate}[label=(\arabic*),leftmargin=3.33em]

\item

for every $\FAA \subseteq \FAG$ and $s \in \FST$, $\Fav_\FAA (s) = \Fav_\FAG (s)|_\FAA$.

\item

for every $\FAA, \FBB \subseteq \FAG$ such that $\FAA \cap \FBB = \emptyset$, and $s \in \FST$:
\begin{enumerate}

\item 

for every $\sigma_{\FAA \cup \FBB} \in \Fav_{\FAA \cup \FBB} (s)$, $\sigma_{\FAA \cup \FBB}|_\FAA \in \Fav_\FAA (s)$;

\item 

for every $\sigma_\FAA \in \Fav_\FAA (s)$, there is $\sigma_\FBB \in \Fav_\FBB (s)$ such that $\sigma_\FAA \cup \sigma_\FBB \in \Fav_{\FAA \cup \FBB} (s)$.

\end{enumerate}

\item

for every $\FAA \subseteq \FAG$ and $s \in \FST$:
\begin{enumerate}

\item 

for every $\sigma_\FAG \in \Fav_\FAG (s)$, $\sigma_\FAG|_\FAA \in \Fav_\FAA (s)$;

\item 

for every $\sigma_\FAA \in \Fav_\FAA (s)$, there is $\sigma_\FAAb \in \Fav_{\FAAb} (s)$ such that $\sigma_\FAA \cup \sigma_\FAAb \in \Fav_\FAG (s)$.

\end{enumerate}

\end{enumerate}

\end{fact}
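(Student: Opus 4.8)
The plan is to prove the cycle $(1) \Rightarrow (2) \Rightarrow (3) \Rightarrow (1)$, mirroring the structure of the proof of Fact \ref{fact:Characterization of available joint action functions of concurrent game models}. The guiding observation I would keep in mind throughout is that $\Fav_\FAG (s)|_\FAA$ is, by definition, the \emph{image} of $\Fav_\FAG (s)$ under restriction to $\FAA$, hence an existentially defined set. This is exactly why condition (1) pairs naturally with the existential extension clauses (2b) and (3b), in contrast with the universal merge clause governing the unconditional case.

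For $(1) \Rightarrow (2)$, I would fix disjoint $\FAA, \FBB$ and a state $s$ and use (1) throughout to rewrite every available joint action as a restriction of some available action profile. For (2a): if $\sigma_{\FAA \cup \FBB} \in \Fav_{\FAA \cup \FBB} (s) = \Fav_\FAG (s)|_{\FAA \cup \FBB}$, I write it as $\sigma_\FAG|_{\FAA \cup \FBB}$ and note that its further restriction to $\FAA$ is $\sigma_\FAG|_\FAA \in \Fav_\FAG (s)|_\FAA = \Fav_\FAA (s)$. For (2b): given $\sigma_\FAA = \sigma_\FAG|_\FAA$ with $\sigma_\FAG \in \Fav_\FAG (s)$, I take the witness $\sigma_\FBB := \sigma_\FAG|_\FBB \in \Fav_\FBB (s)$; the only algebraic fact needed is $\sigma_\FAG|_\FAA \cup \sigma_\FAG|_\FBB = \sigma_\FAG|_{\FAA \cup \FBB}$, which holds because $\FAA \cap \FBB = \emptyset$, and this last restriction again lies in $\Fav_\FAG (s)|_{\FAA \cup \FBB} = \Fav_{\FAA \cup \FBB} (s)$.

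The step $(2) \Rightarrow (3)$ is pure specialization: instantiating (2) at the disjoint pair $\FAA, \FAAb$ (with $\FAA \cup \FAAb = \FAG$) turns (2a) into (3a) and (2b) into (3b). For $(3) \Rightarrow (1)$, I would fix $\FAA$ and $s$ and prove the two inclusions separately. The inclusion $\Fav_\FAA (s) \subseteq \Fav_\FAG (s)|_\FAA$ follows from (3b): any $\sigma_\FAA$ extends to some $\sigma_\FAA \cup \sigma_\FAAb \in \Fav_\FAG (s)$ whose restriction to $\FAA$ is $\sigma_\FAA$, using $(\sigma_\FAA \cup \sigma_\FAAb)|_\FAA = \sigma_\FAA$ since $\FAA \cap \FAAb = \emptyset$. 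The reverse inclusion $\Fav_\FAG (s)|_\FAA \subseteq \Fav_\FAA (s)$ follows from (3a), since every element of the left-hand side is $\sigma_\FAG|_\FAA$ for some $\sigma_\FAG \in \Fav_\FAG (s)$.

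I expect no serious obstacle; the work is routine bookkeeping with restrictions and disjointness. The one point genuinely worth checking---and the main difference from the serial, unconditional setting---is that general concurrent game models need not be serial, so I cannot assume $\Fav_\emptyset (s) = \{\emptyset\}$; indeed $\Fav_\FAG (s)$ may be empty, in which case $\Fav_\FAG (s)|_\FAA = \emptyset$ as well. I would therefore verify that the arguments go through uniformly for $\FAA = \emptyset$, for $\FAA = \FAG$, and when $\Fav_\FAG (s) = \emptyset$, which they do, since each inclusion is read off directly from the image definition of restriction without any case split. This is also why the inductive, layer-by-layer recombination needed for $(3) \Rightarrow (1)$ in Fact \ref{fact:Characterization of available joint action functions of concurrent game models} disappears here: the existential form of the extension clause (3b) matches the definition of $\Fav_\FAG (s)|_\FAA$ on the nose.
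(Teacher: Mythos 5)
Your proposal is correct and follows essentially the same route as the paper's proof: both directions of $(1)\Rightarrow(2)$ are obtained by lifting to an available action profile and restricting, $(2)\Rightarrow(3)$ is specialization, and $(3)\Rightarrow(1)$ is proved by the same two inclusions using (3b) and (3a) respectively. Your added observations --- that the existential form of $\Fav_\FAG(s)|_\FAA$ makes the edge cases ($\FAA=\emptyset$, $\Fav_\FAG(s)=\emptyset$) and the layer-by-layer recombination of the unconditional case unnecessary --- are accurate and, if anything, make the argument more transparent than the paper's.
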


\begin{proof}
~

(1) $\Rightarrow$ (2)

Assume (1). Let $\FAA, \FBB \subseteq \FAG$ be such that $\FAA \cap \FBB = \emptyset$, and $s \in \FST$.

First, we show (2a).
Let $\sigma_{\FAA \cup \FBB} \in \Fav_{\FAA \cup \FBB} (s)$.
By (1), there is $\FAja{\FAG} \in \FAajas{\FAG}$ such that $\sigma_{\FAA \cup \FBB} \subseteq \FAja{\FAG}$. By (1), $\FAja{\FAG}|_\FAA \in \FAajas{\FAA}$. Note $\FAja{\FAG}|_\FAA = \FAja{\FAA \cup \FBB}|_\FAA$. Then $\FAja{\FAA \cup \FBB}|_\FAA \in \FAajas{\FAA}$.

Second, we show (2b).
Let $\FAja{\FAA} \in \FAajas{\FAA}$. We want to show that there is $\sigma_\FBB \in \Fav_\FBB (s)$ such that $\sigma_\FAA \cup \sigma_\FBB \in \Fav_{\FAA \cup \FBB} (s)$.
By (1), there is $\FAja{\FAG} \in \FAajas{\FAG}$ such that $\sigma_\FAA \subseteq \FAja{\FAG}$. By (1), $\FAja{\FAG}|_\FBB \in \FAajas{\FBB}$ and $\FAja{\FAG}|_{\FAA \cup \FBB} \in \FAajas{\FAA \cup \FBB}$. Note $\FAja{\FAG}|_{\FAA \cup \FBB} = \FAja{\FAA} \cup \FAja{\FAG}|_\FBB$. Then, $\FAja{\FAA} \cup \FAja{\FAG}|_\FBB \in \FAajas{\FAA \cup \FBB}$.

(2) $\Rightarrow$ (3)

It is easy to see that (3) is a special case of (2).

(3) $\Rightarrow$ (1)

Assume (3). Let $\FAA \subseteq \FAG$ and $s \in \FST$.

Let $\FAja{\FAA} \in \FAajas{\FAA}$. By (3b), there is $\FAja{\FAG} \in \FAajas{\FAG}$ such that $\FAja{\FAA} \subseteq \FAja{\FAG}$. Then $\FAja{\FAA} \in \FAajas{\FAG}|_\FAA$.

Let $\FAja{\FAA} \in \FAajas{\FAG}|_\FAA$. Then there is $\FAja{\FAG} \in \FAajas{\FAG}$ such that $\FAja{\FAA} \subseteq \FAja{\FAG}$. By (3a), $\FAja{\FAG}|_\FAA \in \FAajas{\FAA}$, that is, $\FAja{\FAA} \in \FAajas{\FAA}$.

\end{proof}

From the facts \ref{fact:Characterization of available joint action functions of concurrent game models} and \ref{fact:Characterization of available joint action functions of general concurrent game models}, we can see that in concurrent game models, for all $\sigma_\FAA \in \Fav_\FAA (s)$, for all $\sigma_\FAAb \in \Fav_{\FAAb} (s)$, $\sigma_\FAA \cup \sigma_\FAAb \in \Fav_\FAG (s)$,
while in general concurrent game models, for all $\sigma_\FAA \in \Fav_\FAA (s)$, there is $\sigma_\FAAb \in \Fav_{\FAAb} (s)$ such that $\sigma_\FAA \cup \sigma_\FAAb \in \Fav_\FAG (s)$.


What follows is an example of how scenarios are represented by pointed general concurrent game models, which differs from Example \ref{example:one mask} in some important respects.

\begin{example}
\label{example:one mask}
Adam and Bob work in a closed gas laboratory when an earthquake occurs. They find that the door cannot be opened.

One toxic gas container is damaged and \emph{might} leak soon. Unfortunately, only one gas mask is available in the laboratory.

Adam and Bob have two available actions in this situation: \emph{doing nothing} and \emph{wearing a gas mask}. If they both do nothing, they will both die or survive. If Adam does nothing and Bob wears the gas mask, Adam will die or not, and Bob will survive. If Adam wears the gas mask and Bob does nothing, Adam will survive, and Bob will die or not.

We suppose that one can only do nothing once he wears a gas mask; Adam and Bob have no available joint actions once one of them dies.
This situation can be represented by the pointed general concurrent game model depicted in Figure \ref{figure:one mask}.
\end{example}

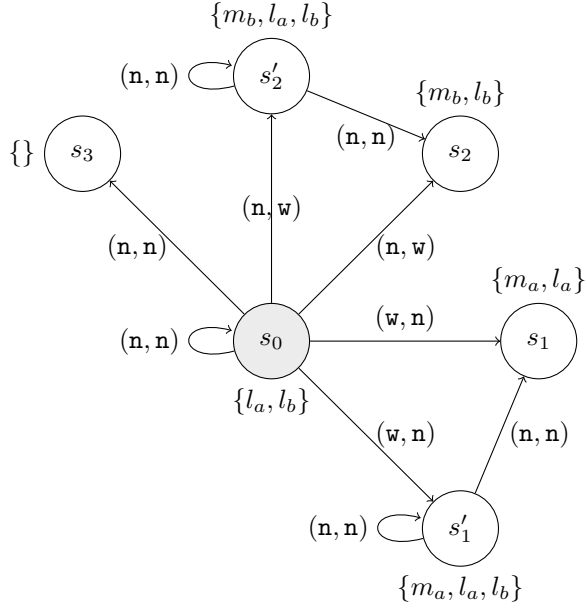
\begin{figure}
\begin{center}
\begin{tikzpicture}
[
->=stealth,
scale=1,
every node/.style={transform shape},
]

\tikzstyle{every state}=[minimum size=10mm]

\node[state,fill=gray!15] (s-0) {$s_0$};
\node[state,position=0:{25mm} from s-0] (s-1) {$s_1$};
\node[state,position=45:{25mm} from s-0] (s-2) {$s_2$};
\node[state,position=135:{25mm} from s-0] (s-3) {$s_3$};
\node[state,position=315:{25mm} from s-0] (sp-1) {$s'_1$};
\node[state,position=90:{25mm} from s-0] (sp-2) {$s'_2$};

\node[below=5mm] (a-s-0) at (s-0) {$\{l_a,l_b\}$};
\node[above=5mm] (a-s-1) at (s-1) {$\{m_a,l_a\}$};
\node[above=5mm] (a-s-2) at (s-2) {$\{m_b,l_b\}$};
\node[left=5mm] (a-s-3) at (s-3) {$\{\}$};
\node[below=5mm] (a-sp-1) at (sp-1) {$\{m_a,l_a,l_b\}$};
\node[above=5mm] (a-sp-2) at (sp-2) {$\{m_b,l_a,l_b\}$};

\path
(s-0) edge [above] node {$\mathtt{(w,n)}$} (s-1)
(s-0) edge [right] node {$\mathtt{(w,n)}$} (sp-1)
(s-0) edge [right] node {$\mathtt{(n,w)}$} (s-2)
(s-0) edge [midway] node {$\mathtt{(n,w)}$} (sp-2)
(s-0) edge [left] node {$\mathtt{(n,n)}$} (s-3)
(s-0) edge [loop left] node {$\mathtt{(n,n)}$} (s-0)
(sp-1) edge [right] node {$\mathtt{(n,n)}$} (s-1)
(sp-1) edge [loop left] node {$\mathtt{(n,n)}$} (sp-1)
(sp-2) edge [below] node {$\mathtt{(n,n)}$} (s-2)
(sp-2) edge [loop left] node {$\mathtt{(n,n)}$} (sp-2)
;

\end{tikzpicture}

\caption{
This figure indicates a pointed general concurrent game model $(\MM,s_0)$ for the situation of Example \ref{example:one mask}.
Again, here: $m_a, m_b, l_a$ and $l_b$ respectively express the propositions \emph{Adam is wearing a gas mask}, \emph{Bob is wearing a gas mask}, \emph{Adam is alive} and \emph{Bob is alive}; $\mathtt{n}$ and $\mathtt{w}$ respectively represent the actions \emph{doing nothing} and \emph{wearing a mask}; an arrow from a state $x$ to a state $y$ labeled with an action profile $\alpha$, indicates that $y$ is a possible outcome state of performing $\alpha$ at $x$.
Note that at states $s_1, s_2$ and $s_3$, neither Bob nor Adam has an available action. So, this general concurrent game model is not serial.
Note that this general concurrent game model does not meet the independence of agents: the action $\mathtt{w}$ is available for Adam and also available for Bob at $s_0$, but the action $(\mathtt{w},\mathtt{w})$ is not available for Adam and Bob as a coalition.
Note that this general concurrent game model is not deterministic: none of the action profiles $(\mathtt{w},\mathtt{n}), (\mathtt{n},\mathtt{w})$ and $(\mathtt{n},\mathtt{n})$ has a unique outcome state at $s_0$.
}

\label{figure:one mask}

\end{center}
\end{figure}

\subsection{Language and semantics}

In $\FMCL$, the formula $\Fclo{\FAA} \phi$ intuitively indicates that \emph{some conditionally available joint action of $\FAA$ ensures $\phi$}, and $\Fclod{\FAA} \phi$ intuitively indicates \emph{every conditionally available joint action of $\FAA$ enables $\phi$}.

Note that $\Box \phi$ is defined as $\Fclo{\emptyset} \top \rightarrow \Fclo{\emptyset} \phi$ and $\Diamond \phi$ is defined as $\Fclo{\emptyset} \top \land \Fclod{\emptyset} \phi$.
In general concurrent game models, the empty action is available for the empty coalition if and only if it has an outcome state. Consequently:

\medskip

\begin{tabular}{lll}
$\MM, s \Vdash \Box \phi$ & $\Leftrightarrow$ & \parbox[t]{28em}{for all $t \in \Fout_\emptyset (s, \emptyset)$, $\MM, t \Vdash \phi$} \\
$\MM, s \Vdash \Diamond \phi$ & $\Leftrightarrow$ & \parbox[t]{28em}{there is $t \in \Fout_\emptyset (s, \emptyset)$ such that $\MM, t \Vdash \phi$}
\end{tabular}

\medskip

We use $\models_\FMCL \phi$ to indicate that $\phi$ is $\FMCL$-\emph{valid}: $\phi$ is true at all pointed general concurrent game models. We often drop ``$\FMCL$'' when the contexts are clear.

\medskip

We briefly discuss some differences between $\FCL$-validity and $\FMCL$-validity. 

\begin{fact}
$(\Fclo{\FAA} \phi \land \Fclo{\FBB} \psi) \rightarrow \Fclo{\FAA \cup \FBB} (\phi \land \psi)$, where $\FAA \cap \FBB = \emptyset$, is $\FCL$-valid but not $\FMCL$-valid.
\end{fact}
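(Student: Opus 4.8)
The claim has two parts: that the independence-of-agents formula is $\FCL$-valid, and that it fails on some general concurrent game model. The plan is to handle each part separately, since $\FCL$-validity is a semantic soundness check while $\FMCL$-invalidity requires constructing a counterexample.

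For the $\FCL$-validity direction, I would argue directly from the semantics. Suppose $\MM,s \Vdash \Fclo{\FAA}\phi \land \Fclo{\FBB}\psi$ in a concurrent game model, with $\FAA \cap \FBB = \emptyset$. Then there is $\sigma_\FAA \in \Fav_\FAA(s)$ ensuring $\phi$ and $\sigma_\FBB \in \Fav_\FBB(s)$ ensuring $\psi$. By the independence of agents available in concurrent game models (condition (2c) of Fact \ref{fact:Characterization of available joint action functions of concurrent game models}), $\sigma_\FAA \cup \sigma_\FBB \in \Fav_{\FAA \cup \FBB}(s)$. The remaining work is to show this merged joint action ensures $\phi \land \psi$, i.e.\ that every $t \in \Fout_{\FAA \cup \FBB}(s, \sigma_\FAA \cup \sigma_\FBB)$ satisfies both. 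Here I would use the outcome condition (4) of Definition \ref{definition:Concurrent game models}: every outcome of the merged action is an outcome of some extending action profile $\sigma_\FAG$, and since $\sigma_\FAA \subseteq \sigma_\FAG$ the state $t$ lies in $\Fout_\FAA(s,\sigma_\FAA)$ (so $t \Vdash \phi$), and symmetrically $t \in \Fout_\FBB(s,\sigma_\FBB)$ (so $t \Vdash \psi$). This gives $\MM,s \Vdash \Fclo{\FAA \cup \FBB}(\phi \land \psi)$. Alternatively, since the formula is literally axiom $\mathtt{A}\text{-}\mathtt{IA}$ of the sound-and-complete system in Definition \ref{definition:An axiomatic system for CL}, $\FCL$-validity is immediate from soundness; I would likely just cite this and give the semantic argument as confirmation.

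For the $\FMCL$-invalidity direction, the natural counterexample is already in hand: the gas-mask model $(\MM, s_0)$ of Example \ref{example:one mask} and Figure \ref{figure:one mask}. The remark accompanying that figure notes exactly that $\mathtt{w}$ is available for Adam and for Bob at $s_0$, but $(\mathtt{w},\mathtt{w})$ is not available for the coalition $\{a,b\}$. I would instantiate $\FAA = \{a\}$, $\FBB = \{b\}$, and choose atomic witnesses from the labels. Taking $\phi = m_a$ and $\psi = m_b$: Adam's action $\mathtt{w}$ ensures $m_a$ (its outcomes $s_1, s'_1$ both carry $m_a$), so $\MM,s_0 \Vdash \Fclo{a} m_a$; symmetrically $\MM,s_0 \Vdash \Fclo{b} m_b$. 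But for the conjunction to hold we would need an available joint action of $\{a,b\}$ all of whose outcomes satisfy $m_a \land m_b$, and the only candidate would be $(\mathtt{w},\mathtt{w})$, which is unavailable. Hence $\MM,s_0 \not\Vdash \Fclo{\{a,b\}}(m_a \land m_b)$, so the implication fails.

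The main obstacle is the invalidity half: I must verify carefully that no available joint action of $\{a,b\}$ at $s_0$ ensures $m_a \land m_b$, not merely that the obvious candidate $(\mathtt{w},\mathtt{w})$ is absent. This means checking every element of $\Fav_{\{a,b\}}(s_0)$, which by the general-concurrent-game-model semantics consists of the action profiles with a nonempty outcome set at $s_0$, namely $(\mathtt{w},\mathtt{n})$, $(\mathtt{n},\mathtt{w})$, and $(\mathtt{n},\mathtt{n})$. For each I would confirm that some outcome state fails $m_a \land m_b$: e.g.\ $(\mathtt{w},\mathtt{n})$ has outcome $s'_1$ labeled $\{m_a, l_a, l_b\}$, where $m_b$ is false. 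Once this finite case check is complete, the two halves together establish the claim.
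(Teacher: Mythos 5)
Your proof is correct, and it supplies something the paper itself does not: the paper states this fact without a proof environment, offering only the informal remark that the formula ``is intuitively related to the independence of agents'' and is therefore expected to fail once that assumption is dropped. Your $\FCL$-validity half is fine either way you run it --- citing soundness of the system in which the formula is axiom $\mathtt{A}\text{-}\mathtt{IA}$ is the one-line route, and your direct semantic argument (merge the two available joint actions via condition (2c) of Fact~\ref{fact:Characterization of available joint action functions of concurrent game models}, then push any outcome $t$ of the merged action down to $\Fout_\FAA(s,\sigma_\FAA)$ and $\Fout_\FBB(s,\sigma_\FBB)$ via the outcome condition) is a correct confirmation. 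Your $\FMCL$-invalidity half is the part that genuinely needs care, and you handle it properly: you pick the paper's own one-mask model, instantiate $\phi=m_a$, $\psi=m_b$, and --- crucially --- do not stop at observing that $(\mathtt{w},\mathtt{w})$ is unavailable, but check all three available joint actions of $\{a,b\}$ at $s_0$ and exhibit a failing outcome for each ($s_1$ or $s'_1$ lacks $m_b$ for $(\mathtt{w},\mathtt{n})$, $s_2$ or $s'_2$ lacks $m_a$ for $(\mathtt{n},\mathtt{w})$, and $s_3$ lacks both for $(\mathtt{n},\mathtt{n})$). That exhaustive check is exactly what the paper's intuitive discussion elides, so your writeup is strictly more rigorous than the source; the only thing the paper's presentation buys over yours is brevity.
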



This formula is intuitively related to the independence of agents. Since models of $\FMCL$ might not satisfy this assumption, this fact is expected.

As mentioned above, available actions in $\FCL$ mean that they are unconditionally performable.
Given this understanding, the $\FCL$-validity of $(\Fclo{\FAA} \phi \land \Fclo{\FBB} \psi) \rightarrow \Fclo{\FAA \cup \FBB} (\phi \land \psi)$ becomes clear:
suppose that the coalition $\FAA$ has an absolutely available joint action $\sigma_\FAA$ to ensure $\phi$, and the coalition $\FBB$ has an absolutely available joint action $\sigma_\FBB$ to ensure $\psi$;
then, $\sigma_\FAA \cup \sigma_\FBB$ is absolutely available for $\FAA \cup \FBB$, which ensures $\phi$ and $\psi$.

In $\FMCL$, available actions mean that they are conditionally performable. Given this understanding, the $\FMCL$-invalidity of $(\Fclo{\FAA} \phi \land \Fclo{\FBB} \psi) \rightarrow \Fclo{\FAA \cup \FBB} (\phi \land \psi)$ is also understandable:
suppose that the coalition $\FAA$ has only one conditionally available joint action $\sigma_\FAA$ to ensure $\phi$, the coalition $\FBB$ has only one conditionally available joint action $\sigma_\FBB$ to ensure $\psi$, but the availability of $\sigma_\FAA$ and $\sigma_\FBB$ is not conditional on each other and $\sigma_\FAA \cup \sigma_\FBB$ is not available for $\FAA \cup \FBB$; thus, $\FAA \cup \FBB$ is unable to ensure $\phi$ and $\psi$.

\begin{fact}
$\neg \Fclo{\emptyset} \neg \phi \rightarrow \Fclo{\FAG} \phi$ and $\Fclo{\FAG} \phi \lor \Fclo{\FAG} \neg \phi$ are $\FCL$-valid but not $\FMCL$-valid.
\end{fact}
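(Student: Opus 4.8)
The plan is to split the statement into two halves, $\FCL$-validity and $\FMCL$-invalidity, and to treat both displayed formulas together, since on the $\FMCL$ side a single model will refute both. For the $\FCL$-validity one could simply note that $\neg \Fclo{\emptyset} \neg \phi \rightarrow \Fclo{\FAG} \phi$ is literally the axiom $\mathtt{A}\text{-}\mathtt{Max}$, and that $\Fclo{\FAG} \phi \lor \Fclo{\FAG} \neg \phi$ is derivable from $\mathtt{A}\text{-}\mathtt{Ser}$ and $\mathtt{A}\text{-}\mathtt{Det}$ (instantiate $\mathtt{A}\text{-}\mathtt{Det}$ with $\FAA = \FAG$ and $\psi = \neg\phi$, and use $\Fclo{\FAG}\top$), so both are $\FCL$-valid by soundness. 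I would nonetheless prefer to give the direct semantic argument, which is more transparent and pinpoints exactly which assumptions are used.

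For that semantic argument I would fix a pointed concurrent game model $(\MM,s)$ and invoke seriality (constraints (1)--(2) of concurrent game models) and determinism (constraint (3)). For the disjunction: seriality yields some $\sigma_\FAG \in \Fav_\FAG(s)$, and determinism makes $\Fout_\FAG(s,\sigma_\FAG)$ a singleton $\{t\}$; since $\MM,t \Vdash \phi$ or $\MM,t \Vdash \neg\phi$, the action $\sigma_\FAG$ witnesses one disjunct. For the implication: assuming $\MM,s \Vdash \neg\Fclo{\emptyset}\neg\phi$, and using that $\emptyset \in \Fav_\emptyset(s)$ always holds while $\Fout_\emptyset(s,\emptyset) = \bigcup\{\Fout_\FAG(s,\sigma_\FAG) \mid \sigma_\FAG \in \FJA_\FAG\}$ is exactly the set of successors of $s$, I obtain a successor $t$ with $\MM,t \Vdash \phi$; this $t$ lies in $\Fout_\FAG(s,\sigma_\FAG)$ for some profile $\sigma_\FAG$, which is therefore available (its outcome set is nonempty), and by determinism $\Fout_\FAG(s,\sigma_\FAG) = \{t\}$, so $\sigma_\FAG$ ensures $\phi$ and $\MM,s \Vdash \Fclo{\FAG}\phi$.

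For the $\FMCL$-invalidity I would exhibit one nondeterministic model refuting both formulas at a single point, taking $\phi = p$. Let $\FAC = \{\mathtt{n}\}$, so that for any $\FAG$ there is a single profile $\sigma_\FAG$; let $\FST = \{s, t_1, t_2\}$ with $\Fout_\FAG(s,\sigma_\FAG) = \{t_1, t_2\}$ and $\Fout_\FAG(t_i,\sigma_\FAG) = \emptyset$, and set $p \in \Flab(t_1)$, $p \notin \Flab(t_2)$. Defining $\Fout_\FAA$, $\Fav_\FAG$, and $\Fav_\FAA$ from $\Fout_\FAG$ by the clauses of Definition \ref{definition:General concurrent game models} automatically produces a general concurrent game model, so the only verification needed is that these induced functions indeed satisfy those clauses; this is the one place I expect to be careful, in particular in computing $\Fav_\emptyset(s) = \{\emptyset\}$ and $\Fout_\emptyset(s,\emptyset) = \{t_1, t_2\}$. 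At $s$ the unique available profile has outcomes $t_1 \Vdash p$ and $t_2 \not\Vdash p$, so neither $\Fclo{\FAG} p$ nor $\Fclo{\FAG} \neg p$ holds, refuting the disjunction; and since $t_1$ is a successor satisfying $p$ we get $\MM,s \Vdash \neg\Fclo{\emptyset}\neg p$ while $\Fclo{\FAG} p$ fails, refuting the implication. The conceptual point the example isolates, and the reason determinism is precisely what is needed, is that $\neg\Fclo{\emptyset}\neg\phi$ only asserts that \emph{some} successor satisfies $\phi$, whereas $\Fclo{\FAG}\phi$ demands a profile \emph{all} of whose outcomes satisfy $\phi$; once determinism is dropped, these two come apart.
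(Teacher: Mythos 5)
Your proof is correct and complete; note, though, that the paper itself offers essentially no proof of this fact. It merely remarks that the two formulas are ``intuitively related to determinism'' and that the fact is therefore expected, and the only explicit counterexample it supplies (as an aside) is a pointed general concurrent game model in which no coalition has any available joint action, so that $\neg\Fclo{\emptyset}\neg\phi$ is vacuously true while $\Fclo{\FAG}\phi$ and $\Fclo{\FAG}\neg\phi$ are vacuously false. Your route differs in two respects, both improvements. First, you actually establish the $\FCL$-validity half, either via soundness ($\mathtt{A}\text{-}\mathtt{Max}$ is an axiom, and the disjunction is derivable from $\mathtt{A}\text{-}\mathtt{Ser}$ and $\mathtt{A}\text{-}\mathtt{Det}$ together with $\mathtt{R}\text{-}\mathtt{RE}$) or by the direct semantic argument, which correctly isolates seriality and determinism as the assumptions doing the work. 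Second, your countermodel is serial at the root but genuinely non-deterministic ($\Fout_\FAG(s,\sigma_\FAG)=\{t_1,t_2\}$ with $p$ true at exactly one outcome), so it attributes the failure to the loss of determinism --- matching the paper's own framing of this fact --- whereas the paper's explicit model instead exploits the loss of seriality; the paper acknowledges both routes informally. Your observation that defining $\Fav_\FAG$, $\Fav_\FAA$ and $\Fout_\FAA$ from $\Fout_\FAG$ via the defining clauses of general concurrent game models automatically yields a legitimate model is exactly right, and since the fact is schematic in $\phi$, refuting the single instance $\phi=p$ (for both formulas at once) is all that is required.
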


The two formulas are intuitively related to determinism. Since models of $\FMCL$ might not be deterministic, this fact is expected.

Note that relaxing seriality also causes the two formulas to be invalid.
Consider a pointed general concurrent game model $(\MM,s)$ where every coalition has no available joint action. Then, $\neg \Fclo{\emptyset} \neg \phi$ is trivially true, and $\Fclo{\FAG} \phi$ and $\Fclo{\FAG} \neg \phi$ are trivially false at $(\MM,s)$. Then, $\neg \Fclo{\emptyset} \neg \phi \rightarrow \Fclo{\FAG} \phi$ and $\Fclo{\FAG} \phi \lor \Fclo{\FAG} \neg \phi$ are trivially false at $(\MM,s)$.

\subsection{Axiomatization}

\begin{definition}[An axiomatic system for $\FMCL$]
\label{definition:An axiomatic system for MCL}
~

\noindent \emph{Axioms}:

\medskip

\begin{tabular}{rl}
Tautologies ($\mathtt{A}\text{-}\mathtt{Tau}$): & all propositional tautologies \vspace{5pt} \\
Monotonicity of goals ($\mathtt{A}\text{-}\mathtt{MG}$): & $\Fclo{\emptyset} (\phi \rightarrow \psi) \rightarrow (\Fclo{\FAA} \phi \rightarrow \Fclo{\FAA} \psi)$ \vspace{5pt} \\
Monotonicity of coalitions ($\mathtt{A}\text{-}\mathtt{MC}$): & $\Fclo{\FAA} \phi \rightarrow \Fclo{\FBB} \phi$, where $\FAA \subseteq \FBB$ \vspace{5pt} \\
Liveness ($\mathtt{A}\text{-}\mathtt{Live}$): & $\neg \Fclo{\FAA} \bot$
\end{tabular}

\medskip

\noindent \emph{Inference rules}:

\medskip

\begin{tabular}{rl}
Modus ponens ($\mathtt{R}\text{-}\mathtt{MP}$): & $\dfrac{\phi, \phi \rightarrow \psi}
{\psi}$ \vspace{5pt} \\
Conditional necessitation ($\mathtt{R}\text{-}\mathtt{CN}$): & $\dfrac{\phi}
{\Fclo{\FAA} \psi \rightarrow \Fclo{\emptyset} \phi}$
\end{tabular}

\end{definition}

We use $\vdash_\FMCL \phi$ to indicate $\phi$ is \Fdefs{derivable} in this system. We often drop ``$\FMCL$'' when contexts are clear.

The soundness and completeness of the system are shown in Section \ref{section:Completeness of MCL}.

The only difference between the axiomatic system for $\FCL$ given in Definition \ref{definition:Another axiomatic system for CL} and the axiomatic system for $\FMCL$ is that the following three axioms are missing in the latter, which are intuitively related to the three assumptions of concurrent game models missing in general concurrent game models\footnote{Is there a precise sense in which the three axioms are related to the three assumptions? We leave investigations of this issue for further work.}:

\medskip

\begin{tabular}{rl}
Independence of agents ($\mathtt{A}\text{-}\mathtt{IA}$): & $(\Fclo{\FAA} \phi \land \Fclo{\FBB} \psi) \rightarrow \Fclo{\FAA \cup \FBB} (\phi \land \psi)$, where $\FAA \cap \FBB = \emptyset$ \vspace{5pt} \\
Seriality ($\mathtt{A}\text{-}\mathtt{Ser}$): & $\Fclo{\FAA} \top$ \vspace{5pt} \\
Determinism ($\mathtt{A}\text{-}\mathtt{Det}$): & $\Fclo{\FAA} (\phi \lor \psi) \rightarrow (\Fclo{\FAA} \phi \lor \Fclo{\FAG} \psi)$
\end{tabular}

\subsection{$\FMCL$ seems a minimal coalition logic for reasoning about coalitional powers}

We closely look at the alternative definition of general concurrent game models, given in Fact \ref{definition:alternative definition of general concurrent game models}. It seems they are maximally general, given their intuitive understanding.

The constraint on the set of outcome functions indicates that outcome states of a joint action of a coalition $\FAA$ are determined by all action profiles extending $\sigma_\FAA$. This reflects the principle that the future is determined by all agents. It seems hard to relax this constraint.

The constraint on the set of availability functions exactly expresses our understanding of available joint actions: a joint action is available if and only if it has an outcome. As mentioned, only the direction from left to right matters for the logic. It seems hard to think that some available action does not have an outcome.

We closely look at the axiomatic system for $\FMCL$, except for the part of propositional logic. It seems that it cannot be easily weakened, given that it is for reasoning about coalitional powers.

The axiom of monotonicity of goals says: \emph{given that $\phi$ necessarily implies $\psi$, if a coalition can ensure $\phi$, then it can ensure $\psi$.}
This sounds very plausible.

The axiom of monotonicity of coalition says: \emph{whatever a coalition can achieve can be achieved by bigger coalitions.}
In some situations, more people seem to mean less abilities, as the English proverb ``\emph{too many cooks spoil the broth}'' indicates.
We want to point out that in these situations, abilities are not treated as \emph{ontic}.


The axiom of liveness indicates: \emph{no available joint action has no outcome}.
As mentioned, this is very plausible.

The rule of conditional necessitation \emph{should} hold. Here is why. Assume it does not hold. Then there are two formulas $\phi$ and $\psi$ and a coalition $\FAA$ meeting the following conditions: $\phi$ can never be false; in a situation, $\FAA$ has an available joint action to ensure $\psi$, but $\phi$ might be false in the next moment. This seems very strange.

\subsection{Can $\FCL$ simulate lack of seriality, independence of agents, or determinism?}

Previously, we mentioned that some situations, where seriality, independence of agents, or determinism does not hold, might be indirectly representable by pointed concurrent game models.
One might wonder whether $\FCL$ can simulate the lack of the three properties.
This is an involved issue, and a thorough investigation of it would deviate too much. Here we have a brief discussion about it.

To simulate the lack of seriality in $\FCL$, in the language, we can introduce a propostional constant $\mathtt{tm}$ saying \emph{this is a terminating state};
in the semantics, we can consider only concurrent game models meeting the condition that if $\mathtt{tm}$ is true at a state, then the state has only one successor, that is, itself.
Then, every non-serial model can be transformed into a serial one.
Intuitively, $\Fclo{\FAA} \phi$ in $\FMCL$ is equivalent to $\neg \mathtt{tm} \land \Fclo{\FAA} \phi$ in the variant of $\FCL$.
Note that the variant contains new valid formulas. For example, $(\phi \land \mathtt{tm}) \rightarrow \Fclo{\FAA} \phi$ is valid. This means we get a new logic.

Can we simulate the lack of independence of agents in $\FCL$? We do not think of a natural way.

To simulate the lack of determinism in $\FCL$, we can introduce a new agent $e$ representing the environment. Intuitively, $\Fclo{\FAA} \phi$ in $\FMCL$ is equivalent to $\Fclo{\FAA} \phi$ in the variant of $\FCL$.
This variant is essentially $\FCL$, although strictly speaking, it is a proper extension of $\FCL$.

Our tentative conclusion is this: maybe $\FCL$ can simulate the lack of the three properties, but it would not be trivial.

\section{Completeness of $\FMCL$}
\label{section:Completeness of MCL}

\subsection{Our approach}

In this section, we show the completeness of $\FMCL$ in four steps.

First, we show that every formula can be transformed to an equivalent formula in a normal form, that is a conjunction of some so-called \Fdefs{standard formulas}, which are disjunctions meeting some conditions.

Second, we show a \Fdefs{downward validity lemma} for standard formulas: \textit{for every standard formula $\phi$, if $\phi$ is valid, then a \Fdefs{validity-reduction-condition} of $\phi$ is met.} Here, the validity-reduction-condition of $\phi$ concerns the validity of some formulas with lower modal depth than $\phi$.
This step is crucial for the whole proof.
In this step, we use two important notions: \Fdefs{general game forms} and \Fdefs{grafted pointed general concurrent game models}.
 
Third, we show a \Fdefs{upward derivability lemma} for standard formulas: \textit{for every standard formula $\phi$, if a \Fdefs{derivability-reduction-condition} of $\phi$ is met, then $\phi$ is derivable.} Here, the derivability-reduction-condition of $\phi$ is the result of replacing \emph{validity} in the validity-reduction-condition of $\phi$ by \emph{derivability}.

Fourth, we show by induction that \emph{for every formula $\phi$, if $\phi$ is valid, then it is derivable}. The downward validity lemma and upward derivability lemma will let us go through the induction.

\subsection{Standard formulas and a normal form lemma}

A formula $\gamma$ is called an \Fdefs{elementary disjunction} if it is a disjunction of some literals in the classical propositional logic, and $\gamma$ is called an \Fdefs{elementary conjunction} if it is a conjunction of some literals in the classical propositional logic.

For every natural number $n$, $\FNI = \{x \in \mathbb{Z} \mid -n \leq x \leq -1\}$ is called a \Fdefs{set of negative indices}, and $\FPI = \{x \in \mathbb{Z} \mid 1 \leq x \leq n\}$ is called a \Fdefs{set of positive indices}.

\begin{definition}[Standard formulas]
\label{definition:Standard formulas}

Let $\gamma$ be an elementary disjunction, $\FNI$ be a set of negative indices and $\FPI$ be a nonempty set of positive indices.
A formula $\phi$ in $\Phi_{\FMCL}$ in the form of $\gamma \lor (\FBW_{i \in \FNI} \Fclo{\FAA_i} \phi_i \rightarrow \FBV_{j \in \FPI} \Fclo{\FBB_j} \psi_j)$ is called a \Fdefs{standard formula} with respect to $\gamma$, $\FNI$ and $\FPI$, if (1) if $\FNI \neq \emptyset$, then $\Fclo{\FAA_i} \phi_i = \Fclo{\emptyset} \top$ for some $i \in \FNI$, and (2) $\Fclo{\FBB_j} \psi_j = \Fclo{\FAG} \bot$ for some $j \in \FPI$.

\end{definition}

Note $\gamma \lor (\FBW_{i \in \FNI} \Fclo{\FAA_i} \phi_i \rightarrow \FBV_{j \in \FPI} \Fclo{\FBB_j} \psi_j)$ is equivalent to $\gamma \lor \FBV_{i \in \FNI} \neg \Fclo{\FAA_i} \phi_i \lor \FBV_{j \in \FPI} \Fclo{\FBB_j} \psi_j$, from where we can see why the elements of $\FNI$ are called \emph{negative} indices and the elements of $\FPI$ are called \emph{positive} indices.

\begin{example}[Standard formulas]
\label{example:Standard formulas}

Assume $\FAG = \{a,b\}$.
Let $\gamma = \bot$, $\FNI = \{-1, -2, -3\}$ and $\FPI = \{1,2,3\}$.
Then, $\bot \lor \big(
\big( \Fclo{\FAA_{-1}} \phi_{-1} \land \Fclo{\FAA_{-2}} \phi_{-2} \land \Fclo{\FAA_{-3}} \phi_{-3} \big) \rightarrow \big( \Fclo{\FBB_{1}} \psi_1 \lor \Fclo{\FBB_{2}} \psi_2 \lor \Fclo{\FBB_{3}} \psi_3 \big)
\big)$ is a standard formula with respect to $\gamma$, $\FNI$ and $\FPI$, where
$\FAA_{-1} = \{a\}$, $\phi_{-1} = p$, $\FAA_{-2} = \{b\}$, $\phi_{-2} = q$, $\FAA_{-3} = \emptyset$, $\phi_{-3} = \top$, $\FBB_{1} = \FAG$, $\psi_1 = p \land q$, $\FBB_{2} = \{a\}$, $\psi_2 = \neg p \lor q$, $\FBB_{3} = \FAG$, $\psi_3 = \bot$.

\end{example}

Let $\gamma \lor (\FBW_{i \in \FNI} \Fclo{\FAA_i} \phi_i \rightarrow \FBV_{j \in \FPI} \Fclo{\FBB_j} \psi_j)$ be a standard formula. Define:
\[\FNI_0 = \{i \in \FNI \mid \FAA_i = \emptyset\}\]
\[\phi_{\FNI_0} = \FBW \{\phi_i \mid i \in \FNI_0\}\]

Define the \Fdefs{modal depth} of formulas in $\Phi_\FMCL$ as usual.

\begin{lemma}[Normal form]
\label{lemma:normal-form}

For every $\phi \in \Phi_\FMCL$, there is $\phi'$ such that (1) $\vdash_\FMCL \phi \leftrightarrow \phi'$, (2) $\phi$ and $\phi'$ have the same modal depth, and (3) $\phi'$ is in the form of $\chi_0 \land \dots \land \chi_k$, where every $\chi_i$ is a standard formula.

\end{lemma}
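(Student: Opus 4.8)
The plan is to prove the normal form lemma by a double induction and a sequence of normalization steps, pushing an arbitrary formula into a conjunction of standard formulas while preserving both provable equivalence and modal depth.

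First I would handle the purely Boolean structure. Given $\phi \in \Phi_\FMCL$, I would treat each subformula of the shape $\Fclo{\FAA}\psi$ as an atomic ``modal literal'' and put $\phi$ into conjunctive normal form over the atomic propositions together with these modal literals, using only propositional reasoning (justified by $\mathtt{A}\text{-}\mathtt{Tau}$ and $\mathtt{R}\text{-}\mathtt{MP}$). This yields $\vdash_\FMCL \phi \leftrightarrow (\chi_0 \land \dots \land \chi_k)$ where each conjunct $\chi_i$ is a disjunction of literals, each literal being either a propositional literal, a positive modal literal $\Fclo{\FBB_j}\psi_j$, or a negated modal literal $\neg\Fclo{\FAA_i}\phi_i$. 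Collecting the propositional literals into an elementary disjunction $\gamma$, the negated modal literals into a block indexed by $\FNI$, and the positive modal literals into a block indexed by $\FPI$, and rewriting $\FBV_{i}\neg\Fclo{\FAA_i}\phi_i \lor \FBV_j \Fclo{\FBB_j}\psi_j$ back as an implication, each $\chi_i$ takes the shape $\gamma \lor (\FBW_{i \in \FNI}\Fclo{\FAA_i}\phi_i \rightarrow \FBV_{j \in \FPI}\Fclo{\FBB_j}\psi_j)$. Since this rewriting is purely propositional, it does not change modal depth.

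Next I would repair each conjunct so that it satisfies the two extra clauses in Definition \ref{definition:Standard formulas}. For clause (1), if $\FNI \neq \emptyset$ but no antecedent conjunct is literally $\Fclo{\emptyset}\top$, I would adjoin one: since $\vdash_\FMCL \Fclo{\emptyset}\top$ is \emph{not} available (seriality was dropped), I cannot simply delete the antecedent, so instead I conjoin $\Fclo{\emptyset}\top$ into the antecedent, which is a propositionally sound weakening of the whole implication only if $\Fclo{\emptyset}\top$ is already entailed by the other antecedents; the clean fix is rather to observe that one may always \emph{rename} indices and insert the trivially-true-guard conjunct $\Fclo{\emptyset}\top$ together with a compensating adjustment, or, more simply, when $\FNI=\emptyset$ the clause is vacuous and when $\FNI\neq\emptyset$ one replaces the conjunct by an equivalent one in which $\Fclo{\emptyset}\top$ appears, using that adding $\Fclo{\emptyset}\top$ as a disjunct on the consequent side and as a conjunct on the antecedent side yields a provably equivalent implication. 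For clause (2), I would always adjoin $\Fclo{\FAG}\bot$ as a disjunct to the consequent block; by $\mathtt{A}\text{-}\mathtt{Live}$ we have $\vdash_\FMCL \neg\Fclo{\FAG}\bot$, so $\Fclo{\FAG}\bot$ is provably false and adding it as a disjunct preserves provable equivalence, and this guarantees $\FPI \neq \emptyset$ as required. Both repairs are propositional or use $\mathtt{A}\text{-}\mathtt{Live}$, hence again preserve modal depth.

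Finally, I would apply the induction hypothesis to the immediate subformulas. Each $\phi_i$ and $\psi_j$ occurring under a modality has strictly smaller modal depth than $\phi$, so by the outer induction on modal depth I may assume they are already in normal form; but in fact for the lemma as stated I only need that the top-level shape is correct, so the induction is really just on the structure needed to produce $\gamma$, $\FNI$, $\FPI$. \textbf{The main obstacle} I anticipate is clause (1): because seriality is absent, $\Fclo{\emptyset}\top$ is not a theorem, so the usual trick of discarding or inserting a ``true'' guard in the antecedent is not automatically equivalence-preserving, and I must check carefully that inserting the canonical guard $\Fclo{\emptyset}\top$ can be done \emph{provably} — most likely by handling the cases $\FNI=\emptyset$ and $\FNI\neq\emptyset$ separately and, in the latter case, exhibiting the explicit propositional equivalence that legitimizes the insertion. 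The depth-preservation requirement (clause (2) of the lemma) is straightforward throughout, since every step is either propositional rewriting or an application of $\mathtt{A}\text{-}\mathtt{Live}$, neither of which introduces new modal operators or increases nesting.
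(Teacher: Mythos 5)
Your overall strategy (propositional CNF over modal literals, then repairing each conjunct to meet the two side conditions of Definition \ref{definition:Standard formulas}) matches the paper's, and your treatment of condition (2) via $\mathtt{A}\text{-}\mathtt{Live}$ is exactly right. But you leave the one nontrivial step open, and the one concrete mechanism you float for it is wrong. The missing fact is that $\Fclo{\FAA} \psi \rightarrow \Fclo{\emptyset} \top$ is \emph{derivable} in $\FMCL$: it is an instance of the rule $\mathtt{R}\text{-}\mathtt{CN}$ applied to the theorem $\top$. Hence, whenever $\FNI \neq \emptyset$, the antecedent $\FBW_{i \in \FNI} \Fclo{\FAA_i} \phi_i$ already provably entails $\Fclo{\emptyset} \top$, so conjoining $\Fclo{\emptyset} \top$ to it is provably equivalence-preserving. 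This is precisely the condition you yourself identify as necessary (``only if $\Fclo{\emptyset}\top$ is already entailed by the other antecedents''), but you never discharge it; note also that it is not a ``propositional equivalence'' as your closing paragraph suggests --- it genuinely needs the modal rule $\mathtt{R}\text{-}\mathtt{CN}$.

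Separately, your ``more simply'' fallback --- adding $\Fclo{\emptyset}\top$ as a conjunct of the antecedent \emph{and} as a disjunct of the consequent --- does not work: $(\Fclo{\emptyset}\top \land A) \rightarrow (B \lor \Fclo{\emptyset}\top)$ is a propositional tautology regardless of $A$ and $B$, so this transformation is not equivalence-preserving (and in any case Definition \ref{definition:Standard formulas} wants $\Fclo{\emptyset}\top$ only in the antecedent, with $\Fclo{\FAG}\bot$ as the distinguished disjunct of the consequent). With the $\mathtt{R}\text{-}\mathtt{CN}$ observation in place, your argument closes and coincides with the paper's, which justifies the lemma by exactly three facts: closure under propositional reasoning, the theorems $\neg\bot$ and $\neg\Fclo{\FAG}\bot$, and the derivability of $\Fclo{\FAA}\phi \rightarrow \Fclo{\emptyset}\top$.
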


This lemma follows from the following facts: $\FMCL$ is an extension of the classical propositional logic; $\neg \bot$ and $\neg \Fclo{\FAG} \bot$ are axioms of $\FMCL$; $\Fclo{\FAA} \phi \rightarrow \Fclo{\emptyset} \top$ is derivable in $\FMCL$.

\newcommand{\mij}{i\text{-}j}

\newcommand{\moneone}{-1\text{-}1}
\newcommand{\monetwo}{-1\text{-}2}
\newcommand{\monethree}{-1\text{-}3}

\newcommand{\mtwoone}{-2\text{-}1}
\newcommand{\mtwotwo}{-2\text{-}2}
\newcommand{\mtwothree}{-2\text{-}3}

\newcommand{\mthreeone}{-3\text{-}1}
\newcommand{\mthreetwo}{-3\text{-}2}
\newcommand{\mthreethree}{-3\text{-}3}

\subsection{Downward validity lemma of standard formulas}

\begin{definition}[General game forms]
\label{definition:General game forms}
A \Fdefs{general game form} is a tuple $\FF = (s_0, S, \FAC_0, \{\Fav^0_\FAA \mid \FAA \subseteq \FAG \}, \{\Fout^0_\FAA \mid \FAA \subseteq \FAG \})$, where:
\begin{itemize}

\item

$s_0$ is a state and $S$ is a nonempty set of states.

\item

$\FAC_0$ is a nonempty set of actions.

\item

the set $\{\Fav^0_\FAA \mid \FAA \subseteq \FAG \}$ meets the following conditions, where $\FJA^0_\FAA$ is the set of joint actions of $\FAA$ with respect to $\FAC_0$ for every $\FAA \subseteq \FAG$:
\begin{itemize}

\item

for every $\FAA \subseteq \FAG$, $\Fav^0_\FAA: \{s_0\} \rightarrow \mathcal{P}(\FJA^0_\FAA)$ is an \Fdefs{availability function} for $\FAA$ at $s_0$;

\item 

for every $\FAA \subseteq \FAG$, $\Fav^0_\FAA (s_0) = \Fav_\FAG (s_0)|_\FAA$.

\end{itemize}

\emph{We notice the domain of $\Fav^0_\FAA$ is a singleton. We do this to ease some things later.}

\item

the set $\{\Fout^0_\FAA \mid \FAA \subseteq \FAG \}$ meets the following conditions:
\begin{itemize}
    
\item

for every $\FAA \subseteq \FAG$, $\Fout^0_\FAA: \{s_0\} \times \FJA^0_\FAA \rightarrow \mathcal{P} (S)$ is an \Fdefs{outcome function} for $\FAA$ at $s_0$;

\item 

for every $\sigma_\FAG \in \FJA^0_\FAG$, $\sigma_\FAG \in \Fav^0_\FAG (s_0)$ if and only if $\Fout^0_\FAG (s_0, \sigma_\FAG) \neq \emptyset$;

\item 

for every $\FAA \subseteq \FAG$ and $\sigma_\FAA \in \FJA^0_\FAA$, $\Fout^0_\FAA (s_0, \sigma_\FAA) = \bigcup \{\Fout^0_\FAG (s_0, \sigma_\FAG) \mid \sigma_\FAG \in \FJA^0_\FAG \text{ and } \sigma_\FAA$ $\subseteq \sigma_\FAG\}$.

\end{itemize}

\end{itemize}

\end{definition}

Intuitively, a general game form is an \emph{one-round game} at a state: it specifies which joint actions of a coalition are available at the state and what are the outcomes of an available joint action at the state.

\begin{definition}[Grafted pointed general concurrent game models]
\label{definition:Grafted pointed general concurrent game models}
Let $I$ be a {nonempty} set of indices, and $\{(\MM_i,s_i) \mid i \in I\}$, where $\MM_i = (\FST_i, \FAC_i, \{\Fav^i_\FAA \mid \FAA \subseteq \FAG\}, \{\Fout^i_\FAA \mid \FAA \subseteq \FAG\}, \Flab_i)$, be a set of pointed general concurrent game models meeting the following conditions: all $\FST_i$ are pairwise disjoint, and all $\FAC_i$ are pairwise disjoint. For every $i \in I$ and $\FAA \subseteq \FAG$, we use $\FJA^i_\FAA$ to indicate the set of joint actions of $\FAA$ with respect to $\FAC_i$.

Let $\gamma$ be a satisfiable elementary conjunction.

Let $\FF = (s_0, S, \FAC_0, \{\Fav^0_\FAA \mid \FAA \subseteq \FAG \}, \{\Fout^0_\FAA \mid \FAA \subseteq \FAG \})$ be a general game form where: (1) $s_0$ is a state not in any $\MM_i$; (2) $S = \{s_i \mid i \in I\}$; (3) for every $i \in I$, $\FAC_0$ and $\FAC_i$ are disjoint. For every $\FAA \subseteq \FAG$, we use $\FJA^0_\FAA$ to indicate the set of joint actions of $\FAA$ with respect to $\FAC_0$.

Define a general concurrent game model $\MM = (\FST, \FAC, \{\Fav_\FAA \mid \FAA \subseteq \FAG\}, \{\Fout_\FAA \mid \FAA \subseteq \FAG\}, \Flab)$ as follows:
\begin{itemize}

\item

$\FST =\{s_0\} \cup \bigcup \{\FST_i \mid i \in I\}$.

\item

$\FAC =\FAC_{0} \cup \bigcup \{\FAC_i \mid i \in I\}$.

\item 

for every $\FAA \subseteq \FAG$ and $s \in \FST$:
\[
\Fav_\FAA (s) =
\begin{cases}
\Fav^i_\FAA (s) & \text{if } s \in \FST_i \text{ for some } i \in I \\
\Fav^0_\FAA (s_0) & \text{if } s = s_0
\end{cases}
\]

\item 

for every $\FAA \subseteq \FAG$, $s \in \FST$ and $\sigma_\FAA \in \FJA_\FAA$:
\[
\Fout_\FAA (s, \ja{\FAA}) =
\begin{cases}
\Fout^i_\FAA (s, \sigma_\FAA) & \text{if } s \in \FST_i \text{ and } \sigma_\FAA \in \FJA^i_\FAA \text{ for some } i \in I \\
\Fout^0_\FAA (s_0, \sigma_\FAA) & \text{if } s = s_0 \text{ and } \sigma_\FAA \in \FJA^0_\FAA \\
\emptyset & \text{otherwise}
\end{cases}
\]

\item 

for every $s \in \FST$:
\[
\Flab (s) =
\begin{cases}
\Flab_i (s) & \text{if } s \in \FST_i \text{ for some } i \in I \\
\{p \mid \text{$p$ is a conjunct of $\gamma$}\} & \text{if } s = s_0 \\
\end{cases}
\]

\end{itemize}

\noindent We call $(\MM, s_0)$ a pointed general concurrent game model \Fdefs{grafted} from $\{(\MM_i,s_i) \mid i \in I\}$, $\gamma$, and $\FF$.

\end{definition}

It is easy to check that $\MM$ is a general concurrent game model. Thus, the notion of \emph{grafted pointed general concurrent game models} is well-defined.

It can be easily shown that for every $i \in I$, the \emph{generated sub-model} of $\MM_i$ at $s_i$ is also the \emph{generated sub-model} of $\MM$ at $s_i$. Then, we can get for every $i \in I$, for every $\phi \in \Phi_\FMCL$, $\MM_i,s_i \Vdash \phi$ if and only if $\MM,s_i \Vdash \phi$.

\begin{example}[A grafted pointed general concurrent game model]

Assume $\FAG = \{a,b\}$.
Let $(\MM_1,s_1)$, where $\MM_1 = (\FST_1, \FAC_1, \{\Fav^1_\FAA \mid \FAA \subseteq \FAG\}, \{\Fout^1_\FAA \mid \FAA \subseteq \FAG\}, \Flab_1)$, be a pointed general concurrent game model, indicated by Figure \ref{figure:a pointed general concurrent game model (M1,s1)},
and $(\MM_2,s_2)$, where $\MM_2 = (\FST_2, \FAC_2, \{\Fav^2_\FAA \mid \FAA \subseteq \FAG\}, \{\Fout^2_\FAA \mid \FAA \subseteq \FAG\}, \Flab_2)$, be a pointed general concurrent game model, indicated by Figure \ref{figure:a pointed general concurrent game model (M2,s2)}.

Let $\gamma = \neg p \land q \land r$.

Let $\FF = (s_0, S, \FAC_0, \{\Fav^0_\FAA \mid \FAA \subseteq \FAG \}, \{\Fout^0_\FAA \mid \FAA \subseteq \FAG \})$ be a general game form, as indicated in Figure \ref{figure:a general game form}.

Figure \ref{figure:a grafted pointed general concurrent game model} illustrates the pointed general concurrent game model grafted from $\{(\MM_1,s_1), (\MM_2,s_2)\}$, $\gamma$ and $\FF$.

\end{example}


\begin{figure}
\begin{center}
\begin{tikzpicture}
[
->=stealth,
scale=1,
every node/.style={transform shape},
]
\tikzstyle{every state}=[minimum size=10mm]

\node[state,fill=gray!15] (s-1) {$s_1$};

\node[state,position=225:{20mm} from s-1] (s-1-1) {$u_1$};
\node[state,position=315:{20mm} from s-1] (s-1-n) {$u_2$};

\node[above=5mm] (a-s-1) at (s-1) {$\{p\}$};

\node[above=5mm] (a-s-1-1) at (s-1-1) {$\{p,q\}$};
\node[above=5mm] (a-s-1-n) at (s-1-n) {$\{r\}$};

\path

(s-1) edge [right] node {$\alpha_1 \beta_1$} (s-1-1)
(s-1) edge [left] node {$\alpha_2 \beta_2$} (s-1-n)
;

\end{tikzpicture}

\caption{This figure indicates a pointed general concurrent game model $(\MM_1,s_1)$.
}
\label{figure:a pointed general concurrent game model (M1,s1)}

\end{center}

\end{figure}


\begin{figure}
\begin{center}
\begin{tikzpicture}
[
->=stealth,
scale=1,
every node/.style={transform shape},
]
\tikzstyle{every state}=[minimum size=10mm]

\node[state,fill=gray!15] (s-n) {$s_2$};

\node[state,position=225:{20mm} from s-n] (s-n-1) {$v_1$};
\node[state,position=315:{20mm} from s-n] (s-n-n) {$v_2$};

\node[above=5mm] (a-s-n) at (s-n) {$\{r\}$};

\node[above=5mm] (a-s-n-1) at (s-n-1) {$\{p\}$};
\node[above=5mm] (a-s-n-n) at (s-n-n) {$\{q\}$};

\path
(s-n) edge [right] node {$\eta_1 \delta_1$} (s-n-1)
(s-n) edge [left] node {$\eta_2 \delta_2$} (s-n-n)
;

\end{tikzpicture}

\caption{This figure indicates a pointed general concurrent game model $(\MM_2,s_2)$.
}

\label{figure:a pointed general concurrent game model (M2,s2)}

\end{center}

\end{figure}


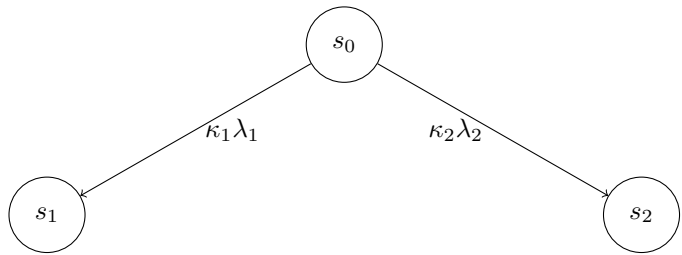
\begin{figure}
\begin{center}
\begin{tikzpicture}
[
->=stealth,
scale=1,
every node/.style={transform shape},
]
\tikzstyle{every state}=[minimum size=10mm]

\node[state] (s) {$s_0$};

\node[state,position=210:{35mm} from s] (s-1) {$s_1$};
\node[state,position=330:{35mm} from s] (s-n) {$s_2$};

\path
(s) edge [right] node {$\kappa_1 \lambda_1$} (s-1)
(s) edge [left] node {$\kappa_2 \lambda_2$} (s-n)
;

\end{tikzpicture}

\caption{This figure indicates a general game form $\FF$.
}
\label{figure:a general game form}

\end{center}

\end{figure}


\begin{figure}
\begin{center}
\begin{tikzpicture}
[
->=stealth,
scale=1,
every node/.style={transform shape},
]
\tikzstyle{every state}=[minimum size=10mm]

\node[state,fill=gray!15] (s) {$s_0$};

\node[state,position=210:{35mm} from s] (s-1) {$s_1$};
\node[state,position=330:{35mm} from s] (s-n) {$s_2$};

\node[state,position=225:{20mm} from s-1] (s-1-1) {$u_1$};
\node[state,position=315:{20mm} from s-1] (s-1-n) {$u_2$};

\node[state,position=225:{20mm} from s-n] (s-n-1) {$v_1$};
\node[state,position=315:{20mm} from s-n] (s-n-n) {$v_2$};

\node[above=5mm] (a-s) at (s) {$\{q,r\}$};

\node[above=5mm] (a-s-1) at (s-1) {$\{p\}$};
\node[above=5mm] (a-s-n) at (s-n) {$\{r\}$};

\node[above=5mm] (a-s-1-1) at (s-1-1) {$\{p,q\}$};
\node[above=5mm] (a-s-1-n) at (s-1-n) {$\{r\}$};

\node[above=5mm] (a-s-n-1) at (s-n-1) {$\{p\}$};
\node[above=5mm] (a-s-n-n) at (s-n-n) {$\{q\}$};

\path
(s) edge [right] node {$\kappa_1 \lambda_1$} (s-1)
(s) edge [left] node {$\kappa_2 \lambda_2$} (s-n)

(s-1) edge [right] node {$\alpha_1 \beta_1$} (s-1-1)
(s-1) edge [left] node {$\alpha_2 \beta_2$} (s-1-n)

(s-n) edge [right] node {$\eta_1 \delta_1$} (s-n-1)
(s-n) edge [left] node {$\eta_2 \delta_2$} (s-n-n)
;

\end{tikzpicture}

\caption{This figure indicates a pointed general concurrent game model $(\MM,s_0)$, grafted from: a pointed general concurrent game model $(\MM_1,s_1)$, where $\MM_1 = (\FST_1, \FAC_1, \{\Fav^1_\FAA \mid \FAA \subseteq \FAG\}, \{\Fout^1_\FAA \mid \FAA \subseteq \FAG\}, \Flab_1)$, indicated by Figures \ref{figure:a pointed general concurrent game model (M1,s1)}; a pointed general concurrent game model $(\MM_2,s_2)$, where $\MM_2 = (\FST_2, \FAC_2, \{\Fav^2_\FAA \mid \FAA \subseteq \FAG\}, \{\Fout^2_\FAA \mid \FAA \subseteq \FAG\}, \Flab_2)$, indicated by Figure \ref{figure:a pointed general concurrent game model (M2,s2)}; the elementary conjunction $\neg p \land q \land r$; a general game form $\FF = (s_0, S, \FAC_0, \{\Fav^0_\FAA \mid \FAA \subseteq \FAG \}, \{\Fout^0_\FAA \mid \FAA \subseteq \FAG \})$, indicated by Figure \ref{figure:a general game form}.
At $(\MM,s_0)$:
(1) available joint actions of a coalition $\FAA$ at $s_0$ is specified by $\Fav^0_\FAA$, and available joint actions of $\FAA$ at other states are as in the corresponding models;
(2) outcome states of an available joint action $\sigma_\FAA$ of a coalition $\FAA$ at $s_0$ is specified by $\Fout^0_\FAA$, and outcome states of $\sigma_\FAA$ at other states are as in the corresponding models;
(3) at $s_0$, only $q$ and $r$ are true, which are conjuncts of $\neg p \land q \land r$, and truth values of atomic propositions at other states are as in the corresponding models.
}
\label{figure:a grafted pointed general concurrent game model}

\end{center}

\end{figure}

\begin{lemma}[Downward validity of standard formulas]
\label{lemma:Downward validity of standard formulas}
Let $\gamma \lor (\FBW_{i \in \FNI} \Fclo{\FAA_i} \phi_i \rightarrow \FBV_{j \in \FPI} \Fclo{\FBB_j} \psi_j)$ be a standard formula.

Assume $\vDash \gamma \lor (\FBW_{i \in \FNI} \Fclo{\FAA_i} \phi_i \rightarrow \FBV_{j \in \FPI} \Fclo{\FBB_j} \psi_j)$.

Then, one of the following two conditions is met:
\begin{enumerate}[label=(\alph*),leftmargin=3.33em]

\item 

$\vDash \gamma$;

\item 

there is $i \in \FNI$ and $j \in \FPI$ such that $\FAA_i \subseteq \FBB_{j}$ and $\vDash (\phi_{\FNI_0} \land \phi_{i}) \rightarrow \psi_{j}$.

\end{enumerate}

\end{lemma}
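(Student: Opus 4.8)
The plan is to prove the contrapositive: assuming that neither (a) nor (b) holds, I will build a pointed general concurrent game model $(\MM,s_0)$ refuting the standard formula. Refuting it amounts to three things at $s_0$: making $\gamma$ false, making every $\Fclo{\FAA_i}\phi_i$ with $i\in\FNI$ true, and making every $\Fclo{\FBB_j}\psi_j$ with $j\in\FPI$ false. Since (a) fails, $\not\vDash\gamma$, so the elementary conjunction equivalent to $\neg\gamma$ is satisfiable; it will label $s_0$ and secure $\MM,s_0\not\Vdash\gamma$. The degenerate case $\FNI=\emptyset$ is immediate: then (b) cannot hold (it quantifies over $i\in\FNI$), so I only need to refute the formula, and a single terminal state $s_0$ with no successors (hence $\Fav_\FAA(s_0)=\emptyset$ for every $\FAA$) labelled by $\neg\gamma$ makes $\gamma$ and every $\Fclo{\FBB_j}\psi_j$ false.

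Assume now $\FNI\neq\emptyset$. By condition (1) of Definition \ref{definition:Standard formulas} there is $i^*\in\FNI_0$ with $\phi_{i^*}=\top$, and by condition (2) there is $j^*\in\FPI$ with $\FBB_{j^*}=\FAG$ and $\psi_{j^*}=\bot$. From the failure of (b) I extract witness pointed general concurrent game models. Since $\FAA_i\subseteq\FBB_{j^*}=\FAG$ always, (b) failing at $(i,j^*)$ gives, for each $i\in\FNI$, a point $w_i\Vdash\phi_{\FNI_0}\land\phi_i$ (using $\psi_{j^*}=\bot$). Since $\FAA_{i^*}=\emptyset\subseteq\FBB_j$ always, (b) failing at $(i^*,j)$ gives, for each $j\in\FPI$, a point $r_j\Vdash\phi_{\FNI_0}\land\neg\psi_j$ (using $\phi_{i^*}=\top$). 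Finally, for each pair $(i,j)$ with $\FAA_i\subseteq\FBB_j$, (b) failing gives a point $u_{ij}\Vdash\phi_{\FNI_0}\land\phi_i\land\neg\psi_j$. Every one of these points satisfies $\phi_{\FNI_0}$.

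Taking disjoint copies of these witness models as the family and $\neg\gamma$ as the labelling conjunction, I graft them onto a fresh state $s_0$ through a general game form $\FF$ (Definitions \ref{definition:General game forms} and \ref{definition:Grafted pointed general concurrent game models}). By the generated-submodel remark following Definition \ref{definition:Grafted pointed general concurrent game models}, the truth value of each $\phi_i$ and $\psi_j$ at a grafted successor equals its value in the originating witness model, so the successors inherit the properties just listed. Writing $\Fav_\FAG(s_0)=\Sigma$ and recalling (Definition \ref{definition:General concurrent game models}) that $\Fav_\FAA(s_0)=\Sigma|_{\FAA}$ while $\Fout_\FAA(s_0,\sigma_\FAA)=\bigcup\{\Fout_\FAG(s_0,\sigma_\FAG)\mid\sigma_\FAG\in\FJA_\FAG,\ \sigma_\FAA\subseteq\sigma_\FAG\}$ (where only members of $\Sigma$ contribute nonempty outcomes), the three targets reduce to choosing $\Sigma$ and routing its outcomes among the $w_i,r_j,u_{ij}$ so that: every successor satisfies $\phi_{\FNI_0}$ (this validates the empty-coalition conjuncts, in particular $i^*$'s $\Fclo{\emptyset}\top$, as soon as $\Sigma\neq\emptyset$); for each $i\in\FNI$ some $\FAA_i$-action confines all compatible outcomes to $\phi_i$-successors (call this $\mathrm{P}_i$); and for each $j\in\FPI$ no $\FBB_j$-action confines all compatible outcomes to $\psi_j$-successors (call this $\mathrm{N}_j$). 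The disjunct $\Fclo{\FAG}\bot$ (i.e.\ $j^*$) is automatically false because availability forces nonempty outcomes, so $j^*$ needs no care.

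The main obstacle is the design of $\FF$: producing one set $\Sigma$ of available profiles, with outcomes drawn from the witness points, that meets all the $\mathrm{P}_i$ and $\mathrm{N}_j$ at once. The delicate point is the interaction between a confining action for $\FAA_i$ and an escape demanded for $\FBB_j$. When $\FAA_i\subseteq\FBB_j$, every profile extending $\FAA_i$'s confining action already forces $\phi_i$, so the outcome witnessing $\mathrm{N}_j$ must itself satisfy $\phi_i\land\neg\psi_j$ — which is exactly the content of $u_{ij}$; when $\FAA_i\not\subseteq\FBB_j$, the coalition $\FBB_j$ leaves some agent of $\FAA_i$ free and the escape can instead be routed through $r_j$. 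Crucially, this realization is feasible only because general concurrent game models permit conditional availability (Fact \ref{fact:Characterization of available joint action functions of general concurrent game models}): abandoning the independence of agents lets me include precisely the profiles needed and omit the ``merged'' profiles that would otherwise force the unavailable conjunction $\phi_i\land\phi_{i'}$ for disjoint $\FAA_i,\FAA_{i'}$ acting simultaneously. Verifying that a concrete $\Sigma$ realizes all $\mathrm{P}_i$ and $\mathrm{N}_j$, including the boundary coalitions $\FBB_j=\emptyset$ and $\FBB_j=\FAG$, is the combinatorial heart of the proof; once it is in place, the grafting construction delivers the refuting model and completes the contrapositive.
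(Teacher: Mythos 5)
Your overall strategy is exactly the paper's: prove the contrapositive, dispose of the $\FNI=\emptyset$ case with a terminal state, extract witness models from the failure of (b) for each pair $(i,j)$ with $\FAA_i\subseteq\FBB_j$, and graft them onto a fresh root via a general game form whose available profiles are chosen so that each $\FAA_i$ has a confining action while no $\FBB_j$ does. Your analysis of the constraints is also sound — in particular the observation that when $\FAA_i\subseteq\FBB_j$ the escape state must satisfy $\phi_i\land\neg\psi_j$ (which is precisely what the witness $u_{ij}$ provides), and that conditional availability is what makes the construction possible at all.

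However, there is a genuine gap: you never actually produce the set $\Sigma$ of available action profiles and the outcome assignment, and you yourself flag this as ``the combinatorial heart of the proof'' that remains to be done. This is not a routine verification that can be waved through — it is where all the difficulty lives. The paper's solution is to introduce actions $\alpha_i$ for each $i\in\FNI$ and $\beta_{i\text{-}j}$ for each pair with $\FAA_i\not\subseteq\FBB_j$, and to make available exactly two kinds of profiles: $\sigma^i_\FAG$, in which every agent plays $\alpha_i$, with outcome set $\{s_{i\text{-}j}\mid j\in\FPI,\ \FAA_i\subseteq\FBB_j\}$ (all of whose elements satisfy $\phi_i$, giving $\mathrm{P}_i$); and $\lambda^{i\text{-}j}_\FAG$, which agrees with $\sigma^i_\FAG$ except that one agent $a_{i\text{-}j}\in\FAA_i\setminus\FBB_j$ plays $\beta_{i\text{-}j}$, with outcome set all of $S$. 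The point of the second kind is that $\lambda^{i\text{-}j}_\FAG|_{\FBB_j}=\sigma^i_\FAG|_{\FBB_j}$, so whenever $\FAA_i\not\subseteq\FBB_j$ the coalition $\FBB_j$ cannot distinguish $\sigma^i_\FAG$ from $\lambda^{i\text{-}j}_\FAG$ and its outcome set blows up to $S$, which contains a $\neg\psi_j$ state; and one must also check that $\sigma^i_\FAG|_{\FAA_i}$ has no available extension other than $\sigma^i_\FAG$ itself (so that $\mathrm{P}_i$ really holds), which requires an argument by cases on the two profile shapes. None of this is in your proposal, so as it stands the proof is an accurate plan rather than a proof.
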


\begin{example}
\label{example:11}

Consider the standard formula 
$\phi = \gamma \lor \big(
\big( \Fclo{\FAA_{-1}} \phi_{-1} \land \Fclo{\FAA_{-2}} \phi_{-2} \land \Fclo{\FAA_{-3}} \phi_{-3} \big) \rightarrow \big( \Fclo{\FBB_{1}} \psi_1 \lor \Fclo{\FBB_{2}} \psi_2 \lor \Fclo{\FBB_{3}} \psi_3 \big)
\big)$ with respect to $\gamma = \bot$, $\FNI = \{-1,-2,-3\}$ and $\FPI = \{1,2,3\}$, given in Example \ref{example:Standard formulas}, where $\FAA_{-1} = \{a\}$, $\phi_{-1} = p$, $\FAA_{-2} = \{b\}$, $\phi_{-2} = q$, $\FAA_{-3} = \emptyset$, $\phi_{-3} = \top$, $\FBB_{1} = \FAG$, $\psi_1 = p \land q$, $\FBB_{2} = \{a\}$, $\psi_2 = \neg p \lor q$, $\FBB_{3} = \FAG$, $\psi_3 = \bot$.
Note $\FNI_0 = \{-3\}$ and $\phi_{\FNI_0} = \top$.

By this lemma, if $\models \phi$, then one of the following holds:
\begin{enumerate}[label=(\arabic*),leftmargin=3.33em]

\item 

$\models \gamma$, that is, $\models \bot$;

\item 

$\FAA_{-1} \subseteq \FBB_1$ and $\models (\phi_{\FNI_0} \land \phi_{-1}) \rightarrow \psi_1$, that is, $\{a\} \subseteq \FAG$ and $\models (\top \land p) \rightarrow (p \land q)$;

\item 

$\FAA_{-1} \subseteq \FBB_2$ and $\models (\phi_{\FNI_0} \land \phi_{-1}) \rightarrow \psi_2$, that is, $\{a\} \subseteq \{a\}$ and $\models (\top \land p) \rightarrow (\neg p \lor q)$;

\item 

$\FAA_{-1} \subseteq \FBB_3$ and $\models (\phi_{\FNI_0} \land \phi_{-1}) \rightarrow \psi_2$, that is, $\{a\} \subseteq \FAG$ and $\models (\top \land p) \rightarrow \bot$;

\item 

$\FAA_{-2} \subseteq \FBB_1$ and $\models (\phi_{\FNI_0} \land \phi_{-2}) \rightarrow \psi_1$, that is, $\{b\} \subseteq \FAG$ and $\models (\top \land q) \rightarrow (p \land q)$;

\item 

$\FAA_{-2} \subseteq \FBB_2$ and $\models (\phi_{\FNI_0} \land \phi_{-2}) \rightarrow \psi_2$, that is, $\{b\} \subseteq \{a\}$ and $\models (\top \land q) \rightarrow (\neg p \lor q)$;

\item 

$\FAA_{-2} \subseteq \FBB_3$ and $\models (\phi_{\FNI_0} \land \phi_{-2}) \rightarrow \psi_3$, that is, $\{b\} \subseteq \FAG$ and $\models (\top \land q) \rightarrow \bot$;

\item 

$\FAA_{-3} \subseteq \FBB_1$ and $\models (\phi_{\FNI_0} \land \phi_{-3}) \rightarrow \psi_1$, that is, $\emptyset \subseteq \FAG$ and $\models (\top \land \top) \rightarrow (p \land q)$;

\item 

$\FAA_{-3} \subseteq \FBB_2$ and $\models (\phi_{\FNI_0} \land \phi_{-3}) \rightarrow \psi_2$, that is, $\emptyset \subseteq \{a\}$ and $\models (\top \land \top) \rightarrow (\neg p \lor q)$;

\item 

$\FAA_{-3} \subseteq \FBB_3$ and $\models (\phi_{\FNI_0} \land \phi_{-3}) \rightarrow \psi_2$, that is, $\emptyset \subseteq \FAG$ and $\models (\top \land \top) \rightarrow \bot$.

\end{enumerate}

\end{example}

We now give the proof of the lemma. In presenting the proof, we offer some examples to help readers understand it.

\begin{proof}
~

Assume (a) $\neg \gamma$ is satisfiable, and (b) for all $i \in \FNI$ and $j \in \FPI$ such that $\FAA_i \subseteq \FBB_{j}$, $\phi_{\FNI_0} \land \phi_{i} \land \neg \psi_{j}$ is satisfiable. It suffices to show $\neg \gamma \land \FBW_{i \in \FNI} \Fclo{\FAA_i} \phi_i \land \FBW_{j \in \FPI} \neg \Fclo{\FBB_j} \psi_j$ is satisfiable.

\begin{example}
\label{example:12}

Consider Example \ref{example:11}.
It can be seen that the two assumptions (a) and (b) hold:
\begin{itemize}

\item 

$\neg \gamma$ is satisfiable, that is, $\neg \bot$ is satisfiable;

\item 

$\FAA_{-1} \subseteq \FBB_1$ and $\phi_{\FNI_0} \land \phi_{-1} \land \neg \psi_1$ is satisfiable, that is, $\{a\} \subseteq \FAG$ and $\top \land p \land \neg (p \land q)$ is satisfiable;

\item 

$\FAA_{-1} \subseteq \FBB_2$ and $\phi_{\FNI_0} \land \phi_{-1} \land \neg \psi_2$ is satisfiable, that is, $\{a\} \subseteq \{a\}$ and $\top \land p \land \neg (\neg p \lor q)$ is satisfiable;

\item 

$\FAA_{-1} \subseteq \FBB_3$ and $\phi_{\FNI_0} \land \phi_{-1} \land \neg \psi_3$ is satisfiable, that is, $\{a\} \subseteq \FAG$ and $\top \land p \land \neg \bot$ is satisfiable;

\item 

$\FAA_{-2} \subseteq \FBB_1$ and $\phi_{\FNI_0} \land \phi_{-2} \land \neg \psi_1$ is satisfiable, that is, $\{b\} \subseteq \FAG$ and $\top \land q \land \neg (p \land q)$ is satisfiable;

\item 

$\FAA_{-2} \subseteq \FBB_3$ and $\phi_{\FNI_0} \land \phi_{-2} \land \neg \psi_3$ is satisfiable, that is, $\{b\} \subseteq \FAG$ and $\top \land q \land \neg \bot$ is satisfiable;

\item 

$\FAA_{-3} \subseteq \FBB_1$ and $\phi_{\FNI_0} \land \phi_{-3} \land \neg \psi_1$ is satisfiable, that is, $\emptyset \subseteq \FAG$ and $\top \land \top \land \neg (p \land q)$ is satisfiable;

\item 

$\FAA_{-3} \subseteq \FBB_2$ and $\phi_{\FNI_0} \land \phi_{-3} \land \neg \psi_2$ is satisfiable, that is, $\emptyset \subseteq \{a\}$ and $\top \land \top \land \neg (\neg p \lor q)$ is satisfiable;

\item 

$\FAA_{-3} \subseteq \FBB_3$ and $\phi_{\FNI_0} \land \phi_{-3} \land \neg \psi_3$ is satisfiable, that is, $\emptyset \subseteq \FAG$ and $\top \land \top \land \neg \bot$ is satisfiable.

\end{itemize}

Note $\FAA_{-2} \not \subseteq \FBB_2$. By what follows, we can show $
\neg \gamma \land
\Fclo{\FAA_{-1}} \phi_{-1} \land \Fclo{\FAA_{-2}} \phi_{-2} \land \Fclo{\FAA_{-3}} \phi_{-3} \land \neg \Fclo{\FBB_{1}} \psi_1 \land \neg \Fclo{\FBB_{2}} \psi_2 \land \neg \Fclo{\FBB_{3}} \psi_3
$
is satisfiable.

\end{example}

------------------------------------

Assume $\FNI = \emptyset$. It is easy to construct a pointed general concurrent game model $(\MM,s_0)$ meeting the following conditions:
\begin{enumerate}[label=(\arabic*),leftmargin=3.33em]

\item

$\MM,s_0 \Vdash \neg \gamma$;

\item

at $s_0$, all coalitions have no available joint actions.

\end{enumerate}

Then $\MM,s_0 \Vdash \neg \gamma \land \FBW_{j \in \FPI} \neg \Fclo{\FBB_j} \psi_j$.

------------------------------------

Assume $\FNI \neq \emptyset$. 

Then, $\Fclo{\FAA_i} \phi_i = \Fclo{\emptyset} \top$ for some $i \in \FNI$. Note $\Fclo{\FBB_j} \psi_j = \Fclo{\FAG} \bot$ for some $j \in \FPI$. Then, the following conditions hold:
\begin{enumerate}[label=(\arabic*),leftmargin=3.33em]

\item 

there is $i \in \FNI$ and $j \in \FPI$ such that $\FAA_i \subseteq \FBB_{j}$;

\item 

for every $i \in \FNI$, there is $j \in \FPI$ such that $\FAA_i \subseteq \FBB_{j}$;

\item 

for every $j \in \FPI$, there is $i \in \FNI$ such that $\FAA_i \subseteq \FBB_{j}$.

\end{enumerate}

\medskip

{
We will show $\neg \gamma \land \FBW_{i \in \FNI} \Fclo{\FAA_i} \phi_i \land \FBW_{j \in \FPI} \neg \Fclo{\FBB_j} \psi_j$ is satisfiable by the following steps.
\begin{enumerate}[leftmargin=5.33em]

\item[Step 1:]

We construct a pointed general concurrent game model $(\MM,s_0)$, which is grafted from a set of pointed general concurrent game models, an elementary conjunction, and a general game form. This step consists of three sub-steps:
\begin{enumerate}[leftmargin=4.1em]

\item[Step 1a:]

specifying a set of pointed general concurrent game models.

\item[Step 1b:]

giving an elementary conjunction.

\item[Step 1c:]

defining a general game form.

\end{enumerate}

\item[Step 2:]

We show $\neg \gamma \land \FBW_{i \in \FNI} \Fclo{\FAA_i} \phi_i \land \FBW_{j \in \FPI} \neg \Fclo{\FBB_j} \psi_j$ is true at $(\MM,s_0)$. This step consists of three sub-steps:
\begin{enumerate}[leftmargin=4.1em]

\item[Step 2a:]

showing $\neg \gamma$ is true at $(\MM,s_0)$.

\item[Step 2b:]

showing $\Fclo{\FAA_i} \phi_i$ is true at $(\MM,s_0)$ for every $i \in \FNI$.

\item[Step 2c:]

showing $\neg \Fclo{\FBB_j} \psi_j$ is true at $(\MM,s_0)$ for every $j \in \FPI$.

\end{enumerate}

\end{enumerate}
}

\medskip

\textbf{Step 1a: specifying a set of pointed general concurrent game models.}

\medskip

Let $\{(\MM_{\mij},s_{\mij}) \mid i \in \FNI, j \in \FPI, \FAA_i \subseteq \FBB_j\}$, where $\MM_{\mij} = (\FST_{\mij}, \FAC_{\mij}, \{ \Fav^{\mij}_\FAA \mid \FAA \subseteq \FAG \}, \{ \Fout^{\mij}_\FAA \mid \FAA \subseteq \FAG \}, \Flab_{\mij})$, be a set of pointed general concurrent game models meeting the following conditions:
\begin{enumerate}[label=(\arabic*),leftmargin=3.33em]

\item

for all $i \in \FNI$ and $j \in \FPI$ such that $\FAA_i \subseteq \FBB_j$, $\MM_{\mij},s_{\mij} \Vdash \phi_{\FNI_0} \land \phi_{i} \land \neg \psi_j$;

\item

all $\FST_{\mij}$ are pairwise disjoint;

\item

all $\FAC_{\mij}$ are pairwise disjoint.

\end{enumerate}

Note there is $i \in \FNI$ and $j \in \FPI$ such that $\FAA_i \subseteq \FBB_{j}$. Then the set $\{(\MM_{\mij},s_{\mij}) \mid i \in \FNI, j \in \FPI, \FAA_i \subseteq \FBB_j\}$ is not empty.

\medskip

\textbf{Step 1b: giving an elementary conjunction.}

\medskip

Let $\gamma'$ be a satisfiable elementary conjunction equivalent to $\neg \gamma$.

\medskip

\textbf{Step 1c: defining a general game form.}

\medskip

Define a general game form $\FF = (s_0, S, \FAC_0, \{\Fav^0_\FAA \mid \FAA \subseteq \FAG \}, \{\Fout^0_\FAA \mid \FAA \subseteq \FAG \})$ as follows, where we use $\FJA^0_\FAA$ to indicate the set of joint actions of $\FAA$ with respect to $\FAC_0$ for every $\FAA \subseteq \FAG$:
\begin{itemize}

\item

$s_0$ is a state not in any $\MM_{\mij}$.

\item

$S$ is the set consisting of all $s_{\mij}$.

Note that $S$ is not empty.

\item

$\FAC_0 = \{\alpha_i \mid i \in \FNI\} \cup \{\beta_{\mij} \mid i \in \FNI, j \in \FPI, \FAA_i \not \subseteq \FBB_j\}$, which is such that $\FAC_0$ and $\FAC_{\mij}$ disjoint for all $\FAC_{\mij}$.

Note $\FAC_0$ is not empty.

\item 

We specify two kinds of available action profiles.

For all $i \in \FNI$, let $\sigma_\FAG^{i}$ be a joint action of $\FAG$ such that for all $a \in \FAG$, $\sigma_\FAG^{i} (a) = \alpha_i$.
For every action profile $\sigma_\FAG^{i}$, all agents make the same choice, that is, $\alpha_i$.

For all $i \in \FNI$ and $j \in \FPI$ such that $\FAA_i \not \subseteq \FBB_j$, let $a_{\mij} \in \FAG$ such that $a_{\mij} \in \FAA_i$ but $a_{\mij} \notin \FBB_j$.
For all $i \in \FNI$ and $j \in \FPI$ such that $\FAA_i \not \subseteq \FBB_j$, let $\Fspac^{\mij}_\FAG$ be a joint action of $\FAG$ such that for all $a \in \FAG$:
\[
\Fspac^{\mij}_\FAG (a) =
\begin{cases}
\beta_{\mij} & \text{if } a = a_{\mij} \\
\sigma_\FAG^{i} (a) & \text{otherwise}
\end{cases}
\]
For every action profile $\Fspac^{\mij}_\FAG$, the agent $a_{\mij}$ makes the choice $\beta_{\mij}$, and all other agents make the same choice, that is, $\alpha_i$.

\item

For all $\FAA \subseteq \FAG$:
\begin{itemize}

\item 

if $\FAA = \FAG$, then $\Fav^0_\FAA (s_0) = \{\sigma_\FAG^{i} \mid i \in \FNI\} \cup \{\Fspac^{\mij}_\FAG \mid i \in \FNI, j \in \FPI, \FAA_i \not \subseteq \FBB_j\}$;

\item 

if $\FAA \neq \FAG$, then $\Fav^0_\FAA (s_0) = \Fav^0_\FAG (s_0)|_\FAA$.

\end{itemize}

\item

For all $\FAA \subseteq \FAG$ and $\sigma_\FAA \in \FJA^0_\FAA$:
\begin{itemize}

\item 

if $\FAA = \FAG$ and $\sigma_\FAA \in \Fav^0_\FAA (s_0)$, then:
\begin{itemize}

\item 

if $\sigma_\FAA = \sigma_\FAG^{i}$ for some $i \in \FNI$, then $\Fout^0_\FAA (s_0, \sigma_\FAA) = \{s_{\mij} \in S \mid j \in \FPI\}$;

\item 

if $\sigma_\FAA = \Fspac^{\mij}_\FAG$ for some $i \in \FNI$ and $j \in \FPI$ such that $\FAA_i \not \subseteq \FBB_j$, then $\Fout^0_\FAA (s_0, \sigma_\FAA) = S$;

\end{itemize}

\item 

if $\FAA = \FAG$ and $\sigma_\FAA \notin \Fav^0_\FAA (s)$, then $\Fout^0_\FAA (s_0, \sigma_\FAA) = \emptyset$;

\item 

if $\FAA \neq \FAG$, then $\Fout^0_\FAA (s_0, \sigma_\FAA) = \bigcup \{\Fout^0_\FAG (s_0, \sigma_\FAG) \mid \sigma_\FAG \in \FJA^0_\FAG \text{ and } \sigma_\FAA \subseteq \sigma_\FAG\}$.

\end{itemize}

\end{itemize}

Let $(\MM,s_0)$, where $\MM = (\FST, \FAC, \{\Fav_\FAA \mid \FAA \subseteq \FAG\}, \{\Fout_\FAA \mid \FAA \subseteq \FAG\}, \Flab)$, be a pointed general concurrent game model grafted from $\{(\MM_{\mij},s_{\mij}) \mid i \in \FNI, j \in \FPI, \FAA_i \subseteq \FBB_j\}$, $\gamma'$, and $\FF$.
We claim $\MM,s_0 \Vdash \neg \gamma \land \FBW_{i \in \FNI} \Fclo{\FAA_i} \phi_i \land \FBW_{j \in \FPI} \neg \Fclo{\FBB_j} \psi_j$.

\begin{example}
\label{example:13}

Consider Example \ref{example:12}.
Let $(\MM_{\moneone},s_{\moneone})$, 
$(\MM_{\monetwo},s_{\monetwo})$, 
$(\MM_{\monethree},$ $s_{\monethree})$, $(\MM_{\mtwoone},$ $s_{\mtwoone})$,
$(\MM_{\mtwothree}, s_{\mtwothree})$, $(\MM_{\mthreeone},s_{\mthreeone})$, 
$(\MM_{\mthreetwo},s_{\mthreetwo})$, $(\MM_{\mthreethree},s_{\mthreethree})$ be eight pointed general concurrent game models such that their domains of states are pairwise disjoint, their domains of actions are pairwise disjoint, and they respectively satisfy the eight formulas mentioned in Example \ref{example:12}:
\begin{itemize}

\item 

$\MM_{\moneone},s_{\moneone} \Vdash \top \land p \land \neg (p \land q)$;

\item 

$\MM_{\monetwo},s_{\monetwo} \Vdash \top \land p \land \neg (\neg p \lor q)$;

\item 

$\MM_{\monethree},s_{\monethree} \Vdash \top \land p \land \neg \bot$;

\item 

$\MM_{\mtwoone},s_{\mtwoone} \Vdash \top \land q \land \neg (p \land q)$;

\item 

$\MM_{\mtwothree},s_{\mtwothree} \Vdash \top \land q \land \neg \bot$;

\item 

$\MM_{\mthreeone},s_{\mthreeone} \Vdash \top \land \top \land \neg (p \land q)$;

\item 

$\MM_{\mthreetwo},s_{\mthreetwo} \Vdash \top \land \top \land \neg (\neg p \lor q)$;

\item 

$\MM_{\mthreethree},s_{\mthreethree} \Vdash \top \land \top \land \neg \bot$.

\end{itemize}

Let $\gamma' = \top$.

Define a general game form $\FF = (s_0, S, \FAC_0, \{\Fav^0_\FAA \mid \FAA \subseteq \FAG \}, \{\Fout^0_\FAA \mid \FAA \subseteq \FAG \})$ as follows:
\begin{itemize}

\item

$S = \{
s_{\moneone}, s_{\monetwo}, s_{\monethree}, s_{\mtwoone}, s_{\mtwothree}, s_{\mthreeone}, s_{\mthreetwo}, s_{\mthreethree}
\}$;

\item

$\FAC_0 = \{\alpha_{-1}, \alpha_{-2}, \alpha_{-3}\} \cup \{\beta_{\mtwotwo}\}$;

\item 

Note $\FAA_{-2} \subseteq \FBB_2$. Let $a_{\mtwotwo} = b$. Let $\sigma_\FAG^{-1} = \alpha_{-1} \alpha_{-1}$, $\sigma_\FAG^{-2} = \alpha_{-2} \alpha_{-2}$, $\sigma_\FAG^{-3} = \alpha_{-3} \alpha_{-3}$, $\lambda^{\mtwotwo}_\FAG = \alpha_{-2} \beta_{\mtwotwo}$.

$\Fav^0_\FAG (s_0) = \{\sigma_\FAG^{-1}, \sigma_\FAG^{-2}, \sigma_\FAG^{-3}, \lambda^{\mtwotwo}_\FAG\}$;

$\Fav^0_a (s_0) = \{\alpha_{-1}, \alpha_{-2}, \alpha_{-3}\}$;

$\Fav^0_b (s_0) = \{\alpha_{-1}, \alpha_{-2}, \alpha_{-3}, \beta_{\mtwotwo}\}$;

$\Fav^0_{\emptyset} (s_0) = \{\emptyset\}$;

\item

$\Fout^0_\FAG (s_0, \sigma_\FAG^{-1}) = \{s_{\moneone}, s_{\monetwo}, s_{\monethree}\}$;

$\Fout^0_\FAG (s_0, \sigma_\FAG^{-2}) = \{s_{\mtwoone}, s_{\mtwothree}\}$;

$\Fout^0_\FAG (s_0, \sigma_\FAG^{-3}) = \{s_{\mthreeone}, s_{\mthreetwo}, s_{\mthreethree}\}$;

$\Fout^0_\FAG (s_0, \lambda^{\mtwotwo}_\FAG) = S$;

$\Fout^0_a (s_0, \alpha_{-1}) = \{s_{\moneone}, s_{\monetwo}, s_{\monethree}\}$;

$\Fout^0_a (s_0, \alpha_{-2}) = S$;

$\Fout^0_a (s_0, \alpha_{-3}) = \{s_{\mthreeone}, s_{\mthreetwo}, s_{\mthreethree}\}$;

$\Fout^0_b (s_0, \alpha_{-1}) = \{s_{\moneone}, s_{\monetwo}, s_{\monethree}\}$;

$\Fout^0_b (s_0, \alpha_{-2}) = \{s_{\mtwoone}, s_{\mtwothree}\}$;

$\Fout^0_b (s_0, \alpha_{-3}) = \{s_{\mthreeone}, s_{\mthreetwo}  s_{\mthreethree}\}$;

$\Fout^0_b (s_0, \beta_{\mtwotwo}) = S$;

$\Fout^0_\emptyset (s_0, \emptyset) = S$.

\end{itemize}

Let $(\MM,s_0)$, where $\MM = (\FST, \FAC, \{\Fav_\FAA \mid \FAA \subseteq \FAG\}, \{\Fout_\FAA \mid \FAA \subseteq \FAG\}, \Flab)$, be a pointed general concurrent game model grafted from $\{
(\MM_{\moneone},s_{\moneone}), (\MM_{\monetwo},s_{\monetwo}),(\MM_{\monethree},s_{\monethree}), (\MM_{\mtwoone},s_{\mtwoone}),$ $(\MM_{\mtwothree},$ $s_{\mtwothree}), (\MM_{\mthreeone},s_{\mthreeone}), 
(\MM_{\mthreetwo},s_{\mthreetwo}),(\MM_{\mthreethree},s_{\mthreethree})
\}$, $\gamma'$ and $\FF$.

By what follows, we can show $(\MM,s_0)$ satisfies 
$
\neg \gamma \land
\Fclo{\FAA_{-1}} \phi_{-1} \land \Fclo{\FAA_{-2}} \phi_{-2} \land \Fclo{\FAA_{-3}} \phi_{-3} \land \neg \Fclo{\FBB_{1}} \psi_1 \land \neg \Fclo{\FBB_{2}} \psi_2 \land \neg \Fclo{\FBB_{3}} \psi_3
$.

\end{example}

\textbf{Step 2a: showing $\neg \gamma$ is true at $(\MM,s_0)$.}

\medskip

From the definition of grafted pointed general concurrent game models, it is easy to see $\MM,s_0 \Vdash \gamma'$. Then, $\MM,s_0 \Vdash \neg \gamma$.

\medskip

\textbf{Step 2b: showing $\Fclo{\FAA_i} \phi_i$ is true at $(\MM,s_0)$ for every $i \in \FNI$.}

\medskip

Let $i \in \FNI$. Note $\sigma_\FAG^{i}$ is an available joint action of $\FAG$ at $s_0$. Then $\sigma_\FAG^{i}|_{\FAA_i}$ is an available joint action of $\FAA_i$ at $s_0$.

Assume $\FAA_i = \emptyset$.
Then $i \in \FNI_0$. Note for all $s \in S$, $\MM,s \Vdash \phi_{\FNI_0}$. Then for all $s \in S$, $\MM,s \Vdash \phi_i$. Then, $\MM,s_0 \Vdash \Fclo{\FAA_i} \phi_i$.

Assume $\FAA_i \neq \emptyset$.

We claim that for any available joint action $\delta_\FAG$ of $\FAG$ at $s_0$, if $\sigma_\FAG^{i}|_{\FAA_i} \subseteq \delta_\FAG$, then $\delta_\FAG = \sigma_\FAG^{i}$.

Assume not. Then at $s_0$, $\overline{\FAA_i}$ has an available joint action $\sigma_{\overline{\FAA_i}}$ such that $\sigma_\FAG^{i}|_{\FAA_i} \cup \sigma_{\overline{\FAA_i}}$ is an available action for $\FAG$ and $\sigma_{\overline{\FAA_i}} \neq \sigma_\FAG^{i}|_{\overline{\FAA_i}}$.

Let $a \in \FAA_i$. Then $\sigma_\FAG^{i}|_{\FAA_i} (a) = \alpha_i$.

Note that at $s_0$, there are only two kinds of available joint actions for $\FAG$: $\sigma^{m}_\FAG$ and $\Fspac^{m\text{-}n}_\FAG$.

Suppose $\sigma_\FAG^{i}|_{\FAA_i} \cup \sigma_{\overline{\FAA_i}} = \sigma^{i'}_\FAG$ for some $i' \in \FNI$. Then for all $x \in \FAG$, $\sigma^{i'}_\FAG (x) = \alpha_{i'}$. Note $\sigma^{i'}_\FAG (a) = \sigma_\FAG^{i}|_{\FAA_i} (a) = \alpha_i$. Then $i' = i$. Then $\sigma_\FAG^{i}|_{\FAA_i} \cup \sigma_{\overline{\FAA_i}} = \sigma^{i}_\FAG$. Then $\sigma_{\overline{\FAA_i}} = \sigma_\FAG^{i}|_{\overline{\FAA_i}}$. We have a contradiction.

Suppose $\sigma_\FAG^{i}|_{\FAA_i} \cup \sigma_{\overline{\FAA_i}} = \Fspac^{i'\text{-}j}_\FAG$ for some $i' \in \FNI$ and $j \in \FPI$. Then $\FAA_{i'} \not \subseteq \FBB_j$, $a_{i'\text{-}j} \in \FAA_{i'}$, $a_{i'\text{-}j} \notin \FBB_j$, and $\Fspac^{i'\text{-}j}_\FAG (a_{i'\text{-}j}) = \beta_{i'\text{-}j}$.

Note either $\Fspac^{i'\text{-}j}_\FAG (a) = \beta_{i'\text{-}j}$, or $\Fspac^{i'\text{-}j}_\FAG (a) = \alpha_{i'}$. Also note $\Fspac^{i'\text{-}j}_\FAG (a) = \sigma_\FAG^{i}|_{\FAA_i} (a) = \alpha_i$. Then it is impossible that $\Fspac^{i'\text{-}j}_\FAG (a) = \beta_{i'\text{-}j}$. Then $\Fspac^{i'\text{-}j}_\FAG (a) = \alpha_{i'}$. Then $i = i'$.
Then $a_{\mij} \in \FAA_{i}$ and $\Fspac^{\mij}_\FAG (a_{\mij}) = \beta_{\mij}$. Then $\Fspac^{\mij}_\FAG (a_{\mij}) = \sigma_\FAG^{i}|_{\FAA_i} (a_{\mij}) = \alpha_i$. We have a contradiction.

Then for any available joint action $\delta_\FAG$ of $\FAG$ at $s_0$, if $\sigma_\FAG^{i}|_{\FAA_i} \subseteq \delta_\FAG$, then $\delta_\FAG = \sigma_\FAG^{i}$.

Then $\Fout_{\FAA_i} (s_0, \sigma_\FAG^{i}|_{\FAA_i}) = 
\bigcup \{\Fout_\FAG (s_0, \delta_\FAG) \mid \delta_\FAG \in \FJA_\FAG \text{ and } \sigma_\FAG^{i}|_{\FAA_i} \subseteq \delta_\FAG\} = 
\bigcup \{\Fout_\FAG (s_0, \delta_\FAG) \mid \delta_\FAG \in \Fav_\FAG (s_0) \text{ and } \sigma_\FAG^{i}|_{\FAA_i} \subseteq \delta_\FAG\} = 
\Fout_\FAG (s_0, \sigma_\FAG^{i})$. Note $\Fout^0_\FAG (s_0, \sigma_\FAG^{i}) = \{s_{\mij} \in S \mid j \in \FPI\}$. Also note for any $j \in \FPI$, $\MM, s_{\mij} \Vdash \phi_i$. Then for every $s \in \Fout_{\FAA_i} (s_0, \sigma_\FAG^{i}|_{\FAA_i})$, $\MM,s \Vdash \phi_i$. Then, $\MM,s_0 \Vdash \Fclo{\FAA_i} \phi_i$.

\begin{example}

Consider Example \ref{example:13}.
How does $(\MM,s_0)$ satisfy 
$
\Fclo{\FAA_{-1}} \phi_{-1} \land \Fclo{\FAA_{-2}} \phi_{-2} \land \Fclo{\FAA_{-3}} \phi_{-3}
$, where $\FAA_{-1} = \{a\}$, $\phi_{-1} = p$, $\FAA_{-2} = \{b\}$, $\phi_{-2} = q$, $\FAA_{-3} = \emptyset$, $\phi_{-3} = \top$?

\begin{itemize}

\item 

Note: 

$\alpha_{-1} \in \Fav^0_a (s_0)$ and $\Fout^0_a (s_0, \alpha_{-1}) = \{s_{\moneone}, s_{\monetwo}, s_{\monethree}\}$;

$\MM_{\moneone},s_{\moneone} \Vdash \top \land p \land \neg (p \land q)$;

$\MM_{\monetwo},s_{\monetwo} \Vdash \top \land p \land \neg (\neg p \lor q)$;

$\MM_{\monethree},s_{\monethree} \Vdash \top \land p \land \neg \bot$.

Then, $\MM,s_0 \Vdash \Fclo{a} p$.

\item 

Note: 

$\alpha_{-2} \in \Fav^0_b (s_0)$ and $\Fout^0_b (s_0, \alpha_{-2}) = \{s_{\mtwoone}, s_{\mtwothree}\}$;

$\MM_{\mtwoone},s_{\mtwoone} \Vdash \top \land q \land \neg (p \land q)$;

$\MM_{\mtwothree},s_{\mtwothree} \Vdash \top \land q \land \neg \bot$.

Then, $\MM,s_0 \Vdash \Fclo{b} q$.

\item 

Note $\emptyset \in \Fav^0_\emptyset (s_0)$.
Then, $\MM,s_0 \Vdash \Fclo{\emptyset} \top$.

\end{itemize}
\end{example}

\medskip

\textbf{Step 2c: showing $\neg \Fclo{\FBB_j} \psi_j$ is true at $(\MM,s_0)$ for every $j \in \FPI$.}

\medskip

Let $j \in \FPI$.

Assume $\MM,s_0 \Vdash \Fclo{\FBB_j} \psi_j$. Then at $s_0$, $\FBB_j$ has an available joint action $\sigma_{\FBB_j}$ such that $\sigma_{\FBB_j} \leadto_{(\MM,s_0)} \psi_j$.

Then $\sigma_{\FBB_j} \subseteq \delta_\FAG$ for some available joint action $\delta_\FAG$ of $\FAG$ at $s_0$. Note at $s_0$, there are only two kinds of available joint actions for $\FAG$: $\sigma^{m}_\FAG$ and $\Fspac^{m\text{-}n}_\FAG$.

Assume $\delta_\FAG = \Fspac^{m\text{-}n}_\FAG$ for some $m \in \FNI$ and $n \in \FPI$. Note $\Fout_\FAG (s_0, \Fspac^{m\text{-}n}_\FAG) = S$. Then $\Fout_{\FBB_j} (s_0, \sigma_{\FBB_j})$ $= S$.
Note there is $i' \in \FNI$ such that $\FAA_{i'} \subseteq \FBB_j$. Then $\MM,s_{i'\text{-}j} \Vdash \phi_{\FNI_0} \land \phi_{i'} \land \neg \psi_j$. We have a contradiction. 

Then $\delta_\FAG = \sigma^i_\FAG$ for some $i \in \FNI$, and $\sigma_{\FBB_j} \subseteq \sigma^i_\FAG$.

Assume $\FAA_i \subseteq \FBB_j$. Then $s_{\mij} \in \Fout_\FAG (s_0, \sigma_\FAG^{i}) \subseteq \Fout_{\FBB_j} (s_0, \sigma_{\FBB_j})$. Note $\MM,s_{\mij} \Vdash \phi_{\FNI_0} \land \phi_{i} \land \neg \psi_j$. We have a contradiction.

Assume $\FAA_i \not \subseteq \FBB_j$. Note $\Fspac^{\mij}_\FAG$ is different from $\sigma_\FAG^{i}$ only at $a_{\mij}$, which is not in $\FBB_j$. Then $\Fspac^{\mij}_\FAG|_{\FBB_j} = \sigma_\FAG^{i}|_{\FBB_j}$. Note $\Fout_\FAG (s_0, \Fspac^{\mij}_\FAG) = S$. Then $\Fout_{\FBB_j} (s_0, \sigma_\FAG^{i}|_{\FBB_j}) = S$. Then $\Fout_{\FBB_j} (s_0, \sigma_{\FBB_j}) = S$.
Note there is $i' \in \FNI$ such that $\FAA_{i'} \subseteq \FBB_j$. Then $\MM,s_{i'\text{-}j} \Vdash \phi_{\FNI_0} \land \phi_{i'} \land \neg \psi_j$. We have a contradiction.

\begin{example}

Consider Example \ref{example:13}.
How does $(\MM,s_0)$ satisfy
$
\neg \Fclo{\FBB_{1}} \psi_1 \land \neg \Fclo{\FBB_{2}} \psi_2 \land \neg \Fclo{\FBB_{3}} \psi_3
$, where $\FBB_{1} = \FAG$, $\psi_1 = p \land q$, $\FBB_{2} = \{a\}$, $\psi_2 = \neg p \lor q$, $\FBB_{3} = \FAG$, $\psi_3 = \bot$?

\begin{itemize}

\item 

Note:

$\Fav^0_\FAG (s_0) = \{\sigma_\FAG^{-1}, \sigma_\FAG^{-1}, \sigma_\FAG^{-1}, \lambda^{\mtwotwo}_\FAG\}$;

$\Fout^0_\FAG (s_0, \sigma_\FAG^{-1}) = \{s_{\moneone}, s_{\monetwo}, s_{\monethree}\}$;

$\Fout^0_\FAG (s_0, \sigma_\FAG^{-2}) = \{s_{\mtwoone}, s_{\mtwothree}\}$;

$\Fout^0_\FAG (s_0, \sigma_\FAG^{-3}) = \{s_{\mthreeone}, s_{\mthreetwo}, s_{\mthreethree}\}$;

$\Fout^0_\FAG (s_0, \lambda^{\mtwotwo}_\FAG) = S$;

$\sigma_\FAG^{-1} \not \leadto_{\MM,s_0} p \land q$, as $\MM_{\moneone},s_{\moneone} \Vdash \top \land p \land \neg (p \land q)$;

$\sigma_\FAG^{-2} \not \leadto_{\MM,s_0} p \land q$, as $\MM_{\mtwoone},s_{\mtwoone} \Vdash \top \land q \land \neg (p \land q)$;

$\sigma_\FAG^{-3} \not \leadto_{\MM,s_0} p \land q$, as $\MM_{\mthreeone},s_{\mthreeone} \Vdash \top \land \top \land \neg (p \land q)$;

$\lambda^{\mtwotwo}_\FAG \not \leadto_{\MM,s_0} p \land q$, as $(\MM_{\mthreeone},s_{\mthreeone}) \Vdash \top \land \top \land \neg (p \land q)$.

Then, $\MM,s_0 \not \Vdash \Fclo{\FAG} (p \land q)$.

\item

Note:

$\Fav^0_a (s_0) = \{\alpha_{-1}, \alpha_{-2}, \alpha_{-3}\}$;

$\Fout^0_a (s_0, \alpha_{-1}) = \{s_{\moneone}, s_{\monetwo}, s_{\monethree}\}$;

$\Fout^0_a (s_0, \alpha_{-2}) = S$;

$\Fout^0_a (s_0, \alpha_{-3}) = \{s_{\mthreeone}, s_{\mthreetwo}, s_{\mthreethree}\}$;

$\alpha_{-1} \not \leadto_{\MM,s_0} \neg p \lor q$, as $\MM_{\monetwo},s_{\monetwo} \Vdash \top \land p \land \neg (\neg p \lor q)$;

$\alpha_{-2} \not \leadto_{\MM,s_0} \neg p \lor q$, as $\MM_{\monetwo},s_{\monetwo} \Vdash \top \land p \land \neg (\neg p \lor q)$;

$\alpha_{-3} \not \leadto_{\MM,s_0} \neg p \lor q$, as $\MM_{\mthreetwo},s_{\mthreetwo} \Vdash \top \land \top \land \neg (\neg p \lor q)$.

Then, $\MM,s_0 \not \Vdash \Fclo{a} (\neg p \lor q)$.

\item 

Clearly, $\MM,s_0 \not \Vdash \Fclo{\FAG} \bot$.

\end{itemize}

\end{example}

\end{proof}

\subsection{Upward derivability lemma of standard formulas}

\begin{lemma}
\label{lemma:??}
The following formula and rule are derivable:

\medskip

\begin{tabular}{rl}
Special independence of agents ($\mathtt{A}\text{-}\mathtt{SIA}$): & $(\Fclo{\emptyset} \phi \land \Fclo{\FAA} \psi) \rightarrow \Fclo{\FAA} (\phi \land \psi)$ \vspace{5pt} \\
Monotonicity ($\mathtt{R}\text{-}\mathtt{Mon}$): & $\dfrac{\phi \rightarrow \psi}
{\Fclo{\FAA} \phi \rightarrow \Fclo{\FBB} \psi}$ where $\FAA \subseteq \FBB$
\end{tabular}

\end{lemma}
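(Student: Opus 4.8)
The plan is to derive both items purely syntactically from the axiomatic system of Definition \ref{definition:An axiomatic system for MCL}, using as the central move the combination of $\mathtt{R}\text{-}\mathtt{CN}$ with $\mathtt{A}\text{-}\mathtt{MG}$. The point to keep in mind is that, because $\FMCL$ drops $\mathtt{A}\text{-}\mathtt{Ser}$, we cannot simply necessitate a theorem to obtain $\Fclo{\emptyset}\chi$; instead $\mathtt{R}\text{-}\mathtt{CN}$ only lets us prefix $\Fclo{\emptyset}$ to a theorem $\chi$ under the guard of some already-available modal formula $\Fclo{\FAA}\psi$. Both derivations therefore route the necessitation through a modality assumed present in the antecedent.

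For $\mathtt{R}\text{-}\mathtt{Mon}$, assume $\vdash \phi \rightarrow \psi$. First I would apply $\mathtt{R}\text{-}\mathtt{CN}$ to this theorem with coalition $\FAA$ and guard $\phi$, obtaining $\vdash \Fclo{\FAA}\phi \rightarrow \Fclo{\emptyset}(\phi \rightarrow \psi)$. Feeding $\Fclo{\emptyset}(\phi \rightarrow \psi)$ into the instance $\Fclo{\emptyset}(\phi \rightarrow \psi) \rightarrow (\Fclo{\FAA}\phi \rightarrow \Fclo{\FAA}\psi)$ of $\mathtt{A}\text{-}\mathtt{MG}$ and contracting the duplicated antecedent $\Fclo{\FAA}\phi$ by propositional logic yields $\vdash \Fclo{\FAA}\phi \rightarrow \Fclo{\FAA}\psi$. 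Finally, since $\FAA \subseteq \FBB$, the axiom $\mathtt{A}\text{-}\mathtt{MC}$ gives $\vdash \Fclo{\FAA}\psi \rightarrow \Fclo{\FBB}\psi$, and chaining the two implications delivers $\vdash \Fclo{\FAA}\phi \rightarrow \Fclo{\FBB}\psi$. This is essentially the same pattern used to recover $\mathtt{R}\text{-}\mathtt{RE}$ in the proof of the equivalence of the two $\FCL$-systems.

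For $\mathtt{A}\text{-}\mathtt{SIA}$, the idea is to take the propositional tautology $\chi := \phi \rightarrow (\psi \rightarrow (\phi \land \psi))$ and drive it through $\Fclo{\emptyset}$. Applying $\mathtt{R}\text{-}\mathtt{CN}$ to $\chi$ with guard $\Fclo{\FAA}\psi$ gives $\vdash \Fclo{\FAA}\psi \rightarrow \Fclo{\emptyset}\chi$. Two uses of $\mathtt{A}\text{-}\mathtt{MG}$ then peel off the implications inside $\Fclo{\emptyset}$: the instance at $\emptyset$ turns $\Fclo{\emptyset}\chi$ together with $\Fclo{\emptyset}\phi$ into $\Fclo{\emptyset}(\psi \rightarrow (\phi \land \psi))$, and the instance at $\FAA$ turns $\Fclo{\emptyset}(\psi \rightarrow (\phi \land \psi))$ together with $\Fclo{\FAA}\psi$ into $\Fclo{\FAA}(\phi \land \psi)$. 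Assembling these implications by propositional reasoning under the joint hypothesis $\Fclo{\emptyset}\phi \land \Fclo{\FAA}\psi$ — noting that $\Fclo{\FAA}\psi$ is used both to fire $\mathtt{R}\text{-}\mathtt{CN}$ and as the minor premise of the second $\mathtt{A}\text{-}\mathtt{MG}$ step — yields $(\Fclo{\emptyset}\phi \land \Fclo{\FAA}\psi) \rightarrow \Fclo{\FAA}(\phi \land \psi)$.

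I expect the only real subtlety, hence the main obstacle, to be the $\mathtt{A}\text{-}\mathtt{SIA}$ step, specifically the choice of guard for $\mathtt{R}\text{-}\mathtt{CN}$: because the system has no unconditional $\Fclo{\emptyset}$-necessitation, $\Fclo{\emptyset}\chi$ is not available for free, so it is essential that the guard be one of the two modalities already present in the antecedent, and $\Fclo{\FAA}\psi$ is the right choice since it survives to serve also as the minor premise of the final $\mathtt{A}\text{-}\mathtt{MG}$ application. The remainder is routine bookkeeping with $\mathtt{A}\text{-}\mathtt{Tau}$ and $\mathtt{R}\text{-}\mathtt{MP}$.
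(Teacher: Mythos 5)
Your proposal is correct and follows essentially the same route as the paper's proof: both derive $\mathtt{R}\text{-}\mathtt{Mon}$ by applying $\mathtt{R}\text{-}\mathtt{CN}$ to $\phi\rightarrow\psi$ with guard $\Fclo{\FAA}\phi$, feeding the result through $\mathtt{A}\text{-}\mathtt{MG}$ and then $\mathtt{A}\text{-}\mathtt{MC}$, and both derive $\mathtt{A}\text{-}\mathtt{SIA}$ from the tautology $\phi\rightarrow(\psi\rightarrow(\phi\land\psi))$ via $\mathtt{R}\text{-}\mathtt{CN}$ followed by two applications of $\mathtt{A}\text{-}\mathtt{MG}$ (one at $\emptyset$, one at $\FAA$). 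The only difference is cosmetic: you guard the $\mathtt{R}\text{-}\mathtt{CN}$ step for $\mathtt{A}\text{-}\mathtt{SIA}$ with $\Fclo{\FAA}\psi$ where the paper uses $\Fclo{\emptyset}\phi$; since both modalities occur in the antecedent, either choice works.
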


\begin{proof}
~

First, we show that the formula $\mathtt{A}\text{-}\mathtt{SIA}$ is derivable.
Note $\vdash \phi \rightarrow (\psi \rightarrow (\phi \land \psi))$.
By Rule $\mathtt{R}\text{-}\mathtt{CN}$, $\vdash \Fclo{\emptyset} \phi \rightarrow \Fclo{\emptyset} (\phi \rightarrow (\psi \rightarrow (\phi \land \psi)))$.
By Axiom $\mathtt{A}\text{-}\mathtt{MG}$, $\vdash \Fclo{\emptyset} (\phi \rightarrow (\psi \rightarrow (\phi \land \psi))) \rightarrow (\Fclo{\emptyset} \phi \rightarrow \Fclo{\emptyset} (\psi \rightarrow (\phi \land \psi)))$.
Then $\vdash \Fclo{\emptyset} \phi \rightarrow (\Fclo{\emptyset} \phi \rightarrow \Fclo{\emptyset} (\psi \rightarrow (\phi \land \psi)))$.
Then $\vdash \Fclo{\emptyset} \phi \rightarrow \Fclo{\emptyset} (\psi \rightarrow (\phi \land \psi))$.
By Axiom $\mathtt{A}\text{-}\mathtt{MG}$, $\vdash \Fclo{\emptyset} (\psi \rightarrow (\phi \land \psi)) \rightarrow (\Fclo{\FAA} \psi \rightarrow \Fclo{\FAA} (\phi \land \psi))$.
Then $\vdash \Fclo{\emptyset} \phi \rightarrow (\Fclo{\FAA} \psi \rightarrow \Fclo{\FAA} (\phi \land \psi))$.
Then $\vdash (\Fclo{\emptyset} \phi \land \Fclo{\FAA} \psi) \rightarrow \Fclo{\FAA} (\phi \land \psi)$.

Second, we show the rule $\mathtt{R}\text{-}\mathtt{Mon}$ is derivable.
Assume $\vdash \phi \rightarrow \psi$ and $\FAA \subseteq \FBB$.
By Rule $\mathtt{R}\text{-}\mathtt{CN}$, $\vdash \Fclo{\FAA} \phi \rightarrow \Fclo{\emptyset} (\phi \rightarrow \psi)$.
By Axiom $\mathtt{A}\text{-}\mathtt{MG}$, $\vdash \Fclo{\emptyset} (\phi \rightarrow \psi) \rightarrow (\Fclo{\FAA} \phi \rightarrow \Fclo{\FAA} \psi)$.
Then, $\vdash \Fclo{\FAA} \phi \rightarrow (\Fclo{\FAA} \phi \rightarrow \Fclo{\FAA} \psi)$. Then $\vdash \Fclo{\FAA} \phi \rightarrow \Fclo{\FAA} \psi$.
By Axiom $\mathtt{A}\text{-}\mathtt{MC}$, $\vdash \Fclo{\FAA} \psi \rightarrow \Fclo{\FBB} \psi$. Then $\vdash \Fclo{\FAA} \phi \rightarrow \Fclo{\FBB} \psi$.

\end{proof}

\begin{lemma}[Upward derivability of standard formulas]
\label{lemma:Upward derivability of standard formulas}
Let $\gamma \lor (\FBW_{i \in \FNI} \Fclo{\FAA_i} \phi_i \rightarrow \FBV_{j \in \FPI} \Fclo{\FBB_j} \psi_j)$ be a standard formula.

Assume one of the following two conditions is met:
\begin{enumerate}[label=(\alph*),leftmargin=3.33em]

\item 

$\vdash \gamma$;

\item 

there is $i \in \FNI$ and $j \in \FPI$ such that $\FAA_i \subseteq \FBB_{j}$ and $\vdash (\phi_{\FNI_0} \land \phi_{i}) \rightarrow \psi_{j}$.

\end{enumerate}

Then, $\vdash \gamma \lor (\FBW_{i \in \FNI} \Fclo{\FAA_i} \phi_i \rightarrow \FBV_{j \in \FPI} \Fclo{\FBB_j} \psi_j)$.

\end{lemma}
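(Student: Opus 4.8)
The plan is to reduce the whole statement, in both cases, to elementary propositional weakening together with the two derived principles recorded in Lemma~\ref{lemma:??}, namely special independence of agents ($\mathtt{A}\text{-}\mathtt{SIA}$) and the monotonicity rule ($\mathtt{R}\text{-}\mathtt{Mon}$). Since $\gamma$ is a disjunct of the target standard formula, it suffices in every case to derive just one of the two disjuncts. Case (a) is then immediate: from $\vdash \gamma$ we get $\vdash \gamma \lor (\FBW_{i \in \FNI} \Fclo{\FAA_i}\phi_i \rightarrow \FBV_{j \in \FPI}\Fclo{\FBB_j}\psi_j)$ by disjunction introduction. So the real work is concentrated in case (b), where I would instead derive the parenthesized implication $\FBW_{i \in \FNI}\Fclo{\FAA_i}\phi_i \rightarrow \FBV_{j \in \FPI}\Fclo{\FBB_j}\psi_j$ outright and then weaken.

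For case (b), fix witnesses $i^\ast \in \FNI$ and $j^\ast \in \FPI$ with $\FAA_{i^\ast} \subseteq \FBB_{j^\ast}$ and $\vdash (\phi_{\FNI_0} \land \phi_{i^\ast}) \rightarrow \psi_{j^\ast}$. The first key observation is that $\FNI \neq \emptyset$ here, so by the defining condition of standard formulas $\Fclo{\FAA_i}\phi_i = \Fclo{\emptyset}\top$ for some $i \in \FNI$; in particular $\FNI_0 \neq \emptyset$, so $\phi_{\FNI_0}$ is a genuine, nonempty conjunction. This matters because $\mathtt{A}\text{-}\mathtt{Ser}$ is absent from $\FMCL$, so $\Fclo{\emptyset}\phi_{\FNI_0}$ cannot be had for free; it must be extracted from the antecedent. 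Working inside $\FBW_{i \in \FNI}\Fclo{\FAA_i}\phi_i$, I would isolate the conjuncts $\Fclo{\emptyset}\phi_i$ for $i \in \FNI_0$ (where $\FAA_i = \emptyset$) and fuse them into $\Fclo{\emptyset}\phi_{\FNI_0}$ by iterating $\mathtt{A}\text{-}\mathtt{SIA}$ in its $\FAA = \emptyset$ instance, the base of the iteration being supplied precisely by the nonemptiness of $\FNI_0$.

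With $\Fclo{\emptyset}\phi_{\FNI_0}$ in hand, I would apply $\mathtt{A}\text{-}\mathtt{SIA}$ once more, now with $\FAA = \FAA_{i^\ast}$, to the pair $\Fclo{\emptyset}\phi_{\FNI_0}$ and $\Fclo{\FAA_{i^\ast}}\phi_{i^\ast}$ (the latter also a conjunct of the antecedent), obtaining $\Fclo{\FAA_{i^\ast}}(\phi_{\FNI_0} \land \phi_{i^\ast})$. Then, since $\vdash (\phi_{\FNI_0}\land\phi_{i^\ast}) \rightarrow \psi_{j^\ast}$ and $\FAA_{i^\ast} \subseteq \FBB_{j^\ast}$, the rule $\mathtt{R}\text{-}\mathtt{Mon}$ yields $\Fclo{\FAA_{i^\ast}}(\phi_{\FNI_0}\land\phi_{i^\ast}) \rightarrow \Fclo{\FBB_{j^\ast}}\psi_{j^\ast}$. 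Chaining these gives $\FBW_{i \in \FNI}\Fclo{\FAA_i}\phi_i \rightarrow \Fclo{\FBB_{j^\ast}}\psi_{j^\ast}$, and as $\Fclo{\FBB_{j^\ast}}\psi_{j^\ast}$ is one of the disjuncts of $\FBV_{j \in \FPI}\Fclo{\FBB_j}\psi_j$, propositional logic delivers $\FBW_{i}\Fclo{\FAA_i}\phi_i \rightarrow \FBV_{j}\Fclo{\FBB_j}\psi_j$, hence the standard formula after one final weakening.

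I expect the main obstacle to be conceptual rather than computational: recognizing that the derivation must route entirely through $\mathtt{A}\text{-}\mathtt{SIA}$, the empty-coalition fragment of independence that $\FMCL$ retains, rather than through full independence of agents. The bookkeeping around $\phi_{\FNI_0}$ --- ensuring $\FNI_0$ is nonempty so that $\Fclo{\emptyset}\phi_{\FNI_0}$ can be built up from below, given that $\Fclo{\emptyset}\top$ is \emph{not} a theorem of $\FMCL$ --- is the step that could silently go wrong and so deserves explicit care. The remaining propositional manipulations and the single $\mathtt{R}\text{-}\mathtt{Mon}$ application are routine.
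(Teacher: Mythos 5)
Your proposal is correct and follows essentially the same route as the paper's proof: case (a) by disjunction introduction, and case (b) by building $\Fclo{\emptyset}\phi_{\FNI_0}$ from the $\FNI_0$-conjuncts via iterated $\mathtt{A}\text{-}\mathtt{SIA}$, fusing with $\Fclo{\FAA_{i^\ast}}\phi_{i^\ast}$ by one more application of $\mathtt{A}\text{-}\mathtt{SIA}$, and finishing with $\mathtt{R}\text{-}\mathtt{Mon}$ and propositional weakening. Your explicit check that $\FNI_0\neq\emptyset$ (so that $\Fclo{\emptyset}\phi_{\FNI_0}$ need not come from the underivable $\Fclo{\emptyset}\top$) is a point the paper leaves implicit, and is a welcome addition.
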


\begin{proof}
~

Assume (a).
Then, $\vdash \gamma \lor (\FBW_{i \in \FNI} \Fclo{\FAA_i} \phi_i \rightarrow \FBV_{j \in \FPI} \Fclo{\FBB_j} \psi_j)$.

Assume (b).
By the derivable rule $\mathtt{R}\text{-}\mathtt{Mon}$, $\vdash \Fclo{\FAA_i} (\phi_{\FNI_0} \land \phi_{i}) \rightarrow \Fclo{\FBB_j} \psi_{j}$.
By the derivable formula $\mathtt{A}\text{-}\mathtt{SIA}$, $\vdash (\Fclo{\emptyset} \phi_{\FNI_0} \land \Fclo{\FAA_i} \phi_{i}) \rightarrow \Fclo{\FAA_i} (\phi_{\FNI_0} \land \phi_{i})$.
Note $\phi_{\FNI_0} = \FBW \{\phi_i \mid i \in \FNI_0\}$. By repeating using $\mathtt{A}\text{-}\mathtt{SIA}$, $\vdash \FBW_{i \in \FNI_0} \Fclo{\FAA_i} \phi_{i} \rightarrow \Fclo{\emptyset} \phi_{\FNI_0}$.
Then $\vdash (\FBW_{i \in \FNI_0} \Fclo{\FAA_i} \phi_{i} \land \Fclo{\FAA_i} \phi_{i}) \rightarrow \Fclo{\FBB_j} \psi_{j}$.
Clearly, $\vdash \Fclo{\FBB_j} \psi_{j} \rightarrow \FBV_{j \in \FPI} \Fclo{\FBB_j} \psi_j$.
Then $\vdash (\FBW_{i \in \FNI_0} \Fclo{\FAA_i} \phi_{i} \land \Fclo{\FAA_i} \phi_{i}) \rightarrow \FBV_{j \in \FPI} \Fclo{\FBB_j} \psi_j$.
Then $\vdash \FBW_{i \in \FNI} \Fclo{\FAA_i} \phi_i \rightarrow \FBV_{j \in \FPI} \Fclo{\FBB_j} \psi_j$.
Then, $\vdash \gamma \lor (\FBW_{i \in \FNI} \Fclo{\FAA_i} \phi_i \rightarrow \FBV_{j \in \FPI} \Fclo{\FBB_j} \psi_j)$.

\end{proof}

\subsection{Completeness of $\FMCL$ by induction}

\begin{theorem}[Soundness and completeness of $\FMCL$]
The axiomatic system for $\FMCL$ given in Definition \ref{definition:An axiomatic system for MCL} is sound and complete with respect to the set of valid formulas in $\Phi_{\FMCL}$.
\end{theorem}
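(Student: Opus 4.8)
The plan is to establish soundness and completeness separately. Soundness is routine: I would verify that each axiom is $\FMCL$-valid and that $\mathtt{R}\text{-}\mathtt{MP}$ and $\mathtt{R}\text{-}\mathtt{CN}$ preserve $\FMCL$-validity. The only clauses needing care are $\mathtt{A}\text{-}\mathtt{MC}$ and $\mathtt{R}\text{-}\mathtt{CN}$. For $\mathtt{A}\text{-}\mathtt{MC}$, the key observation is that enlarging a coalition refines its outcome sets: if $\FAA \subseteq \FBB$ and $\sigma_\FAA \in \Fav_\FAA(s)$, then by Fact \ref{fact:Characterization of available joint action functions of general concurrent game models} there is $\sigma_\FBB \in \Fav_\FBB(s)$ extending $\sigma_\FAA$, and since every $\FAG$-extension of $\sigma_\FBB$ is an $\FAG$-extension of $\sigma_\FAA$, we get $\Fout_\FBB(s,\sigma_\FBB) \subseteq \Fout_\FAA(s,\sigma_\FAA)$; hence whatever $\sigma_\FAA$ ensures, $\sigma_\FBB$ ensures as well. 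For $\mathtt{R}\text{-}\mathtt{CN}$, the availability of any joint action at $s$ forces $\Fav_\FAG(s) \neq \emptyset$, hence $\emptyset \in \Fav_\emptyset(s)$, and $\Fout_\emptyset(s,\emptyset)$ is exactly the set of successors of $s$, so a valid $\phi$ holds throughout it. Liveness follows from the characterization that available joint actions have nonempty outcomes.

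For completeness I would prove directly that for every $\phi \in \Phi_\FMCL$, $\vDash \phi$ implies $\vdash \phi$, by induction on the modal depth of $\phi$. In the base case the modal depth is $0$, so $\phi$ is a Boolean combination of atoms and $\top$; since the labeling function is unconstrained, validity of $\phi$ coincides with being a propositional tautology, so $\phi$ is derivable by $\mathtt{A}\text{-}\mathtt{Tau}$. For the inductive step, given $\phi$ of modal depth $n \ge 1$, I would first apply the normal form lemma (Lemma \ref{lemma:normal-form}) to obtain $\phi'$ with $\vdash \phi \leftrightarrow \phi'$, the same modal depth, and $\phi' = \chi_0 \land \dots \land \chi_k$ a conjunction of standard formulas. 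By soundness $\vDash \phi \leftrightarrow \phi'$, so $\vDash \phi'$ and hence $\vDash \chi_t$ for each $t$; since $\vdash \phi \leftrightarrow \phi'$, it suffices to derive each $\chi_t$.

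Fix a standard formula $\chi_t = \gamma \lor (\FBW_{i \in \FNI}\Fclo{\FAA_i}\phi_i \rightarrow \FBV_{j \in \FPI}\Fclo{\FBB_j}\psi_j)$ with $\vDash \chi_t$. By the downward validity lemma (Lemma \ref{lemma:Downward validity of standard formulas}), either $\vDash \gamma$, or there are $i \in \FNI$ and $j \in \FPI$ with $\FAA_i \subseteq \FBB_j$ and $\vDash (\phi_{\FNI_0} \land \phi_i) \rightarrow \psi_j$. The crucial point is that both $\gamma$ and $(\phi_{\FNI_0} \land \phi_i) \rightarrow \psi_j$ have modal depth strictly smaller than $\chi_t$: the elementary disjunction $\gamma$ has modal depth $0$, while $\phi_{\FNI_0}$, $\phi_i$, and $\psi_j$ each sit inside one of the modalities of $\chi_t$, so their depths are at most $n-1$. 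The induction hypothesis then converts this validity-reduction-condition into the derivability-reduction-condition, namely $\vdash \gamma$, or $\FAA_i \subseteq \FBB_j$ with $\vdash (\phi_{\FNI_0} \land \phi_i) \rightarrow \psi_j$. Applying the upward derivability lemma (Lemma \ref{lemma:Upward derivability of standard formulas}) yields $\vdash \chi_t$; conjoining over $t$ gives $\vdash \phi'$, and $\vdash \phi \leftrightarrow \phi'$ gives $\vdash \phi$, closing the induction.

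The genuine difficulty of the whole argument is the downward validity lemma, but that is already in hand, so the remaining work is essentially bookkeeping. The one point I would watch carefully is the modal-depth accounting in the inductive step: I need every standard formula $\chi_t$ to have modal depth at least $1$, so that the reduction formulas are strictly simpler and the induction hypothesis applies. This is guaranteed by the definition of standard formulas, since $\FPI$ is nonempty and contains an index $j$ with $\Fclo{\FBB_j}\psi_j = \Fclo{\FAG}\bot$, so at least one modality always occurs; consequently the base case genuinely captures exactly the modal-depth-$0$ formulas and the recursion never stalls.
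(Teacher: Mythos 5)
Your proposal is correct and follows essentially the same route as the paper: induction on modal depth, reduction to standard formulas via the normal form lemma, then the downward validity and upward derivability lemmas to close the inductive step (the paper additionally case-splits on whether a conjunct $\chi_t$ has depth strictly below $n$, which your depth-accounting observation renders unnecessary). Your extra care on the soundness of $\mathtt{A}\text{-}\mathtt{MC}$ and $\mathtt{R}\text{-}\mathtt{CN}$ and on the base case is detail the paper omits, but it does not change the argument.
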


\begin{proof}
~

The soundness is easy to verify, and we skip its proof.

Let $\phi$ be a formula in $\Phi_{\FMCL}$. Assume $\models \phi$. We want to show $\vdash \phi$. We put an induction on the modal depth of $\phi$.

Assume that the modal depth of $\phi$ is $0$. As $\FMCL$ extends the classical propositional logic, $\vdash \phi$.

Assume that the modal depth of $\phi$ is $n$, greater than $0$.
By Lemma \ref{lemma:normal-form}, the normal form lemma, there is $\phi'$ such that (1) $\vdash_\FMCL \phi \leftrightarrow \phi'$, (2) $\phi$ and $\phi'$ have the same modal depth, and (3) $\phi'$ is in the form of $\chi_0 \land \dots \land \chi_k$, where every $\chi_i$ is a standard formula.

By soundness, $\models \phi \leftrightarrow \phi'$. Then $\models \phi'$. Let $i \leq k$. Then $\models \chi_i$. It suffices to show $\vdash \chi_i$.

Assume that the modal degree of $\chi_i$ is less than $n$. By the inductive hypothesis, $\vdash \chi_i$.

Assume that the modal degree of $\chi_i$ is $n$.
By Lemma \ref{lemma:Downward validity of standard formulas}, the downward validity lemma, one of the following two conditions is met:
\begin{enumerate}[label=(\alph*),leftmargin=3.33em]

\item 

$\vDash \gamma$;

\item 

there is $i \in \FNI$ and $j \in \FPI$ such that $\FAA_i \subseteq \FBB_{j}$ and $\vDash (\phi_{\FNI_0} \land \phi_{i}) \rightarrow \psi_{j}$.

\end{enumerate}

By the inductive hypothesis, one of the following two conditions is met:
\begin{enumerate}[label=(\alph*),leftmargin=3.33em]

\item 

$\vdash \gamma$;

\item 

$\vdash (\phi_{\FNI_0} \land \phi_{i}) \rightarrow \psi_{j}$.

\end{enumerate}

By Lemma \ref{lemma:Upward derivability of standard formulas}, the upward derivability lemma, $\vdash \chi_i$.

\end{proof}

\section{Further work}
\label{section:Further work}

Three kinds of work are worth doing in the future.

In general concurrent game models, we drop the assumptions of seriality, independence of agents, and determinism with concurrent game models. However, we may want to keep some of them when constructing logics for strategic reasoning in some special kinds of situations.
The first kind of work is to show the completeness of the logics determined by general concurrent game models with some of them.
As mentioned above, the only difference between the axiomatic system for $\FCL$ given in Definition \ref{definition:Another axiomatic system for CL} and the axiomatic system for $\FMCL$ given in Definition \ref{definition:An axiomatic system for MCL} is that the latter does not have the axioms $\mathtt{A}\text{-}\mathtt{Ser}$, $\mathtt{A}\text{-}\mathtt{IA}$ and $\mathtt{A}\text{-}\mathtt{Det}$, which are intuitively related to the three assumptions.
A conjecture is that for any logic determined by general concurrent game models with some of them, the extension of the axiomatic system for $\FMCL$ with the related axioms is complete with respect to it.

By the downward validity lemma and the upward derivability lemma, we can show that $\FMCL$ is decidable. What is the computational complexity of its satisfiability problem? This is the second kind of work.

The third kind of work is to study a temporal extension of $\FMCL$, which is similar to $\FATL$ as a temporal extension of $\FCL$.

\subsection*{Acknowledgments}

The authors thank Valentin Goranko, Marek Sergot, Thomas \r{A}gotnes, and Emiliano Lorini for their help with this project.
The authors also thank the audience for a logic seminar at Beijing Normal University and a workshop at Southwest University.

\bibliographystyle{alpha}
\bibliography{Strategy-reasoning,Strategy-reasoning-special}

\end{document}